\documentclass[12pt]{article}
\usepackage{graphicx}
\usepackage{setspace}
 \DeclareGraphicsExtensions{.eps, .ps}
\usepackage{soul,color}
\usepackage{amsmath}
\usepackage{amsthm}
\usepackage{amsfonts}
\usepackage{newfloat}
\usepackage{bbm}
\usepackage{natbib}
\usepackage{hyperref}
\usepackage{subcaption}
\usepackage{authblk}

\DeclareFloatingEnvironment[name={Supplementary Figure}]{suppfigure}

\usepackage{setspace}
\setstretch{1.6}

\usepackage[margin=1in]{geometry}
\theoremstyle{plain}
\newtheorem{theorem}{Theorem}

\newtheorem{remark}{Remark}
\newtheorem{corollary}{Corollary}

\newtheorem{lemma}{Lemma}
\newtheorem{proposition}{Proposition}

\newcommand\independent{\protect\mathpalette{\protect\independenT}{\perp}}
\def\independenT#1#2{\mathrel{\rlap{$#1#2$}\mkern2mu{#1#2}}}

\newcommand{\psinum}{\psi_{h}^{\text{num}}}
\newcommand{\psiden}{\psi_{h}^{\text{den}}}

\begin{document}
\begin{singlespace}
\title{Causal Effect Estimation after Propensity Score Trimming with Continuous Treatments}
\author[1]{Zach Branson}
\author[1]{Edward H. Kennedy}
\author[1]{Sivaraman Balakrishnan}
\author[1]{Larry Wasserman}
\affil[1]{Department of Statistics and Data Science, Carnegie Mellon University, USA}
\date{}
\maketitle
\begin{abstract}
  Propensity score trimming, which discards subjects with propensity scores below a threshold, is a common way to address positivity violations that complicate causal effect estimation. However, most works on trimming assume treatment is discrete and models for the outcome regression and propensity score are parametric. This work proposes nonparametric estimators for trimmed average causal effects in the case of continuous treatments based on efficient influence functions. For continuous treatments, an efficient influence function for a trimmed causal effect does not exist, due to a lack of pathwise differentiability induced by trimming and a continuous treatment. Thus, we target a smoothed version of the trimmed causal effect for which an efficient influence function exists. Our resulting estimators exhibit doubly-robust style guarantees, with error involving products or squares of errors for the outcome regression and propensity score, which allows for valid inference even when nonparametric models are used. Our results allow the trimming threshold to be fixed or defined as a quantile of the propensity score, such that confidence intervals incorporate uncertainty involved in threshold estimation. These findings are validated via simulation and an application, thereby showing how to efficiently-but-flexibly estimate trimmed causal effects with continuous treatments.
\end{abstract}

\noindent%
{\it Keywords:} dose-response, doubly robust estimation, efficient influence function, nonparametric estimation, positivity
\end{singlespace}
\vfill

\section{Introduction}

We consider estimating average treatment effects (ATEs) in an observational study with a continuous treatment. Under standard identification assumptions, the ATE can be estimated with some combination of an outcome model and a propensity score model. A ubiquitous assumption is that every subject has a non-zero probability of receiving any treatment conditional on covariates, i.e. ``positivity'' \citep{imbens2004nonparametric,hernan2006estimating,d2021overlap}. Positivity can be less tenable with continuous treatments, because it can be unlikely that every subject has a non-zero conditional density at every treatment value. Furthermore, small propensity scores adversely affect the bias and variance of most ATE estimators \citep{westreich2010invited,khan2010irregular,petersen2012diagnosing}. Thus, one common approach is to discard subjects whose propensity scores are below a threshold when estimating causal effects; this is known as propensity score trimming \citep{frolich2004finite,smith2005does,crump2009dealing}.

Many empirical studies have found that trimming improves the accuracy of causal effect estimators, such that it is common in practice \citep{cole2008constructing,sturmer2010treatment,lee2011weight,yang2016propensity}. However, most works assume treatment is discrete, such that there is little guidance on how to conduct trimming for continuous treatments \citep{wu2024matching}. Furthermore, most works assume that the outcome and propensity score model are parametric. Causal effect estimators based on parametric models are especially sensitive to model misspecification when there is limited positivity (e.g., \citealt[Chapter 14]{imbens2015causal}); and if the propensity score model is misspecified, then the trimmed population will be misspecified. This motivates using flexible models for the outcome and propensity score model, but in turn valid inference can be challenging, because the resulting models typically converge at slower-than-parametric rates.

To address these limitations, we derive estimators based on efficient influence functions (EIFs) for trimmed ATEs in the case of continuous treatments. Our resulting estimators have doubly-robust style guarantees, i.e. their error can be expressed as products or squares of errors for the outcome regression and propensity score. Thus, our estimators attain parametric convergence rates even when the outcome regression and propensity score are estimated at slower rates with flexible estimators. While it is known how to construct ATE estimators for discrete treatments via EIFs \citep{hines2022demystifying,kennedy2022semiparametric}, an EIF for a trimmed ATE with a continuous treatment does not exist, due to a lack of pathwise differentiability induced by the nonsmooth nature of trimming \citep{yang2018asymptotic} and a continuous treatment \citep{kennedy2017non}. Thus, we target a smoothed estimand that is close to the trimmed ATE but has an EIF. Our results allow the amount of trimming (the trimming threshold) to depend on the treatment value, because positivity violations may differ across treatments. We derive EIFs when the trimming threshold is fixed or defined as a quantile of the propensity score, such that confidence intervals incorporate uncertainty involved in threshold estimation.

In Section \ref{s:setup}, we outline notation to define trimmed causal estimands. In Section \ref{s:EIFs}, we derive estimators based on EIFs, either when the trimming threshold is fixed or estimated via a quantile of the propensity score. In Section \ref{s:smoothingParameters}, we describe a data-driven way to select smoothing parameters involved in defining our estimands. We then present simulations in Section \ref{s:simulations} and a real application in Section \ref{s:application} which validate that our estimators can flexibly and accurately estimate trimmed causal effects. We conclude with Section \ref{s:conclusion}.

\subsection{Related Work}

Many works have found empirical evidence that trimming can improve doubly robust estimators (e.g., \citealt{mao2019propensity,li2020comment,zhang2022stable}). However, these studies do not establish theoretical guarantees for these trimmed estimators, and usually consider a typical doubly robust estimator implemented on a trimmed sample that is assumed fixed. However, the proportion of trimmed subjects can be viewed as an estimand whose uncertainty is driven by uncertainty in the propensity score. We derive EIFs for the proportion of trimmed subjects, such that inference naturally incorporates uncertainty in this proportion.

More generally, our work adds to literature that tweaks inverse propensity score weighted estimators to be robust to extreme propensity scores \citep{cao2009improving,chaudhuri2014heavy,rothe2017robust,yang2018asymptotic,ma2020robust,heiler2021valid,sasaki2022estimation,khan2022doubly,liu2023biased,ma2023}. Our work is most similar to \cite{yang2018asymptotic} and \cite{khan2022doubly}, although they both focus on binary treatments. \cite{yang2018asymptotic}, like us, study a doubly robust estimator that smoothes the trimming indicator and quantifies uncertainty in the proportion of trimmed subjects; however, they assume the propensity score is estimated with a well-specified parametric generalized linear model. Meanwhile, we establish conditions for valid inference even when the propensity score and outcome regression are estimated with flexible nonparametric models. Meanwhile, \cite{khan2022doubly} also derive a trimmed doubly robust estimator that yields valid inference even when nonparametric models are used; however, their estimand is the ATE for the estimated trimmed set, instead of for the true trimmed set. In short, the estimated trimmed set is the subjects for whom $\widehat{\pi} \geq \widehat{t}$, instead of for whom $\pi \geq t$. \cite{khan2022doubly} note that deriving an EIF of the ATE for the true trimmed set may be fruitful; that is the approach we take here.

Finally, the aforementioned works assume the treatment is binary. Thus, our work contributes to the literature on dose-response estimation \citep{diaz2013targeted,kennedy2017non,fong2018covariate,bonvini2022fast,huling2023independence,wu2024matching} in the case of positivity violations. In the
Supplementary Material we show how our results simplify when the treatment is discrete, thereby providing a way to nonparametrically estimate trimmed causal effects in
the canonical discrete case.

\section{Setup and Causal Estimands} \label{s:setup}

\subsection{Notation}

For a possibly random function $f$, we denote sample averages with $\mathbb{P}_n\{ f(Z)\} = n^{-1} \sum_{i=1}^n f(Z_i)$ and sample variances with $\text{Var}_n\{ f(Z) \} = (n-1)^{-1} \sum_{i=1}^n [f(Z_i) - \mathbb{P}_n\{ f(Z)\} ]^2$. We write the squared $L_2$-norm as $\|f\|^2 = \int f(z)^2 dP(z)$. We use $\mathbb{I}(B)$ to denote the indicator function for an event $B$. We use $\Phi_{\epsilon}(z)$ and $\phi_{\epsilon}(z)$ to denote the cumulative distribution function (CDF) and probability density function (PDF) of the distribution $N(0, \epsilon^2)$. 

\subsection{Causal Estimands for Continuous Treatments}

Let $X$ denote a vector of covariates, $A$ a continuous treatment, and $Y$ a continuous outcome. Let $Y(a)$ denote the outcome we observe if $A = a$, i.e. the potential outcome. We denote the average treatment effect (ATE) at $A = a$ as $\psi(a) \equiv \mathbb{E}[Y(a)]$. Although $\psi(a)$ is not traditionally considered an ``effect'' because it is not a contrast, for simplicity we write effects in terms of one value $a$. In practice, one would estimate effects for many values.

Three assumptions are sufficient for identifying $\psi(a)$ \citep{gill2001causal}:
\begin{itemize}
  \item[(A1)] \textbf{No Unmeasured Confounding}: $Y(a) \independent A | X$, for all $a$.
  \item[(A2)] \textbf{Consistency}: If $A = a$, then $Y = Y(a)$, for all $a$.
  \item[(A3)] \textbf{Positivity}: There is an $c > 0$ such that $c \leq \pi(a|x)$ for all $a$ and $x$, where $\pi(a|x)$ is the conditional density of treatment $a$ (the propensity score).
\end{itemize}
Under these three assumptions, $\psi(a) = \int \mu(x, a) dP(x)$, where $\mu(x,a) = \mathbb{E}[Y | X = x, A = a]$. This motivates estimating the outcome regression $\mu(x,a)$; e.g., construct $\widehat{\mu}(x,a)$ and average across subjects. If we want to be robust to model misspecification for $\mu(x,a)$, one can develop estimators that utilize the propensity score $\pi(a|x)$. Furthermore, to avoid parametric model specification for $\mu(x,a)$ and $\pi(a|x)$, we can consider estimators based on the efficient influence function (EIF) of their corresponding estimand \citep{van2000asymptotic,tsiatis2006semiparametric,kennedy2022semiparametric}. However, an EIF for $\psi(a)$ does not exist, due to a lack of pathwise differentiability at $A = a$ \citep{diaz2013targeted}. One option is to derive an EIF for an estimand that smooths across $A$. Given a smooth, symmetric kernel $K_h(\cdot)$ with bandwidth $h$, we can define and identify a kernel-smoothed version of $\psi(a)$:
\begin{align}
  \psi_h(a) = \int \int K_h(a_0 - a) \mu(x, a_0) d a_0 dP(x) . \label{eqn:psih}
\end{align}
We discuss an EIF-based estimator for $\psi_h(a)$ in Section \ref{s:EIFs}.

The above identification results assume positivity holds, and estimation can be difficult if propensity scores are small \citep{westreich2010invited,khan2010irregular,petersen2012diagnosing}. Furthermore, for continuous treatments, it's likely that $\pi(a|x)$ will be small for some $a$. This motivates trimming.

\subsection{Trimmed Causal Estimands}

Propensity score trimming targets the ATE among subjects with non-extreme propensity scores. Given a trimming threshold $t$, the trimmed ATE is defined and identified as
\begin{align}
  \psi(a; t) = \mathbb{E}[Y(a) | \pi(a|X) > t] = \frac{\int \mathbb{I}(\pi(a|x) > t) \mu(x,a) dP(x)}{\int \mathbb{I}(\pi(a|x) > t) dP(x)} . \label{eqn:simpleIdentification}
\end{align}
When $A$ is binary, it is common to also trim subjects whose propensity scores are close to one, because then $P(A = 0 | X)$ is close to zero. However, for continuous treatments, we only have to trim subjects whose propensity scores are close to zero at $A = a$.

Equation (\ref{eqn:simpleIdentification}) suggests the following plug-in estimator:
\begin{align}
  \widehat{\psi}(a; t) = \frac{\mathbb{P}_n\{ \mathbb{I}(\widehat{\pi}(a|X) > t) \widehat{\mu}(X,a) \}}{\mathbb{P}_n\{ \mathbb{I}(\widehat{\pi}(a|X) > t) \}}, \label{eqn:plugin}
\end{align}
where $\widehat{\pi}(a|X)$ and $\widehat{\mu}(X,a)$ denote estimators for the propensity score and outcome regression. Intuitively, this estimator would require both $\mu(x,a)$ and $\pi(a|x)$ to be well-specified, thereby motivating flexible models. However, the resulting estimator would typically inherit slow nonparametric convergence rates, which would also adversely affect inference.

Instead, we use efficiency theory to derive EIF-based estimators, such that parametric rates can still be achieved with nonparametric models. An EIF for the trimmed ATE does not exist, due to a lack of differentiability at $A = a$ and from the indicator $\mathbb{I}(\pi(a|x) > t)$. To address the former, we use kernel smoothing with $K_h(\cdot)$; and to address the latter, use a smoothed version of $\mathbb{I}(\pi(a|x) > t)$, denoted $S(\pi(a|x), t)$. One example is $S(\pi(a|x), t) = \Phi_{\epsilon} \left\{ \pi(a|x) - t \right\}$ \citep{yang2018asymptotic}. We notationally suppress dependence on the tuning parameter $\epsilon$ to remain agnostic to the smoothing indicator one may use.

Given $K_h(\cdot)$ and $S(\pi(a|x), t)$, the smoothed trimmed ATE (STATE) is identified as:
\begin{align}
  \psi_{h}(a; t) = \frac{\int \int K_h(a_0 - a) S(\pi(a_0|x), t) \mu(x,a_0) da_0 dP(x)}{\int \int K_h(a_0 - a) S(\pi(a_0|x), t) da_0 dP(x)} . \label{eqn:causalEstimandSmoothed}
\end{align}
The STATE is a weighted average of $\psi_h(a)$ in (\ref{eqn:psih}), where subjects with $\pi(a|x) > t$ are weighed close to one, and subjects with $\pi(a|x) < t$ are weighed close to zero. The amount $\psi_{h}(a; t)$ differs from $\psi(a; t)$ depends on $h$ and $\epsilon$. In Section \ref{s:EIFs}, we derive EIF-based estimators for the STATE when $t$ is fixed or estimated as a quantile of the propensity score; in both cases we assume $h$ and $\epsilon$ are fixed. We discuss how to choose $h$ and $\epsilon$ in Section \ref{s:smoothingParameters}.

Note that the trimmed estimands $\psi(a; t)$ and $\psi_h(a; t)$ differ from their non-trimmed counterparts $\psi(a)$ and $\psi_h(a)$ to the extent that the trimmed population differs from the full population \citep{crump2009dealing,sturmer2010treatment,yang2018asymptotic}. Because trimming changes the target population, interpreting trimmed effects can be challenging. In Sections \ref{s:EIFs}-\ref{s:simulations} we focus on estimation of trimmed effects, and in Section \ref{s:application} we discuss how to interpret trimmed effects. In short, in addition to estimating trimmed effects, our approach allows one to examine how trimming changes the population across treatment values, which aids interpretation.

\section{Efficient Estimators for Trimmed Causal Effects} \label{s:EIFs}

Here we derive EIF-based estimators for the STATE $\psi_{h}(a; t)$ in (\ref{eqn:causalEstimandSmoothed}). Given a kernel $K_h(\cdot)$ and fixed bandwidth $h$, the uncentered EIF of the smoothed ATE $\psi_h(a)$ in (\ref{eqn:psih}) is
  \begin{align}
    \varphi_h(a) = K_h(A - a) \frac{Y - \mu(X,A)}{\pi(A|X)} + \int K_h(a_0 - a) \mu(X,a_0) da_0  \label{eqn:eifPsih}
  \end{align}
\citep{bibaut2017data}. By ``uncentered,'' we mean the EIF is $\varphi_h(a) - \psi_h(a)$, such that $\mathbb{E}[\varphi_h(a)] = \psi_h(a)$. The one-step estimator based on this EIF is:
\begin{align}
  \widehat{\psi}_h(a) = \mathbb{P}_n\{ \widehat{\varphi}_h(a) \} = \mathbb{P}_n \left\{ K_h(A - a) \frac{Y - \widehat{\mu}(X,A)}{\widehat{\pi}(A|X)} + \int K_h(a_0 - a) \widehat{\mu}(X,a_0) da_0 \right\} . \label{eqn:psihHat}
\end{align}
Others have proposed similar estimators that use kernel smoothing, an estimated outcome regression, and inverse propensity score weights \citep{kennedy2017non,kallus2018policy,su2019non,colangelo2020double}. If positivity holds, then, with additional assumptions on the convergence rates of $\widehat{\mu}(a,x)$ and $\widehat{\pi}(a|x)$, one can show $\widehat{\psi}_h(a)$ is $n^{-1/2}$ consistent, asymptotically Normal, and semiparametrically efficient \citep{kennedy2022semiparametric}. However, $\widehat{\psi}_h(a)$ can be unstable when propensity scores are small. To establish EIF-based estimators for the STATE, it's helpful to write $\psi_{h}(a; t) = \psi_{h}^{\text{num}}(a; t)/\psi_{h}^{\text{den}}(a; t)$, where
\begin{align}
  \psi_{h}^{\text{num}}(a; t) &= \int \int K_h(a_0 - a) S(\pi(a_0|x), t) \mu(x,a_0) da_0 dP(x), \text{ and} \label{eqn:funcNum} \\
  \psi_{h}^{\text{den}}(a; t) &= \int \int K_h(a_0 - a) S(\pi(a_0|x), t) da_0 dP(x)  . \label{eqn:funcDenom}
\end{align}
We derive EIFs for $\psi_{h}^{\text{num}}(a; t)$ and $\psi_{h}^{\text{den}}(a; t)$ and corresponding one-step estimators $\widehat{\psi}_{h}^{\text{num}}(a; t)$ and $\widehat{\psi}_{h}^{\text{den}}(a; t)$. Then, we take their ratio.
An asymptotically equivalent approach is to derive the EIF for $\psi_{h}(a; t)$ itself; for completeness we include results from this approach in the Supplementary Material. We focus on the first approach because $\widehat{\psi}_{h}^{\text{den}}(a; t)$ represents the proportion of non-trimmed subjects, which is useful for estimating $t$ in Section \ref{ss:estimatedThreshold}.

\subsection{When the Trimming Threshold is Fixed} \label{ss:fixedThreshold}

A centered EIF $\xi$ of an estimand $\psi$ acts as its functional pathwise derivative and satisfies the von Mises expansion for probability measures $P$ and $\bar{P}$ \citep{newey1994asymptotic,tsiatis2006semiparametric}:
\begin{align}
  \psi(\bar{P}) - \psi(P) = \int \xi(\bar{P}) d(\bar{P} - P)(x) + R_2(\bar{P}, P), \label{eqn:vonMises}
\end{align}
where $R_2(\bar{P}, P)$ is a second-order remainder of a functional Taylor expansion \citep{hines2022demystifying,kennedy2022semiparametric}. In our proofs in the Supplementary Material, we derive a potential EIF $\xi$, verify that the von Mises expansion (\ref{eqn:vonMises}) holds, and establish conditions when $R_2(\bar{P}, P) = o_P(n^{-1/2})$ and thus the estimator is characterized by the distribution of the first term, which is asymptotically Normal with mean zero for the estimators we derive.

Theorem \ref{thm:numDenomEIFs} establishes the EIFs of $\psi_{h}^{\text{num}}(a; t)$ and $\psi_{h}^{\text{den}}(a; t)$ when $t$ is fixed. The EIFs depend on the derivative of $S(\pi(a|x), t)$ with respect to $\pi(a|x)$, denoted $\partial S(\pi(a|x),t)/\partial \pi$. For example, when $S(\pi(a|x),t) = \Phi_{\epsilon}\{\pi(a|x) - t\}$, $\partial S(\pi(a|x),t)/\partial \pi = \phi_{\epsilon}\{\pi(a|x) - t\}$.

\begin{theorem}
\label{thm:numDenomEIFs}
  If $t$ is fixed, the uncentered EIFs for $\psi_{h}^{\text{num}}(a; t)$ and $\psi_{h}^{\text{den}}(a; t)$ are, respectively:
  \begin{align*}
    \varphi_{h}^{\text{num}}(a; t) &= K_h(A - a) \frac{\{Y - \mu(X,A)\} S(\pi(A|X), t)}{\pi(A|X)} + \int K_h(a_0 - a) S(\pi(a_0|X),t) \mu(X,a_0) da_0 \\ &+ K_h(A - a) \mu(X,A) \frac{\partial S(\pi(A | X), t) }{\partial \pi} - \int K_h(a_0 - a) \mu(X,a_0) \frac{\partial S(\pi(a_0 | X), t) }{\partial \pi} \pi(a_0|X) da_0 \\
  \varphi_{h}^{\text{den}}(a; t) &= \int K_h(a_0 - a) S(\pi(a_0 | X), t) da_0 \\ &+ K_h(A - a) \frac{\partial S(\pi(A | X), t) }{\partial \pi} - \int K_h(a_0 - a) \frac{\partial S(\pi(a_0 | X), t) }{\partial \pi} \pi(a_0|X) da_0  .
  \end{align*}
\end{theorem}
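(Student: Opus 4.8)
The plan is to exploit that, aside from the identification conditions (A1)--(A3), the model places no restriction on the law of the observed data $Z=(X,A,Y)$, so the model is locally nonparametric and it suffices to produce \emph{one} influence function for each of $\psi_{h}^{\text{num}}(a;t)$ and $\psi_{h}^{\text{den}}(a;t)$; it is then automatically the efficient one. I would obtain it by differentiating each functional along a generic smooth one-dimensional submodel $\{P_s\}$ through $P$ with score $s(Z)$ and reading off the Riesz representer of the derivative. (Equivalently one can use the point-mass contamination heuristic of \citet{hines2022demystifying} to guess the answer and then verify the von Mises expansion (\ref{eqn:vonMises}); the two routes give the same formulas.) Throughout, $h$ and $\epsilon$ are held fixed, so $K_h$, $S(\cdot,t)$ and $\partial S(\cdot,t)/\partial\pi$ are fixed known maps and contribute no extra terms.

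The computation rests on three elementary derivative identities at $s=0$: (i) $\tfrac{d}{ds}\int f(x)\,dP_s(x)=\mathbb{E}[f(X)s(Z)]$ for the $X$-marginal; (ii) $\tfrac{d}{ds}\pi_s(a_0\mid x)=\pi(a_0\mid x)\{\mathbb{E}[s\mid X=x,A=a_0]-\mathbb{E}[s\mid X=x]\}$ for the propensity score; and (iii) $\tfrac{d}{ds}\mu_s(x,a_0)=\mathbb{E}[\{Y-\mu(x,a_0)\}s\mid X=x,A=a_0]$ for the outcome regression. First I would differentiate $\psi_{h}^{\text{den}}(a;t)=\mathbb{E}_X\!\big[\int K_h(a_0-a)S(\pi(a_0\mid X),t)\,da_0\big]$: identity (i) applied to the outer marginal gives the plug-in term $\int K_h(a_0-a)S(\pi(a_0\mid X),t)\,da_0$, and identity (ii) together with the chain rule gives $\int\!\int K_h(a_0-a)\tfrac{\partial S(\pi(a_0\mid x),t)}{\partial\pi}\pi(a_0\mid x)\{\mathbb{E}[s\mid X=x,A=a_0]-\mathbb{E}[s\mid X=x]\}\,da_0\,dP(x)$; using $\pi(a_0\mid x)\,da_0\,dP(x)=dP(x,a_0)$ and iterated expectations, this collapses to $\mathbb{E}\big[\big(K_h(A-a)\tfrac{\partial S(\pi(A\mid X),t)}{\partial\pi}-\int K_h(a_0-a)\tfrac{\partial S(\pi(a_0\mid X),t)}{\partial\pi}\pi(a_0\mid X)\,da_0\big)s(Z)\big]$. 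Collecting the coefficient of $s(Z)$ and subtracting its mean gives exactly $\varphi_{h}^{\text{den}}(a;t)$. The numerator $\psi_{h}^{\text{num}}(a;t)=\mathbb{E}_X\!\big[\int K_h(a_0-a)S(\pi(a_0\mid X),t)\mu(X,a_0)\,da_0\big]$ is handled identically, with one additional piece: identity (iii), again after turning the $a_0$-integral into an expectation over $(X,A)$, contributes the inverse-propensity term $K_h(A-a)\{Y-\mu(X,A)\}S(\pi(A\mid X),t)/\pi(A\mid X)$, while identity (i) contributes the plug-in term and identity (ii) contributes the last two terms as before but carrying the extra factor $\mu$. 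Summing yields $\varphi_{h}^{\text{num}}(a;t)$.

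The main obstacle is the bookkeeping for the propensity-score contribution: the pathwise derivative of the point evaluation $\pi(a_0\mid x)$ is not itself an $L_2$ function, so one must ensure it only enters integrated against the bounded smooth factors $K_h(a_0-a)$ and $\partial S(\pi(a_0\mid x),t)/\partial\pi$ (the latter bounded because $\epsilon$ is fixed, e.g.\ $\partial S/\partial\pi=\phi_\epsilon$), and then apply Fubini and iterated expectations carefully so that the $a_0$-integral turns into a genuine expectation over the observed-data law; this is what makes all four terms of each EIF well-defined elements of $L_2(P)$. Granting that $S(\cdot,t)$ is continuously differentiable in its first argument and $\pi$ is bounded, the chain-rule and interchange-of-integration steps are routine, and the remaining verification --- checking (\ref{eqn:vonMises}) with a second-order remainder, as carried out in the Supplementary Material --- follows by expanding $\psi_{h}^{\text{num}}(a;t;\bar P)-\psi_{h}^{\text{num}}(a;t;P)-\int\varphi_{h}^{\text{num}}(a;t;\bar P)\,d(\bar P-P)$ and using a first-order Taylor expansion of $S$ in its first argument to exhibit the remainder as an integral of products of $\bar\mu-\mu$ and $\bar\pi-\pi$; the same argument specializes to $\psi_{h}^{\text{den}}(a;t)$ by dropping the $\mu$ factor and the $Y-\mu$ term.
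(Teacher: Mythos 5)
Your derivation is correct and delivers exactly the stated formulas, but it travels a genuinely different road from the paper. You differentiate $\psi_{h}^{\text{num}}(a;t)$ and $\psi_{h}^{\text{den}}(a;t)$ along smooth one-dimensional submodels with score $s(Z)$, use the standard score identities for the $X$-marginal, the conditional density $\pi(a_0\mid x)$, and the regression $\mu(x,a_0)$, and read off the Riesz representer; the bookkeeping you describe (turning $\pi(a_0\mid x)\,da_0\,dP(x)$ into $dP(x,a_0)$, and multiplying and dividing by $\pi$ to produce the inverse-propensity term) checks out, and the observation that the model is locally nonparametric so any influence function is the efficient one is the right justification for the word ``efficient.'' The paper instead deliberately avoids the Gateaux/submodel route: it uses the discretization trick of Kennedy (2022), writes elementary influence functions for $\mu(x,a)$, $S(\pi(a\mid x),t)$ and $p(x)$, applies the product and chain rules, converts sums back to integrals, and then puts all of the rigor into verifying the von Mises expansion, with the second-order remainders $R_2^{\text{num}}$ and $R_2^{\text{den}}$ computed explicitly in the Supplementary Material. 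Your route buys a self-contained derivation whose every step is already a theorem (no heuristic to be validated afterwards) and makes the efficiency claim transparent via the tangent space; the paper's route buys lighter algebra and defers correctness entirely to the remainder analysis, which both approaches need in the end. One small imprecision in your closing sketch: the remainder is obtained by Taylor-expanding $S$ in $\pi$ to \emph{second} order, so for the denominator it is an integral of $\tfrac{1}{2}\partial^2 S(\bar\pi,t)\,(\pi-\bar\pi)^2$ (a squared propensity error, not a $\mu$-by-$\pi$ product), and for the numerator it is the sum of such a squared term and a genuine product term involving $\{S(\bar\pi,t)/\bar\pi-\partial S(\bar\pi,t)/\partial\bar\pi\}(\pi-\bar\pi)(\mu-\bar\mu)$; this is exactly what the paper's Lemmas in the supplement establish, so your sketch is compatible but understates that step.
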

\noindent
The first two terms of $\varphi^{\text{num}}_h(a; t)$ are analogous to those that define $\varphi_h(a)$ in (\ref{eqn:eifPsih}). Meanwhile, the last two terms reflect the fact that $S(\pi(a|x), t)$ depends $\pi(a|x)$. These terms also appear in $\varphi^{\text{den}}_h(a; t)$, which represents the proportion of non-trimmed subjects. Thus, the resulting estimators incorporate uncertainty in the estimated trimmed population. Let $\widehat{\varphi}^{\text{num}}_{h}(a; t)$ equal $\varphi^{\text{num}}_{h}(a; t)$, but with $\mu(x,a)$ and $\pi(a|x)$ replaced by estimators $\widehat{\mu}(x,a)$ and $\widehat{\pi}(a|x)$. Define $\widehat{\varphi}^{\text{den}}_{h}(a; t)$ analogously. One-step estimators based on Theorem \ref{thm:numDenomEIFs} are:
\begin{align}
  \widehat{\psi}_{h}^{\text{num}}(a; t) = \mathbb{P}_n \left\{ \widehat{\varphi}^{\text{num}}_{h}(a; t) \right\}, \hspace{0.1in} \text{and} \hspace{0.1in}
  \widehat{\psi}_{h}^{\text{den}}(a; t) = \mathbb{P}_n \left\{ \widehat{\varphi}^{\text{den}}_{h}(a; t) \right\}  , \label{eqn:eifNumDenEst}
\end{align}
and an estimator for the estimand of interest, $\psi_{h}(a; t)$, is
\begin{align}
  \widehat{\psi}_{h}(a; t) = \widehat{\psi}_{h}^{\text{num}}(a; t)/\widehat{\psi}_{h}^{\text{den}}(a; t)  . \label{eqn:eifRatioTwoEsts}
\end{align} 

To establish the asymptotic distribution of $\widehat{\psi}_{h}(a; t)$, we must make further assumptions about $\widehat{\mu}(x,a)$ and $\widehat{\pi}(a|x)$. One could employ Donsker-type conditions, but this can restrict which methods are used for $\widehat{\mu}(x,a)$ and $\widehat{\pi}(a|x)$ \citep{zheng2010asymptotic}. Instead, we assume the sample used to estimate $\mu(x,a)$ and $\pi(a|x)$ is independent of the sample used to compute $\widehat{\psi}_h(a; t)$, e.g. via sample-splitting. The following result establishes rate conditions on $\widehat{\mu}(x,a)$ and $\widehat{\pi}(a|x)$ that are sufficient for $\widehat{\psi}_{h}(a; t)$ to be asymptotically Normal.
\begin{proposition}
\label{prop:inferenceFixedT}
  If $t$ is fixed, $\mu(x,a)$ and $\pi(a|x)$ are estimated on an independent sample of size $n$, $\big(\|\widehat{\pi}(a|X) - \pi(a|X)\| + \|\widehat{\mu}(X,a) - \mu(X,a)\| \big) \|\widehat{\pi}(a|X) - \pi(a|X)\| = o_P(n^{-1/2})$, and $\| \widehat{\varphi}_h^{\text{num}}(a; t) - \varphi_h^{\text{num}}(a; t) \| = o_p(1)$ and $\| \widehat{\varphi}_h^{\text{den}}(a; t) - \varphi_h^{\text{den}}(a; t) \| = o_p(1)$,  then
  \begin{align}
    \sqrt{n} \left( \widehat{\psi}_{h}(a; t) - \psi_{h}(a; t) \right) \rightarrow N\left( 0, \text{Var} \left\{ \frac{\varphi^{\text{num}}_{h}(a; t) - \psi_{h}(a; t) \varphi^{\text{den}}_{h}(a; t)}{\psiden(a; t)} \right\} \right),
  \end{align}
  where $\varphi^{\text{num}}_{h}(a; t)$ and $\varphi^{\text{den}}_{h}(a; t)$ are the EIFs of $\psinum(a; t)$ and $\psiden(a; t)$ in Theorem \ref{thm:numDenomEIFs}.
\end{proposition}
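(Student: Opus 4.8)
The plan is to treat $\widehat{\psi}_{h}^{\text{num}}(a; t)$ and $\widehat{\psi}_{h}^{\text{den}}(a; t)$ as two one-step estimators, show each is asymptotically linear with influence function equal to the corresponding centered EIF from Theorem \ref{thm:numDenomEIFs}, establish their joint asymptotic normality, and then push this through the ratio map $(u,v)\mapsto u/v$ with the delta method. Throughout I condition on the independent sample used to form $\widehat{\mu}$ and $\widehat{\pi}$, so that $\widehat{\varphi}_h^{\text{num}}(a; t)$ and $\widehat{\varphi}_h^{\text{den}}(a; t)$ are fixed functions when analyzing the remaining $n$ observations; the leading term of each expansion will be an average of i.i.d. terms that does not involve the nuisance estimates, so a single application of the central limit theorem delivers the result once the error terms are shown to be $o_P(n^{-1/2})$.

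\emph{Asymptotic linearity of the numerator and denominator estimators.} For $\star\in\{\mathrm{num},\mathrm{den}\}$, the standard one-step decomposition together with the von Mises expansion (\ref{eqn:vonMises}) for $\psi_h^{\star}(a;t)$ gives
\[
\widehat{\psi}_{h}^{\star}(a; t) - \psi_{h}^{\star}(a; t) = \mathbb{P}_n\{\varphi_h^{\star}(a;t) - \psi_h^{\star}(a;t)\} + (\mathbb{P}_n - P)\{\widehat{\varphi}_h^{\star}(a;t) - \varphi_h^{\star}(a;t)\} + R_2^{\star},
\]
where $R_2^{\star}$ is the second-order remainder of (\ref{eqn:vonMises}), whose explicit form is obtained in the Supplementary Material when deriving the EIFs in Theorem \ref{thm:numDenomEIFs}. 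Because we condition on the nuisance-estimation sample, the middle ``empirical process'' term has conditional mean zero and conditional variance at most $n^{-1}\|\widehat{\varphi}_h^{\star}(a;t) - \varphi_h^{\star}(a;t)\|^2 = o_P(n^{-1})$ by the assumed consistency of $\widehat{\varphi}_h^{\star}(a;t)$, so Chebyshev's inequality makes this term $o_P(n^{-1/2})$.

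\emph{Negligibility of the remainder.} From the form of $R_2^{\star}$ derived in the Supplementary Material, every summand is a product of $\widehat{\pi}(a|x) - \pi(a|x)$ with either $\widehat{\pi}(a|x) - \pi(a|x)$ or $\widehat{\mu}(x,a) - \mu(x,a)$, multiplied by factors built from $K_h$ and the first and second derivatives of $S(\cdot,t)$ with respect to $\pi$; for fixed $h$ and a fixed smoothing parameter these factors are uniformly bounded (e.g., when $S(\pi,t) = \Phi_{\epsilon}\{\pi - t\}$ the relevant derivatives are bounded by constants depending only on $\epsilon$). Cauchy--Schwarz then yields $|R_2^{\star}| \le C\big(\|\widehat{\pi}(a|X) - \pi(a|X)\| + \|\widehat{\mu}(X,a) - \mu(X,a)\|\big)\|\widehat{\pi}(a|X) - \pi(a|X)\| = o_P(n^{-1/2})$ by hypothesis. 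Combining with the previous step, $\widehat{\psi}_{h}^{\star}(a; t) = \psi_{h}^{\star}(a; t) + \mathbb{P}_n\{\varphi_h^{\star}(a;t) - \psi_h^{\star}(a;t)\} + o_P(n^{-1/2})$ for $\star\in\{\mathrm{num},\mathrm{den}\}$.

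\emph{Joint normality, the delta method, and the main obstacle.} The last display shows that $\sqrt{n}\big(\widehat{\psi}_{h}^{\text{num}}(a; t) - \psi_{h}^{\text{num}}(a; t),\ \widehat{\psi}_{h}^{\text{den}}(a; t) - \psi_{h}^{\text{den}}(a; t)\big)$ equals, up to $o_P(1)$, an average of i.i.d. mean-zero vectors, so the multivariate central limit theorem gives convergence to a bivariate normal with covariance $\text{Var}\{(\varphi^{\text{num}}_{h}(a; t) - \psi_h^{\text{num}}(a;t),\ \varphi^{\text{den}}_{h}(a; t) - \psi_h^{\text{den}}(a;t))\}$, finite under mild boundedness of the EIFs. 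Applying the delta method to $g(u,v) = u/v$ at $(\psinum(a; t),\psiden(a; t))$, with gradient $(1/\psiden(a;t),\,-\psi_h(a;t)/\psiden(a;t))$ since $\psi_h(a;t) = \psinum(a;t)/\psiden(a;t)$, and using the identity $\psinum(a;t) - \psi_h(a;t)\,\psiden(a;t) = 0$ to cancel centering constants, yields exactly $\sqrt{n}(\widehat{\psi}_{h}(a; t) - \psi_{h}(a; t)) \to N(0,\text{Var}\{[\varphi^{\text{num}}_{h}(a; t) - \psi_{h}(a; t)\varphi^{\text{den}}_{h}(a; t)]/\psiden(a; t)\})$; this also uses $\psiden(a;t)$ bounded away from zero, which holds automatically for $S(\pi,t) = \Phi_{\epsilon}\{\pi - t\}$ and more generally under a mild lower bound on $S$. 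The crux of the argument is the remainder analysis: one must identify the exact second-order remainder for both $\psi_h^{\text{num}}(a;t)$ and $\psi_h^{\text{den}}(a;t)$ and verify that, after linearizing $S(\widehat{\pi}(a_0|x),t)$ around $S(\pi(a_0|x),t)$, the only surviving terms carry $\|\widehat{\pi}(a|X) - \pi(a|X)\|$ as a factor, the Taylor remainder of that linearization being governed by $\partial^2 S/\partial\pi^2$ and contributing the $\|\widehat{\pi}(a|X) - \pi(a|X)\|^2$ term; this is what makes the stated product-rate condition, with $\|\widehat{\pi}-\pi\|$ as the common factor rather than full double robustness in $\widehat{\mu}$ and $\widehat{\pi}$, the natural sufficient condition, while the remaining steps are routine given Theorem \ref{thm:numDenomEIFs} and sample splitting.
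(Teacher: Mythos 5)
Your proposal is correct and follows essentially the same route as the paper: the same three-term decomposition into a CLT term, an empirical-process term killed by sample splitting and consistency of the estimated EIFs (your conditional Chebyshev argument is exactly the content of the lemma the paper cites), and a second-order remainder bounded via Cauchy--Schwarz under the stated product-rate condition, followed by the delta method for the ratio. Your explicit treatment of joint normality, the boundedness of the $S$-derivative factors, and $\psiden(a;t)>0$ only makes precise what the paper leaves implicit.
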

The independent-sample assumption and consistency assumption for $\widehat{\varphi}_h^{\text{num}}(a; t)$ and $\widehat{\varphi}_h^{\text{den}}(a; t)$ ensures that the empirical process terms of the von Mises expansions are $o_P(n^{-1/2})$. The rate condition on $\widehat{\mu}(x,a)$ and $\widehat{\pi}(a|x)$ ensures the remainder terms are also $o_P(n^{-1/2})$. This condition allows for flexible models that converge at slower-than-parametric rates, e.g. random forests and neural networks \citep{chernozhukov2018double,farrell2021deep}. This condition also suggests that the rate for $\widehat{\pi}(a|x)$ is more crucial than that for $\widehat{\mu}(x,a)$: We require $\|\widehat{\pi}(a|x) - \pi(a|x)\| = o_P(n^{-1/4})$, whereas $\widehat{\mu}(x,a)$ can converge more slowly if $\widehat{\pi}(a|x)$ converges faster than $o_P(n^{-1/4})$. This is intuitive: If the set of trimmed subjects cannot be well-estimated, then we cannot accurately estimate effects among those subjects.

The independent-sample assumption may seem that we need to use half our data to model $\mu(x,a)$ and $\pi(a|x)$. In practice, we recommend cross-fitting to utilize the whole sample \citep{schick1986asymptotically,robins2008higher}. Specifically, $\widehat{\mu}(x,a)$ and $\widehat{\pi}(a|x)$ are constructed on one half of the sample, $\widehat{\varphi}_h^{\text{num}}(a; t)$ and $\widehat{\varphi}_h^{\text{den}}(a; t)$ are constructed on the other half, and the process is repeated with the halves swapped to obtain $\widehat{\varphi}_h^{\text{num}}(a; t)$ and $\widehat{\varphi}_h^{\text{den}}(a; t)$ for the whole sample. More than two folds are also possible. For a review of cross-fitting for doubly robust estimators, see \cite{chernozhukov2018double} and \cite{kennedy2022semiparametric}.

Note that $\psiden(a; t)$ and $\psi_{h}(a; t)$ can be consistently estimated by $\widehat{\psi}^{\text{den}}_{h}(a; t)$ in (\ref{eqn:eifNumDenEst}) and $\widehat{\psi}_{h}(a; t)$ in (\ref{eqn:eifRatioTwoEsts}). Then, Proposition \ref{prop:inferenceFixedT} suggests the following $(1-\alpha)$-level confidence interval:
\begin{align}
  \widehat{\psi}_{h}(a; t) \pm z_{\alpha/2} \sqrt{ \frac{\text{Var}_n \left\{ \frac{\widehat{\varphi}^{\text{num}}_{h}(a; t) - \widehat{\psi}_{h}(a; t) \widehat{\varphi}^{\text{den}}_{h}(a; t)}{\widehat{\psi}^{\text{den}}_{h}(a; t)} \right\}}{n} }, \label{eqn:fixedTCI}
\end{align}
where $z_{\alpha/2}$ is the $\alpha/2$ quantile of a standard Normal distribution.

\subsection{When the Trimming Threshold is an Unknown Parameter} \label{ss:estimatedThreshold}

It may be difficult to specify a trimming threshold \textit{a priori}, and thus it is common to frame $t$ as a parameter that must be estimated. One common choice is the $\gamma$-quantile of the propensity score \citep{cole2008constructing,sturmer2010treatment,lee2011weight}: $t_0 = F^{-1}_{a}(\gamma) \label{eqn:empQuant}$, where $F^{-1}_{a}(\cdot)$ is the inverse CDF of $\pi(a|X)$. Here, $t_0$ denotes the true threshold, which must be estimated. The causal effect estimand is then $\psi_h(a; t_0)$.

An EIF for this choice of $t_0$ does not exist, because it corresponds to the quantile of an unknown density. To admit an EIF, we instead define $t_0$ as $t_0 = F_{a,h}^{-1}(\gamma)$, where $F_{a,h}(t) = 1 - \int \int K_h(a_0 - a) S(\pi(a_0|x), t) da_0 dP(x)$ is the smoothed CDF of $\pi(a|X)$. In this case, $t_0$ is defined such that $\psi_{h}^{\text{den}}(a; t_0) = 1-\gamma$. Thus, a natural estimator for $t_0$ is:
\begin{align}
  \widehat{t} = \inf \{t:\widehat{\psi}_{h}^{\text{den}}(a; t) \leq 1-\gamma\}, \label{eqn:estimatedT}
\end{align}
where $\widehat{\psi}_{h}^{\text{den}}(a; t)$ is defined in (\ref{eqn:eifNumDenEst}). In practice, $\widehat{t}$ can be computed via a line search. Then, a natural estimator for $\psi_{h}(a; t_0)$ is $\widehat{\psi}_{h}(a; \widehat{t}) = \widehat{\psi}_{h}^{\text{num}}(a; \widehat{t})/(1-\gamma)$, where $\widehat{\psi}_{h}^{\text{num}}(a; t)$ is defined in (\ref{eqn:eifNumDenEst}). To study the behavior of $\widehat{\psi}_{h}(a; \widehat{t})$, we can decompose its error by writing
\begin{align*}
  \widehat{\psi}_{h}(a; \widehat{t}) - \psi_{h}(a; t_0) &= (1-\gamma)^{-1} \left( \widehat{\psi}_{h}^{\text{num}}(a; \widehat{t}) - \psi_{h}^{\text{num}}(a; \widehat{t}) + \psi_{h}^{\text{num}}(a; \widehat{t}) - \psi_{h}^{\text{num}}(a; t_0) \right)  .
\end{align*}
The first term represents the error in estimating $\psi_{h}^{\text{num}}(a; t)$, and the second term represents the error in estimating $t_0$, which depends on the error in estimating $\psiden(a; t)$. Thus, the error of $\widehat{\psi}_{h}(a; \widehat{t})$ can be decomposed into the error in estimating $\psinum(a; t)$ and $\psiden(a; t)$ for any fixed $t$. This is stated formally in Theorem \ref{thm:eifErrorEstT}. For the following theorem, we write the uncentered EIF $\varphi^{\text{num}}_{h}(a; t)$ as $\varphi_{h}^{\text{num}}(a; t, \eta)$ for $\eta = (\pi, \mu)$ to emphasize its dependency on $\pi(a|x)$ and $\mu(x,a)$. For example, we can write $\widehat{\psi}_{h}^{\text{num}}(a; t) = \mathbb{P}_n\{ \varphi_{h}^{\text{num}}(a; t, \widehat{\eta}) \}$.

\begin{theorem}
\label{thm:eifErrorEstT}
  Let $t_0$ denote the true threshold for $\gamma \in (0,1)$. Let $\eta_0 = (\pi_0, \mu_0)$ denote the true propensity score and true outcome regression. Finally, let $\varphi_{h}^{\text{num}}(a; t, \eta)$ and $\varphi_{h}^{\text{den}}(a; t, \eta)$ denote the uncentered EIFs of $\psi_{h}^{\text{num}}(a; t)$ and $\psi_{h}^{\text{den}}(a; t)$, defined in Theorem \ref{thm:numDenomEIFs}. Assume that
  \begin{enumerate}
    \item The estimators $\widehat{\eta} = (\widehat{\pi}, \widehat{\mu})$ are estimated on an independent sample of size $n$.
    \item The function classes $\{\varphi_{h}^{\text{num}}(a; t, \eta): t \in \mathbb{R}^+\}$ and $\{\varphi_{h}^{\text{den}}(a; t, \eta): t \in \mathbb{R}^+\}$ are Donsker in $t$ for any fixed $\eta$.
    \item $|\widehat{t} - t_0| = o_P(1)$, $\|\widehat{\eta} - \eta_0\| = o_P(1)$, and $\| \widehat{\varphi}_h^{\text{num}}(a; t) - \varphi_h^{\text{num}}(a; t) \| = o_p(1)$ and $\| \widehat{\varphi}_h^{\text{den}}(a; t) - \varphi_h^{\text{den}}(a; t) \| = o_p(1)$ for any fixed $t$.
    \item The maps $t \mapsto \psi^{\text{num}}_{h}(a; t, \eta)$ and $t \mapsto \psi^{\text{den}}_{h}(a; t, \eta)$ are differentiable at $t_0$ uniformly in $\eta$, where $\frac{\partial}{\partial t} \psi^{\text{num}}_{h}(a; t_0, \widehat{\eta}) \xrightarrow{p} \frac{\partial}{\partial t} \psi^{\text{num}}_{h}(a; t_0, \eta_0)$ and $\frac{\partial}{\partial t} \psi^{\text{den}}_{h}(a; t_0, \widehat{\eta}) \xrightarrow{p} \frac{\partial}{\partial t} \psi^{\text{den}}_{h}(a; t_0, \eta_0)$. Furthermore, $\big| \frac{\partial}{\partial t} \psi^{\text{num}}_{h}(a; t_0, \eta_0) \big|$ and $|1/\frac{\partial}{\partial t} \psi^{\text{den}}_{h}(a; t_0, \eta_0)|$ are bounded.
  \end{enumerate}
  Then,
   \begin{align*}
  (1-\gamma) \{\widehat{\psi}_{h}(a; \widehat{t}) - \psi_{h}(a; t_0)\} &= (\mathbb{P}_n - \mathbb{P})\{ \varphi^{\text{num}}_{h}(a; t_0, \eta_0) \} - \frac{\partial \psi^{\text{num}}_{h}(a; t_0, \eta_0)/\partial t}{\partial \psi^{\text{den}}_{h}(a; t_0, \eta_0)/\partial t}  (\mathbb{P}_n - \mathbb{P})\{ \varphi^{\text{den}}_{h}(a; t_0, \eta_0) \} \\ &+ O_P(R^{\text{num}}_2) + O_P (R_2^{\text{den}})  + o_P(n^{-1/2}),
 \end{align*}
  where $R^{\text{num}}_2 = \mathbb{P}\{ \varphi_{h}^{\text{num}}(a; t_0, \widehat{\eta}) - \varphi_{h}^{\text{num}}(a; t_0, \eta_0) \}$ and $R^{\text{den}}_2 = \mathbb{P}\{ \varphi_{h}^{\text{den}}(a; t_0, \widehat{\eta}) - \varphi_{h}^{\text{den}}(a; t_0, \eta_0) \}$.
\end{theorem}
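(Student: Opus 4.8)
The plan is to start from the decomposition displayed just before the theorem, which rests on the two identities $\widehat{\psi}_{h}(a;\widehat{t}) = \widehat{\psi}_{h}^{\text{num}}(a;\widehat{t})/(1-\gamma)$ and $\psi_{h}(a;t_0) = \psi_{h}^{\text{num}}(a;t_0)/(1-\gamma)$, the latter holding precisely because $t_0$ is defined by $\psi_{h}^{\text{den}}(a;t_0) = 1-\gamma$. Thus the ratio structure disappears and
$(1-\gamma)\{\widehat{\psi}_{h}(a;\widehat{t}) - \psi_{h}(a;t_0)\} = \{\widehat{\psi}_{h}^{\text{num}}(a;\widehat{t}) - \psi_{h}^{\text{num}}(a;\widehat{t})\} + \{\psi_{h}^{\text{num}}(a;\widehat{t}) - \psi_{h}^{\text{num}}(a;t_0)\}$. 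I would analyze the first bracket as the error of the numerator one-step estimator evaluated at the random threshold $\widehat{t}$, and the second bracket as the bias incurred by plugging $\widehat{t}$ for $t_0$ into the smooth functional $t\mapsto\psi_{h}^{\text{num}}(a;t)$, which I then expand via a delta-method argument that brings in a Bahadur-type expansion of $\widehat{t}$.

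For the first bracket, I would write $\widehat{\psi}_{h}^{\text{num}}(a;\widehat{t}) - \psi_{h}^{\text{num}}(a;\widehat{t}) = (\mathbb{P}_n - \mathbb{P})\{\varphi_{h}^{\text{num}}(a;\widehat{t},\widehat{\eta})\} + \big[\mathbb{P}\{\varphi_{h}^{\text{num}}(a;\widehat{t},\widehat{\eta})\} - \psi_{h}^{\text{num}}(a;\widehat{t})\big]$. The drift term is exactly the von Mises remainder $R_2^{\text{num}}(\widehat{t}) = \mathbb{P}\{\varphi_{h}^{\text{num}}(a;\widehat{t},\widehat{\eta}) - \varphi_{h}^{\text{num}}(a;\widehat{t},\eta_0)\}$ obtained from the expansion (\ref{eqn:vonMises}) used to prove Theorem \ref{thm:numDenomEIFs}; since that remainder depends on $t$ only through $S$ and $\partial S/\partial\pi$ in a smooth, bounded way, I would argue that $|\widehat{t}-t_0| = o_P(1)$ gives $R_2^{\text{num}}(\widehat{t}) = O_P(R_2^{\text{num}})$. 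For the empirical-process term I would peel off $(\mathbb{P}_n - \mathbb{P})\{\varphi_{h}^{\text{num}}(a;t_0,\eta_0)\}$ and show the remainder $(\mathbb{P}_n - \mathbb{P})\{\varphi_{h}^{\text{num}}(a;\widehat{t},\widehat{\eta}) - \varphi_{h}^{\text{num}}(a;t_0,\eta_0)\} = o_P(n^{-1/2})$ in two stages: the $\widehat{\eta}$-direction by the usual sample-splitting bound (Assumption 1 together with $\|\widehat{\varphi}_h^{\text{num}}(a;t) - \varphi_h^{\text{num}}(a;t)\| = o_P(1)$ from Assumption 3, making a conditionally mean-zero empirical process vanish at rate $o_P(n^{-1/2})$, uniformly in $t$), and the $\widehat{t}$-direction by stochastic equicontinuity of the Donsker-in-$t$ class of Assumption 2 combined with $|\widehat{t}-t_0| = o_P(1)$ and $L_2$-continuity of $t\mapsto\varphi_{h}^{\text{num}}(a;t,\eta_0)$ at $t_0$. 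This gives $\widehat{\psi}_{h}^{\text{num}}(a;\widehat{t}) - \psi_{h}^{\text{num}}(a;\widehat{t}) = (\mathbb{P}_n - \mathbb{P})\{\varphi_{h}^{\text{num}}(a;t_0,\eta_0)\} + O_P(R_2^{\text{num}}) + o_P(n^{-1/2})$, and the verbatim argument applied to the denominator gives $\widehat{\psi}_{h}^{\text{den}}(a;\widehat{t}) - \psi_{h}^{\text{den}}(a;\widehat{t}) = (\mathbb{P}_n - \mathbb{P})\{\varphi_{h}^{\text{den}}(a;t_0,\eta_0)\} + O_P(R_2^{\text{den}}) + o_P(n^{-1/2})$, which I need next.

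For the second bracket I would first expand $\widehat{t}$. Because $t\mapsto\widehat{\psi}_{h}^{\text{den}}(a;t)$ is continuous for smooth $S$, the definition (\ref{eqn:estimatedT}) forces $\widehat{\psi}_{h}^{\text{den}}(a;\widehat{t}) = 1-\gamma = \psi_{h}^{\text{den}}(a;t_0)$; subtracting, substituting the denominator expansion above, and using the first-order Taylor expansion $\psi_{h}^{\text{den}}(a;\widehat{t}) - \psi_{h}^{\text{den}}(a;t_0) = \{\partial\psi_{h}^{\text{den}}(a;t_0,\eta_0)/\partial t\}(\widehat{t}-t_0) + o_P(|\widehat{t}-t_0|)$ from Assumption 4, I would solve for $\widehat{t}-t_0$, using boundedness of $|1/\{\partial\psi_{h}^{\text{den}}(a;t_0,\eta_0)/\partial t\}|$ to absorb the $o_P(|\widehat{t}-t_0|)$ term and conclude $\widehat{t}-t_0 = -(\mathbb{P}_n - \mathbb{P})\{\varphi_{h}^{\text{den}}(a;t_0,\eta_0)\}/\{\partial\psi_{h}^{\text{den}}(a;t_0,\eta_0)/\partial t\} + O_P(R_2^{\text{den}}) + o_P(n^{-1/2})$, in particular $\widehat{t}-t_0 = O_P(n^{-1/2}) + O_P(R_2^{\text{den}})$. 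Plugging this into a first-order Taylor expansion of $t\mapsto\psi_{h}^{\text{num}}(a;t)$ at $t_0$ (Assumption 4, with $|\partial\psi_{h}^{\text{num}}(a;t_0,\eta_0)/\partial t|$ bounded) yields $\psi_{h}^{\text{num}}(a;\widehat{t}) - \psi_{h}^{\text{num}}(a;t_0) = -\{\partial\psi_{h}^{\text{num}}(a;t_0,\eta_0)/\partial t\}/\{\partial\psi_{h}^{\text{den}}(a;t_0,\eta_0)/\partial t\}\,(\mathbb{P}_n - \mathbb{P})\{\varphi_{h}^{\text{den}}(a;t_0,\eta_0)\} + O_P(R_2^{\text{den}}) + o_P(n^{-1/2})$. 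Adding the two brackets gives the claimed expansion. I expect the main obstacle to be the empirical-process term at the random threshold, $(\mathbb{P}_n - \mathbb{P})\{\varphi_{h}^{\text{num}}(a;\widehat{t},\widehat{\eta}) - \varphi_{h}^{\text{num}}(a;t_0,\eta_0)\}$: it entangles randomness from $\widehat{\eta}$ (controllable only via sample splitting, since Assumption 2 gives Donsker behavior in $t$ for \emph{fixed} $\eta$) with randomness from $\widehat{t}$ (controllable via that same Donsker-in-$t$ condition), and disentangling them requires a uniform-in-$t$ version of the sample-splitting bound; the closely related subtlety is checking that the von Mises and Taylor remainders, all evaluated at $\widehat{t}$ rather than $t_0$, remain of the stated orders, which is where the ``uniformly in $\eta$'' differentiability of Assumption 4 and the smooth, bounded dependence of $\varphi_{h}^{\text{num}},\varphi_{h}^{\text{den}}$ on $t$ do the work.
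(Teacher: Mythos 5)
Your proposal is correct in substance and follows essentially the same architecture as the paper's proof: reduce to the numerator via $\widehat{\psi}_{h}(a;\widehat{t})=\widehat{\psi}^{\text{num}}_{h}(a;\widehat{t})/(1-\gamma)$ and $\psi^{\text{den}}_{h}(a;t_0)=1-\gamma$; handle the empirical-process fluctuation by combining the Donsker-in-$t$ condition (for the random $\widehat{t}$) with the sample-splitting bound (for $\widehat{\eta}$); identify the drift with the von Mises remainder; and obtain a Bahadur-type expansion of $\widehat{t}-t_0$ from the estimating equation $\widehat{\psi}^{\text{den}}_{h}(a;\widehat{t})=1-\gamma$, which you then push through a Taylor expansion of $t\mapsto\psi^{\text{num}}_{h}(a;t)$ -- the paper does exactly this, citing Lemma 3 of Kennedy (2021) where you derive the expansion by hand. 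The one place where your bookkeeping differs, and where you correctly flag the difficulty, is the choice of intermediate point: by splitting $\widehat{\psi}^{\text{num}}_{h}(a;\widehat{t})-\psi^{\text{num}}_{h}(a;\widehat{t})$ directly, your von Mises remainder sits at $\widehat{t}$ and your $\widehat{\eta}$-direction empirical-process term is evaluated at the random $t=\widehat{t}$, so you need a uniform-in-$t$ version of the sample-splitting bound plus an argument that the remainder at $\widehat{t}$ is of the same order as $R^{\text{num}}_2$ at $t_0$; neither is delivered by the stated assumptions alone, though both are plausible given the explicit remainder formulas with bounded derivatives of $S$. The paper sidesteps both issues by inserting the intermediate point $(t_0,\widehat{\eta})$: the term $(\mathbb{P}_n-\mathbb{P})\{\varphi^{\text{num}}_{h}(a;\widehat{t},\widehat{\eta})-\varphi^{\text{num}}_{h}(a;t_0,\widehat{\eta})\}$ is killed by the Donsker-in-$t$ class at the fixed (conditionally on the training sample) nuisance $\widehat{\eta}$ via van der Vaart's Lemma 19.24, the term $(\mathbb{P}_n-\mathbb{P})\{\varphi^{\text{num}}_{h}(a;t_0,\widehat{\eta})-\varphi^{\text{num}}_{h}(a;t_0,\eta_0)\}$ by the sample-splitting lemma at the fixed $t_0$, the drift $\mathbb{P}\{\varphi^{\text{num}}_{h}(a;\widehat{t},\widehat{\eta})-\varphi^{\text{num}}_{h}(a;t_0,\widehat{\eta})\}$ is Taylor-expanded using the uniform-in-$\eta$ differentiability and the derivative-convergence part of Assumption 4, and the remaining drift is exactly $R^{\text{num}}_2$ at $t_0$; re-centering your argument this way removes the two gaps you anticipated without changing anything else.
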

\noindent
Similar to Proposition \ref{prop:inferenceFixedT}, the first assumption ensures the numerator and denominator estimators behave like sample averages. The second assumption requires that the EIFs are not overly complex functions of $t$, but they can be arbitrarily complex functions of $\mu(x,a)$ and $\pi(a|x)$. The third assumption only requires that the threshold, nuisance functions, and EIFs are consistently estimated at any rate. The last assumption requires some smoothness in $t$, such that we can use a Taylor expansion to characterize the behavior of $\widehat{t} - t_0$.  

Theorem \ref{thm:eifErrorEstT} allows us to establish that $\widehat{\psi}_{h}(a; \widehat{t})$ is $n^{-1/2}$ consistent and asymptotically Normal if $\psinum(a; t)$ and $\psiden(a; t)$ are estimated at $n^{-1/2}$ rates; this is the case when $\|\widehat{\pi}(a|x) - \pi(a|x)\|^2 = o_P(n^{-1/2})$ and $\|\widehat{\pi}(a|x) - \pi(a|x)\| \cdot \|\widehat{\mu}(x,a) - \mu(x, a)\| = o_P(n^{-1/2})$. Thus, we can still conduct valid inference even when the propensity score and outcome regression are flexibly estimated at nonparametric rates. This is stated formally below.
\begin{corollary}
\label{cor:inferenceEstT}
  Assume Assumptions 1-4 in Theorem \ref{thm:eifErrorEstT} hold. Then, if $\widehat{t}$ is estimated as in (\ref{eqn:estimatedT}), and $\big(\|\widehat{\pi}(a|X) - \pi(a|X)\| + \|\widehat{\mu}(X,a) - \mu(X,a)\| \big) \|\widehat{\pi}(a|X) - \pi(a|X)\| = o_P(n^{-1/2})$,
  then,
  \begin{align*}
    \sqrt{n} (1-\gamma) \left( \widehat{\psi}_{h}(a; \widehat{t}) - \psi_{h}(a; t_0) \right) \rightarrow N\left( 0, \text{Var} \left\{ \varphi^{\text{num}}_{h}(a; t_0) - \frac{\partial \psi^{\text{num}}_{h}(a; t_0)/\partial t}{\partial \psi^{\text{den}}_{h}(a; t_0)/\partial t} \varphi^{\text{den}}_{h}(a; t_0) \right\} \right).
  \end{align*}
\end{corollary}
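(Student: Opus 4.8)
The plan is to feed the algebraic expansion of Theorem \ref{thm:eifErrorEstT} into a central limit theorem, after first showing that the two second-order remainders it leaves behind are $o_P(n^{-1/2})$ under the stated product-rate condition. Since Assumptions 1--4 of Theorem \ref{thm:eifErrorEstT} are imposed, that theorem immediately gives
\begin{align*}
  (1-\gamma)\{\widehat{\psi}_{h}(a;\widehat t) - \psi_{h}(a;t_0)\} &= (\mathbb{P}_n - \mathbb{P})\{ g(Z) \} + O_P(R_2^{\text{num}}) + O_P(R_2^{\text{den}}) + o_P(n^{-1/2}),
\end{align*}
where $g(Z) = \varphi_{h}^{\text{num}}(a;t_0,\eta_0) - \frac{\partial \psi_{h}^{\text{num}}(a;t_0,\eta_0)/\partial t}{\partial \psi_{h}^{\text{den}}(a;t_0,\eta_0)/\partial t}\,\varphi_{h}^{\text{den}}(a;t_0,\eta_0)$ is a fixed (nonrandom) function of a single observation, because $\eta_0$, $t_0$, and the derivative ratio are population-level constants. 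Two things then remain: bound the remainders, and apply the CLT to the leading term.

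For the remainders, I would use that $R_2^{\text{num}} = \mathbb{P}\{\varphi_{h}^{\text{num}}(a;t_0,\widehat\eta) - \varphi_{h}^{\text{num}}(a;t_0,\eta_0)\}$ is exactly the second-order von Mises remainder of the functional $\psi_{h}^{\text{num}}(a;t_0)$ at $\widehat\eta$ relative to $\eta_0$ --- i.e., the same object controlled in the proof of Proposition \ref{prop:inferenceFixedT} with $t$ held fixed at $t_0$. Expanding it by a functional Taylor argument using the explicit EIF of Theorem \ref{thm:numDenomEIFs}, the first-order terms cancel (by the defining mean-zero/derivative property of the EIF), and what survives is a sum of population integrals of products of $\widehat\pi(a_0|x) - \pi(a_0|x)$ with either $\widehat\mu(x,a_0) - \mu(x,a_0)$ or $\widehat\pi(a_0|x) - \pi(a_0|x)$ again --- the latter arising both from the inverse-propensity term and from the second-order behavior of $S(\cdot,t)$ and $\partial S(\cdot,t)/\partial\pi$ in their propensity argument --- weighted by bounded kernel and smoothing factors. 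Applying Cauchy--Schwarz termwise yields $|R_2^{\text{num}}| = O\big( (\|\widehat\pi(a|X) - \pi(a|X)\| + \|\widehat\mu(X,a) - \mu(X,a)\|)\,\|\widehat\pi(a|X) - \pi(a|X)\| \big)$, and since $\psi_{h}^{\text{den}}$ does not involve $\mu$, the same argument gives $|R_2^{\text{den}}| = O(\|\widehat\pi(a|X) - \pi(a|X)\|^2)$. The hypothesized rate condition forces both right-hand sides to be $o_P(n^{-1/2})$, and because the derivative ratio is bounded (fourth assumption of Theorem \ref{thm:eifErrorEstT}), $O_P(R_2^{\text{num}}) + O_P(R_2^{\text{den}}) = o_P(n^{-1/2})$.

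With the remainders absorbed, $(1-\gamma)\{\widehat{\psi}_{h}(a;\widehat t) - \psi_{h}(a;t_0)\} = (\mathbb{P}_n - \mathbb{P})\{g(Z)\} + o_P(n^{-1/2})$. Here $(\mathbb{P}_n - \mathbb{P})\{g(Z)\}$ is a centered average of i.i.d.\ copies of $g(Z)$, which has finite variance under the standing moment conditions (finiteness of the EIF second moments together with the boundedness in the fourth assumption of Theorem \ref{thm:eifErrorEstT}); the ordinary CLT thus gives $\sqrt n\,(\mathbb{P}_n - \mathbb{P})\{g(Z)\} \rightarrow N(0,\text{Var}(g(Z)))$. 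Multiplying through by $\sqrt n$ and invoking Slutsky to discard the $\sqrt n \cdot o_P(n^{-1/2}) = o_P(1)$ term gives $\sqrt n (1-\gamma)(\widehat{\psi}_{h}(a;\widehat t) - \psi_{h}(a;t_0)) \rightarrow N(0,\text{Var}(g(Z)))$, which is the stated conclusion once $g(Z)$ is written out.

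The step I expect to be the main obstacle is controlling $R_2^{\text{num}}$: one must carry out the functional Taylor expansion of $\psi_{h}^{\text{num}}(a;t_0)$ around $\eta_0$ using the full EIF of Theorem \ref{thm:numDenomEIFs}, verify that every first-order contribution --- including those coming from the variation of $S$ and $\partial S/\partial\pi$ --- cancels, and check that the leftover second-order pieces are genuine $L_2$ products/squares (controllable by Cauchy--Schwarz) rather than quantities requiring sup-norm control of $\widehat\pi - \pi$; one also has to confirm that the kernel average over a fixed $h$-neighborhood of $a$ does not disrupt this product structure. Once this bookkeeping is done --- it is essentially the Proposition \ref{prop:inferenceFixedT} calculation evaluated at $t = t_0$ --- the rest of the argument is routine.
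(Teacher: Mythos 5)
Your proposal is correct and follows essentially the same route as the paper: the corollary is obtained by plugging the expansion of Theorem \ref{thm:eifErrorEstT} into the CLT, with the remainders $R_2^{\text{num}}$ and $R_2^{\text{den}}$ recognized as exactly the von Mises remainders analyzed for Theorem \ref{thm:numDenomEIFs} (Lemmas \ref{lemma:secondOrderDenomFixedT}--\ref{lemma:secondOrderNumFixedT} and Remark \ref{remark:numDenomEIFs}), which are bounded via Cauchy--Schwarz by the stated product/square of nuisance errors and hence are $o_P(n^{-1/2})$ under the hypothesized rate condition. The bookkeeping you flag as the main obstacle is precisely the computation already carried out in the supplementary remainder analysis, so no new argument is needed beyond Slutsky and the ordinary CLT for the fixed influence-function contrast.
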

Note that, when $t$ is fixed, Theorem \ref{thm:numDenomEIFs} implies that when $\big(\|\widehat{\pi}(a|X) - \pi(a|X)\| + \|\widehat{\mu}(X,a) - \mu(X,a)\| \big) \|\widehat{\pi}(a|X) - \pi(a|X)\| = o_P(n^{-1/2})$, the asymptotic variance of $\widehat{\psi}_h^{\text{num}}(a; t)$ is $\text{Var} \left\{ \varphi^{\text{num}}_{h}(a; t) \right\}$. Thus, the second term in the variance of Corollary \ref{cor:inferenceEstT} comes from the fact that the trimming threshold $t$ is estimated; this is reflected by the presence of $\varphi^{\text{den}}_{h}(a; t_0)$, which is used to estimate the threshold.

Corollary \ref{cor:inferenceEstT} suggests that we must estimate $\partial \psi^{\text{num}}_{h}(a; t_0)/\partial t$ and $\partial \psi^{\text{den}}_{h}(a; t_0)/\partial t$. Define $\varphi'^{\text{num}}_{h}(a; t)$ as the derivative of $\varphi^{\text{num}}_{h}(a; t)$ with respect to $t$; define $\varphi'^{\text{den}}_{h}(a; t)$ analogously. Then, natural estimators for these derivatives are
\begin{align*}
  \widehat{\psi}'^{\text{num}}_{h}(a; \widehat{t}) = \mathbb{P}_n\left\{ \widehat{\varphi}'^{\text{num}}_{h}(a; \widehat{t}) \right\}, \hspace{0.1in} \text{and } \widehat{\psi}'^{\text{den}}_{h}(a; \widehat{t}) = \mathbb{P}_n\left\{ \widehat{\varphi}'^{\text{den}}_{h}(a; \widehat{t}) \right\},
 \end{align*}
 where $\widehat{\varphi}'^{\text{num}}_{h}(a; \widehat{t})$ and $\widehat{\varphi}'^{\text{den}}_{h}(a; \widehat{t})$ are equal to $\varphi'^{\text{num}}_{h}(a; t)$ and $\varphi'^{\text{den}}_{h}(a; t)$, but with $\mu(x,a)$ and $\pi(a|x)$ replaced by $\widehat{\mu}(x,a)$ and $\widehat{\pi}(a|x)$, and with $t$ replaced by $\widehat{t}$ in (\ref{eqn:estimatedT}). Then, Corollary \ref{cor:inferenceEstT} suggests the following $(1-\alpha)$-level confidence interval for $\psi_{h}(a; t)$:
\begin{align}
  \widehat{\psi}_{h}(a; \widehat{t}) \pm z_{\alpha/2} \sqrt{ \frac{\text{Var}_n\left\{ \widehat{\varphi}^{\text{num}}_{h}(a; \widehat{t}) - \frac{\widehat{\psi}'^{\text{num}}_{h}(a; \widehat{t})}{\widehat{\psi}'^{\text{den}}_{h}(a; \widehat{t})} \widehat{\varphi}^{\text{den}}_{h}(a; \widehat{t}) \right\}}{n(1-\gamma)^2} }. \label{eqn:estTCI}
\end{align}

\section{Choosing Bandwidth $h$ and Smoothing Parameter $\epsilon$} \label{s:smoothingParameters}

In Section \ref{s:EIFs}, we assumed the bandwidth $h$ for the kernel $K_h(\cdot)$ and the parameter $\epsilon$ for the smoothed indicator $S(\pi(a|x), t)$ were prespecified. Here we provide a risk-minimization procedure to select $h$ and a rule-of-thumb based on entropy to select $\epsilon$. For simplicity we assume the threshold $t$ is fixed, but our procedures would be the same if $t$ were estimated.

\subsection{Risk Minimization Procedure to Select Bandwidth $h$} \label{ss:bandwidth}

Here we provide a data-driven procedure for selecting $h$ by minimizing the expected squared difference between our  estimator $\widehat{\psi}_h(a; t)$ in (\ref{eqn:eifRatioTwoEsts}) and a non-smoothed analog. The non-smoothed analog of our estimand $\psi_h(a; t)$, defined in (\ref{eqn:causalEstimandSmoothed}), is
\begin{align*}
  \widetilde{\psi}(a; t) = \frac{\int S(\pi(a|x), t) \mu(x,a) dP(x)}{\int S(\pi(a|x), t) dP(x)},
\end{align*}
where we still view $\epsilon$ as fixed. We would like $\psi_h(a; t)$ to be close to $\widetilde{\psi}(a; t)$, and we can estimate $\psi_h(a; t)$ with $\widehat{\psi}_h(a; t)$. A common bandwidth selection approach in nonparametric regression is to minimize mean squared error \citep{fan1996local,wasserman2006all}. Thus, we consider minimizing $\int \{ \widehat{\psi}_h(a; t) - \widetilde{\psi}(a; t) \}^2 w(a) da$, where $w(a)$ is a weight function. Because $\widetilde{\psi}(a; t)$ does not depend on $h$, this is equivalent to minimizing
\begin{align}
  R \left( \widehat{\psi}_h \right) = \int \left\{ \widehat{\psi}^2_h(a; t) - 2 \widehat{\psi}_h(a; t) \widetilde{\psi}(a; t) \right\} w(a) da, \label{eqn:risk}
\end{align}
which we refer to as the ``risk.'' Thus, we do not let $h$ vary across $a$. Our procedure involves estimating the risk (\ref{eqn:risk}) and, among candidate bandwidths $\mathcal{H}$, selecting the bandwidth that minimizes it. Equation (\ref{eqn:risk}) depends on $w(a)$. A natural choice is $w(a) = \int S(\pi(a|x), t) dP(x)$, because it will be small for treatment values with many small propensity scores. Given this choice for $w(a)$, the risk (\ref{eqn:risk}) becomes
\begin{align*}
  R \left( \widehat{\psi}_h \right) = \int \int \widehat{\psi}^2_h(a; t) S(\pi(a|x), t) da \hspace{0.025in} dP(x) - 2 \int \int \widehat{\psi}_h(a; t) S(\pi(a|x), t) \mu(x, a) da \hspace{0.025in} dP(x).
\end{align*}
Thus, once $\widehat{\psi}_h(a; t)$ is computed for a given bandwidth, a natural risk estimator is
\begin{align}
    \widehat{R} \left( \widehat{\psi}_h \right) = \mathbb{P}_n\left\{ \int \widehat{\psi}^2_h(a; t) S(\widehat{\pi}(a|x), t) da \right\} - 2 \mathbb{P}_n\left\{ \int \widehat{\psi}_h(a; t) S(\widehat{\pi}(a|x), t) \widehat{\mu}(x, a) da \right\} \label{eqn:estimatedRisk}.
\end{align}
As in Section \ref{ss:fixedThreshold}, in practice, the data is split into halves, where one half is used to estimate $\pi(a|x)$ and $\mu(x,a)$, and the other is used to compute $\widehat{\psi}_h(a; t)$ and $\widehat{R} \left( \widehat{\psi}_h \right)$. Cross-fitting can also be used to utilize the full sample size. This process can be computed for all candidate bandwidths $h \in \mathcal{H}$, and finally the bandwidth is chosen as the one that minimizes (\ref{eqn:estimatedRisk}).

\subsection{Entropy-based Procedure to Select Smoothing Parameter $\epsilon$} \label{ss:epsilon}

Unlike kernel smoothing, smoothing $\mathbb{I}(\pi(a|x) > t)$ via $S(\pi(a|x), t)$ does not naturally suggest a data-driven selection procedure for the parameter $\epsilon$. \cite{yang2018asymptotic} recommend conducting several analyses for different $\epsilon$ values, and assess if the analysis is sensitive to $\epsilon$. We also recommend this in practice, but researchers may nonetheless want some guidance for how to choose $\epsilon$. We consider a way to interpret $\epsilon$ that suggests a rule-of-thumb selection procedure. Although our parameter $\epsilon$ refers to $S(\pi(a|x), t) = \Phi_{\epsilon}\{\pi(a|x) - t\}$, the following interpretation would apply for any smoothed indicator with its own parameter.

A smoothed trimmed estimand is a weighted average with weights proportional to $S(\pi(a|x), t)$. Thus, we can interpret the estimand as an average among subjects randomly selected via a selection variable $S | X \sim \text{Bern}\{s(a|X)\}$, where we write $s(a|x) = S(\pi(a|x), t)$ for notational simplicity. Consider the entropy of this selection rule, defined as:
\begin{align}
  H(S|X) = - \mathbb{E} \left[ s(a|X) \log_2\{ s(a|X) \} + \{1 - s(a|X)\} \log_2\{ 1 - s(a|X) \} \right]. \label{eqn:entropy}
\end{align}
For example, consider two extremes of $\epsilon$. When $\epsilon \rightarrow \infty$, $s(a|X) = 0.5$; thus, every subject contributes equally to the average, and $H(S|X) = 1$. Meanwhile, when $\epsilon \rightarrow 0$, $s(a|X) = \mathbb{I}(\pi(a|x) > t)$; thus, only subjects with propensity scores above $t$ contribute to the average, and $H(S|X) = 0$. In practice, intermediate values of $\epsilon$ that lead to small, non-zero entropy will be most useful. For example, one could choose $\epsilon$ such that $H(S|X) \approx 0.05$; this would have the same entropy as a coin flip whose probability of heads is 0.9944.

This introduces the following selection procedure. First, estimate the propensity score $\pi(a|x)$. Then, for each choice of $\epsilon$ (e.g., $10^{-1}, 10^{-2}, \dots$), compute the estimated entropy:
\begin{align}
  \widehat{H}(S|X) = - \mathbb{P}_n[\widehat{s}(a|X) \log_2\{\widehat{s}(a|X)\} + \{1 - \widehat{s}(a|X)\} \log_2\{1 - \widehat{s}(a|X)\} ] \label{eqn:estimatedEntropy}
\end{align}
where $\widehat{s}(a|X) = S(\widehat{\pi}(a|X), t)$. Finally, choose the $\epsilon$ such that $\widehat{H}(S|X)$ is close to an intermediate value, e.g. 0.05. While such a choice does not optimize a measure of error (as in the previous subsection), it nonetheless provides a way to interpret $\epsilon$ in practice.

In Section \ref{s:simulations}, we present simulations where, for simplicity, $h$ and $\epsilon$ are fixed. In Section \ref{s:application}, we present an application where we choose $h$ and $\epsilon$ using the above selection procedures.

\section{Simulations} \label{s:simulations}

We consider an illustrative example with one covariate $X$, a continuous outcome $Y$, and a continuous treatment $A$. We generated 1000 datasets, each with $n = 1000$ subjects, as follows. First, the covariate was generated as $X \sim \text{Unif}(0,1)$. Then, the treatment was generated as $A | X \sim N(m(X), 0.2^2)$, where $m(X)$ is shown in Figure \ref{fig:simMeanA}. Finally, the outcome was generated as $Y | X, A \sim N(\mu(X), 0.5^2)$, where $\mu(X)$ is shown in Figure \ref{fig:simMeanY}. The exact specification of $m(X)$ and $\mu(X)$ are in the Supplementary Material. In this example there are positivity violations for some treatment values, motivating trimming, and the propensity score and outcome regression are complex functions of $X$, motivating nonparametric models. We emphasize that although our example only involves one covariate, we present results based on the error rate involved in estimating the outcome regression and propensity score, as we discuss shortly. Thus, our results reflect the performance practitioners should expect if these quantities are estimated at slow rates, e.g., as in high-dimensional settings or with flexible models. Using a single covariate allows us to more easily illustrate where there are positivity violations that adversely affect estimators' performance.

Here, $\mathbb{E}[A|X]$ ranges from 0 to 1. Thus, we consider estimating $\psi(a) = \mathbb{E}[Y(a)]$ for $a \in [0,1]$. If consistency, unconfoundedness, and positivity hold, $\psi(a) = \mathbb{E}[\mu(X,a)]$, where $\mu(X,a) = \mathbb{E}[Y | X, A = a]$. In this example, $\mu(X,a) = \mu(X)$ for all $a$. Thus, $\psi(a)$ and its smoothed counterpart $\psi_h(a) = \mathbb{E}[\int K_h(a_0 - a) \mu(x,a_0) da_0]$ do not vary across treatment values. Figure \ref{fig:simTruePS} displays the true propensity scores $\pi(a|x)$ for $(x,a) \in [0,1]^2$; areas where $\pi(a|x) < 0.1$ are in black. The propensity scores become smaller for $a$ near 0 or 1, such that $\psi_h(a)$ is difficult to estimate for extreme treatment values.

\begin{figure}
  \centering
  \begin{subfigure}[b]{0.32\textwidth}
  \includegraphics[scale=0.3]{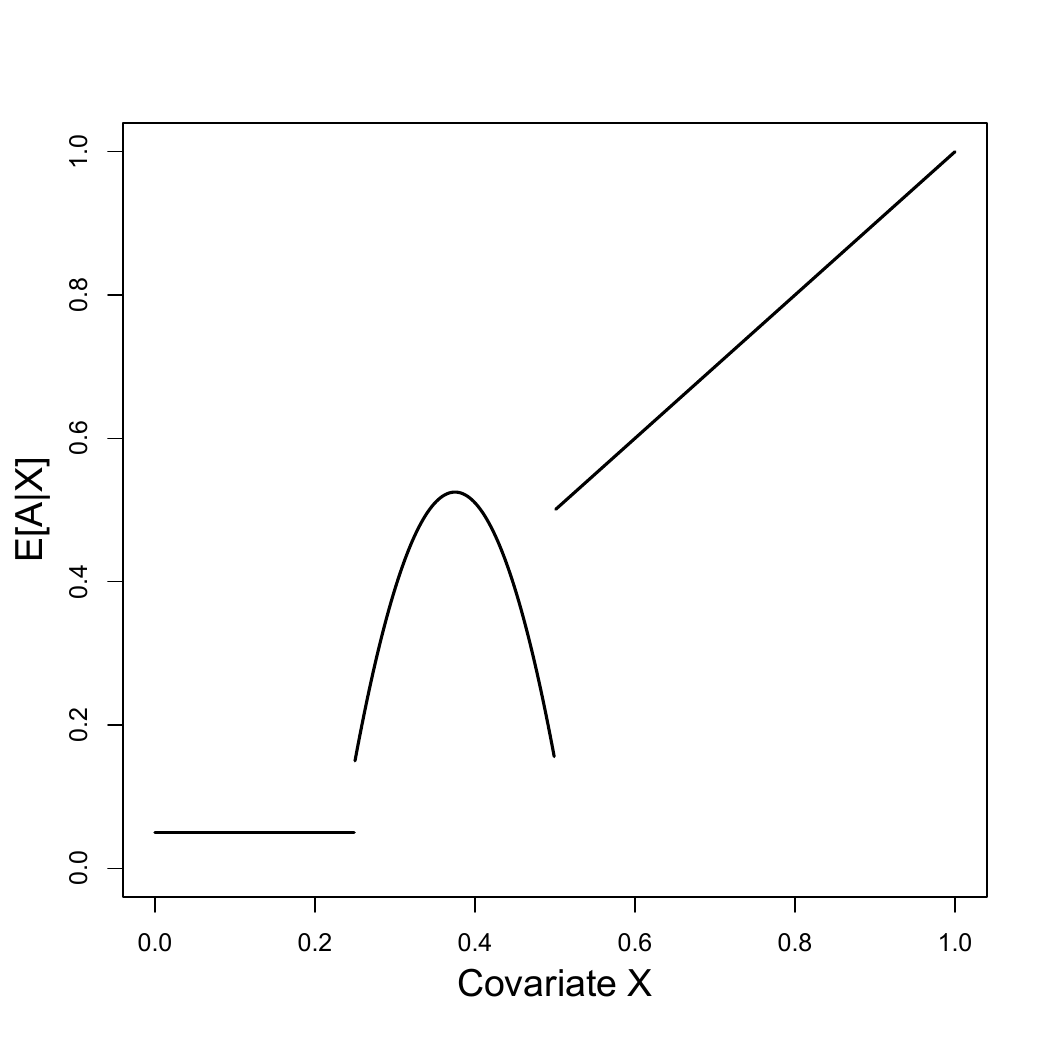}
  \caption{$m(X) = \mathbb{E}[A|X]$, which defines the propensity score.}
  \label{fig:simMeanA}
  \end{subfigure}
  \hfill
  \begin{subfigure}[b]{0.32\textwidth}
  \includegraphics[scale=0.3]{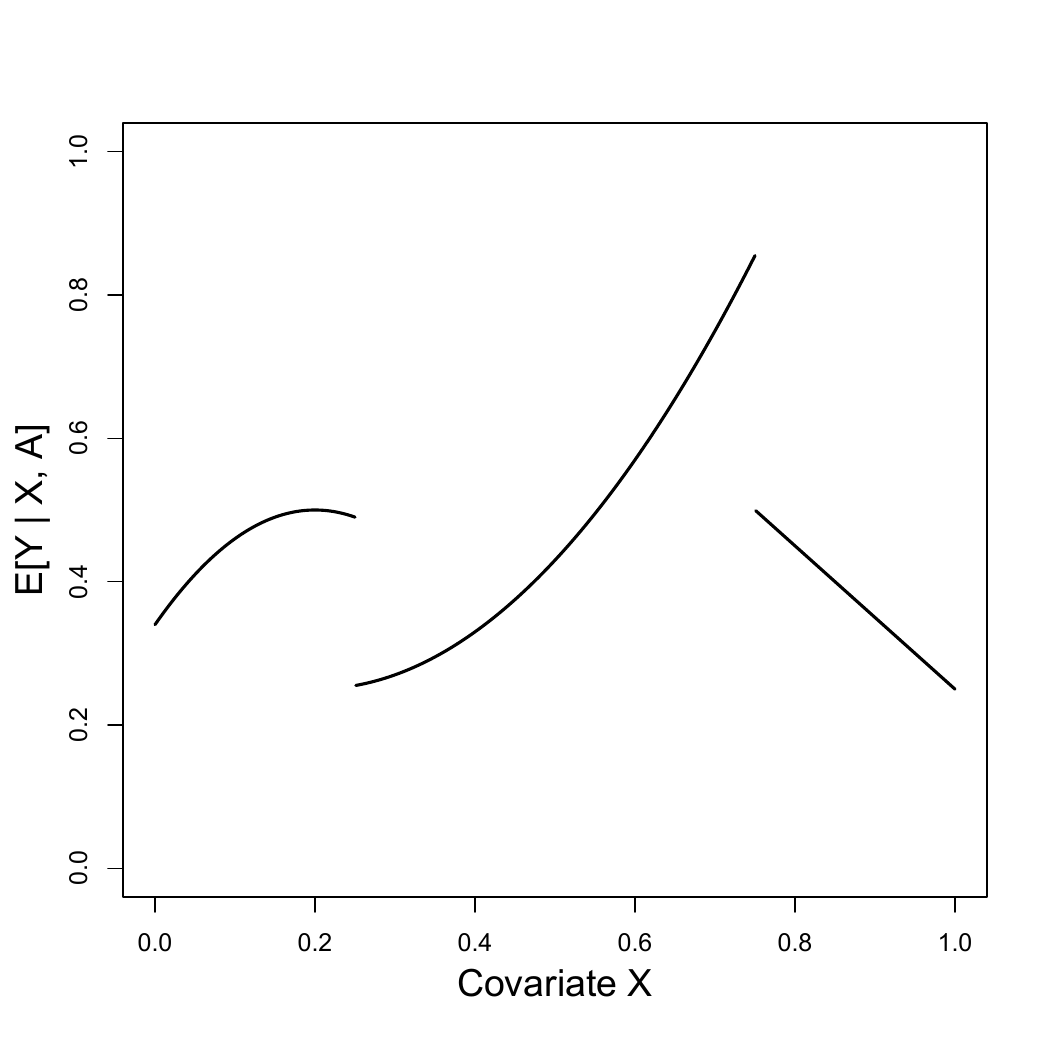}
  \caption{$\mu(X) = \mathbb{E}[Y| X, A]$, which doesn't depend on $A$.}
  \label{fig:simMeanY}
  \end{subfigure}
  \hfill
  \begin{subfigure}[b]{0.32\textwidth}
  \includegraphics[scale=0.45]{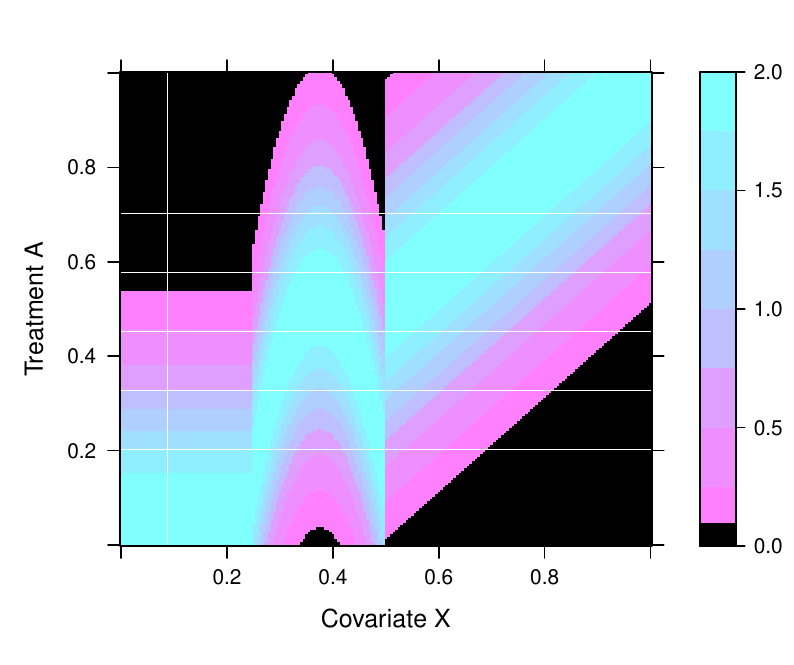}
  \caption{$\pi(a|x)$ for $(a,x) \in [0,1]^2$. Black denotes $\pi(a|x) < 0.1$.}
  \label{fig:simTruePS}
  \end{subfigure}
  \caption{Nuisance functions used in the data-generating process.}
  \label{fig:simFunctions}
\end{figure}

We consider three estimands, shown in Table \ref{tab:ests}: The smoothed ATE (SATE) $\psi_h(a)$, trimmed ATE (TATE) $\psi(a; t)$, and smoothed trimmed ATE (STATE) $\psi_h(a; t)$, which were defined in (\ref{eqn:psih}), (\ref{eqn:simpleIdentification}), and (\ref{eqn:causalEstimandSmoothed}), respectively. Table \ref{tab:ests} also lists the estimators we consider for each estimand. The SATE estimator averages the estimated efficient influence function (EIF) values $\widehat{\varphi}_h(a)$, defined in (\ref{eqn:psihHat}). Meanwhile, for the TATE, we consider two plug-in estimators. The first estimator is $\widehat{\psi}(a; t)$ in (\ref{eqn:plugin}), which averages $\widehat{\mu}(X,a)$ among subjects whose $\widehat{\pi}(a|X) > t$. The second estimator, $\widehat{\psi}^{\text{alt}}(a; t)$, is similar, but replaces $\widehat{\mu}(X,a)$ with $\widehat{\varphi}_h(a)$; thus, it computes a doubly robust estimator on the trimmed sample, which is common in practice for binary treatments \citep{mao2019propensity,li2020comment,zhang2022stable}. Thus, we call this the ``EIF-based trimmed plug-in'' estimator. Finally, the STATE estimator, $\widehat{\psi}_h(a; t)$ in (\ref{eqn:eifRatioTwoEsts}), is the estimator we derived in Section \ref{s:EIFs}.

\begin{table}
  \begin{tabular}{|c|c|c|}
  \hline
    \textbf{Estimand} & \textbf{Definition} & \textbf{Estimators} \\
    \hline
    SATE & $\psi_{h}(a) = \int \int K_h(a_0 - a) \mu(x,a_0) da_0 dP(x)$ & $\widehat{\psi}_h(a) = \mathbb{P}_n\{ \widehat{\varphi}_h(a)\}$ \\
    \hline
    TATE & $\psi(a; t) = \frac{\int \mathbb{I}(\pi(a|x) > t) \mu(x,a) dP(x)}{\int \mathbb{I}(\pi(a|x) > t) dP(x)}$ & \begin{tabular}{@{}c@{}} $\widehat{\psi}(a; t) = \frac{\mathbb{P}_n\{ \mathbb{I}(\widehat{\pi}(a|X) > t) \widehat{\mu}(X,a) \}}{\mathbb{P}_n\{ \mathbb{I}(\widehat{\pi}(a|X) > t) \}}$ \\ $\widehat{\psi}^{\text{alt}}(a; t) = \frac{\mathbb{P}_n[\mathbb{I}\{ \widehat{\pi}(a|X) > t \} \widehat{\varphi}_h(a) ]}{\mathbb{P}_n[\mathbb{I}\{ \widehat{\pi}(a|X) > t \}]}$ \end{tabular} \\
    \hline
    STATE & $\psi_{h}(a; t) = \frac{\int \int K_h(a_0 - a) S(\pi(a_0|x), t) \mu(x,a_0) da_0 dP(x)}{\int \int K_h(a_0 - a) 
S(\pi(a_0|x), t) da_0 dP(x)}$ & $\widehat{\psi}_{h}(a; t) = \frac{\widehat{\psi}^{\text{num}}_{h}(a; t)}{\widehat{\psi}^{\text{den}}_{h}(a; t)} = \frac{\mathbb{P}_n\{ \widehat{\varphi}^{\text{num}}_{h}(a; t) \} }{\mathbb{P}_n\{ \widehat{\varphi}^{\text{den}}_{h}(a; t) \} }$ \\
\hline
  \end{tabular}
  \caption{The estimands, their definitions, and their respective estimators that we consider. These estimands were defined in (\ref{eqn:psih}), (\ref{eqn:simpleIdentification}), and (\ref{eqn:causalEstimandSmoothed}), respectively.}
  \label{tab:ests}
\end{table}

All these estimators depend on estimates $\widehat{\pi}(a|x)$ and $\widehat{\mu}(x,a)$. To remain agnostic to the type of estimators one may use, for each dataset we simulated estimators as, for each $a$,\footnote{To compute the causal effect estimators for any given $a$, we simulated $\widehat{\pi}(A|X)$, $\widehat{\pi}(a|X)$, and $\widehat{\pi}(a_0|X)$ for $a_0 \in \{-0.5, -0.45, \dots, 1.45, 1.50\}$, and analogously for the outcome regression. The last quantity is needed to evaluate integrals over treatment values $a_0$.}
\begin{align*}
  \widehat{\pi}(a|X) &\sim \pi(a | X) + 2 \cdot \text{expit}[ \text{logit}\{\pi(a | X)/2\} + N(n^{-\alpha}, n^{-2 \alpha})] \\
  \widehat{\mu}(X, a) &\sim \mu(X, a) + N(n^{-\alpha}, n^{-2 \alpha}),
\end{align*}
such that the root mean squared error (RMSE) of $\widehat{\pi}(a | X)$ and $\widehat{\mu}(X, a)$ are $O_P(n^{-\alpha})$, and thus we can control the estimators' convergence rate via the rate parameter $\alpha$.\footnote{We used the transformation $\text{logit}\{\pi(a|X)/2\}$ to place propensity scores on the real line, such that Normal errors are sensible (in this case, $\max_{a,x} \pi(a|x) \leq 2$, such that $\pi(a|x)/2 \in [0,1]$ for all $a$ and $x$).} This follows previous simulation studies that evaluate causal effect estimators when nuisance functions are estimated at different rates \citep{kennedy2023towards,zeng2023efficient,mcclean2024nonparametric}. We considered convergence rates $\alpha \in \{0.1, 0.2, 0.3, 0.4, 0.5\}$. Setting $\alpha = 0.5$ corresponds to the rate one would expect from a well-specified parametric model without positivity violations, whereas $\alpha < 0.5$ corresponds to what one would expect from a nonparametric model, or more generally settings with high-dimensional covariates \citep[Chapter 4]{wasserman2006all} or positivity violations \citep{khan2010irregular}. In this case, because the nuisance functions are complex and there are positivity violations, $\alpha = 0.5$ is very unrealistic.

For these estimands and estimators, we consider a Gaussian kernel for $K_h(\cdot)$ with $h = 0.1$ and smoothed indicator $S(\pi(a|x),t) = \Phi_{\epsilon}\{\pi(a|x) - t\}$ with $\epsilon = 0.01$. We first consider the threshold $t = 0.1$, and then consider estimating $t$ as a quantile of the propensity score.

\subsection{Results When the Threshold is Fixed to $t = 0.1$} \label{ss:simFixedT}

On each dataset, we implemented the four estimators for treatment values $a \in \{0, 0.05, \dots, 0.95, 1\}$. Following previous simulation studies for estimating causal effects with continuous treatments \citep{kennedy2017non,wu2024matching}, for a given estimand $\theta(a)$ and estimator $\widehat{\theta}_s(a)$ on simulated dataset $s$, we computed its root mean squared error (RMSE), averaged across the distribution of $A$:
\begin{align*}
  \text{RMSE} &= \int \left( \frac{1}{1000} \sum_{s=1}^{1000} \{\widehat{\theta}_s(a) - \theta(a)\}^2 \right)^{1/2} dP(a) .
\end{align*}
Figure \ref{fig:rmse} displays the RMSE for each estimator for different convergence rates. Our trimmed doubly robust estimator $\widehat{\psi}_h(a; t)$ and the EIF-based trimmed plug-in estimator $\widehat{\psi}^{\text{alt}}(a; t)$ exhibit low RMSE. Meanwhile, the RMSE is high for the doubly robust estimator $\widehat{\psi}_h(a)$, due to small propensity scores inflating its variance. Finally, the plug-in estimator $\widehat{\psi}(a; t)$ inherits the convergence rate of nuisance function estimation, and thus also exhibits high RMSE unless the nuisance functions are estimated at an $\alpha = 0.5$ parametric rate.

\begin{figure}
  \centering
  \begin{subfigure}[b]{0.475\textwidth}
    \includegraphics[scale=0.4]{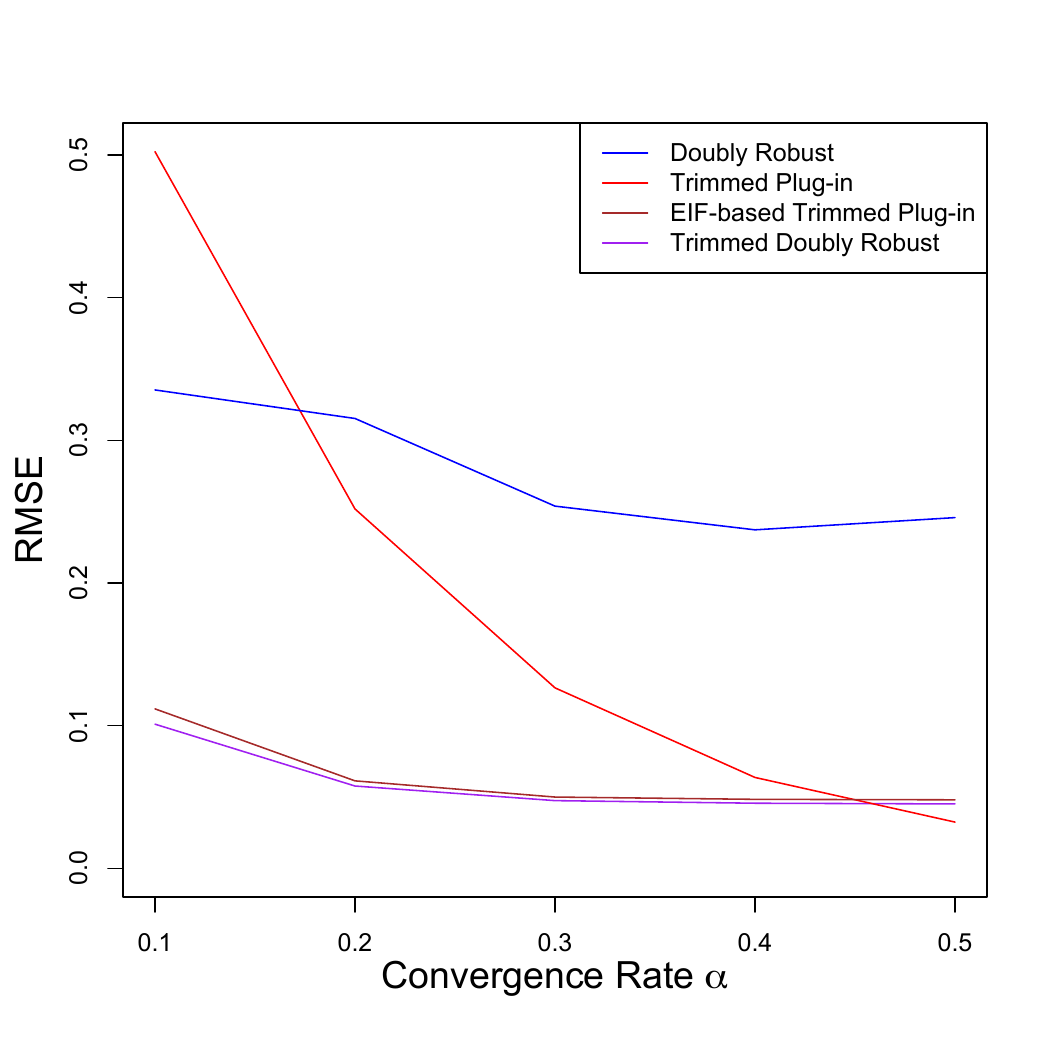}
    \caption{Estimators' average RMSE for $t = 0.1$.}
    \label{fig:rmse}
  \end{subfigure}
  \begin{subfigure}[b]{0.475\textwidth}
    \includegraphics[scale=0.4]{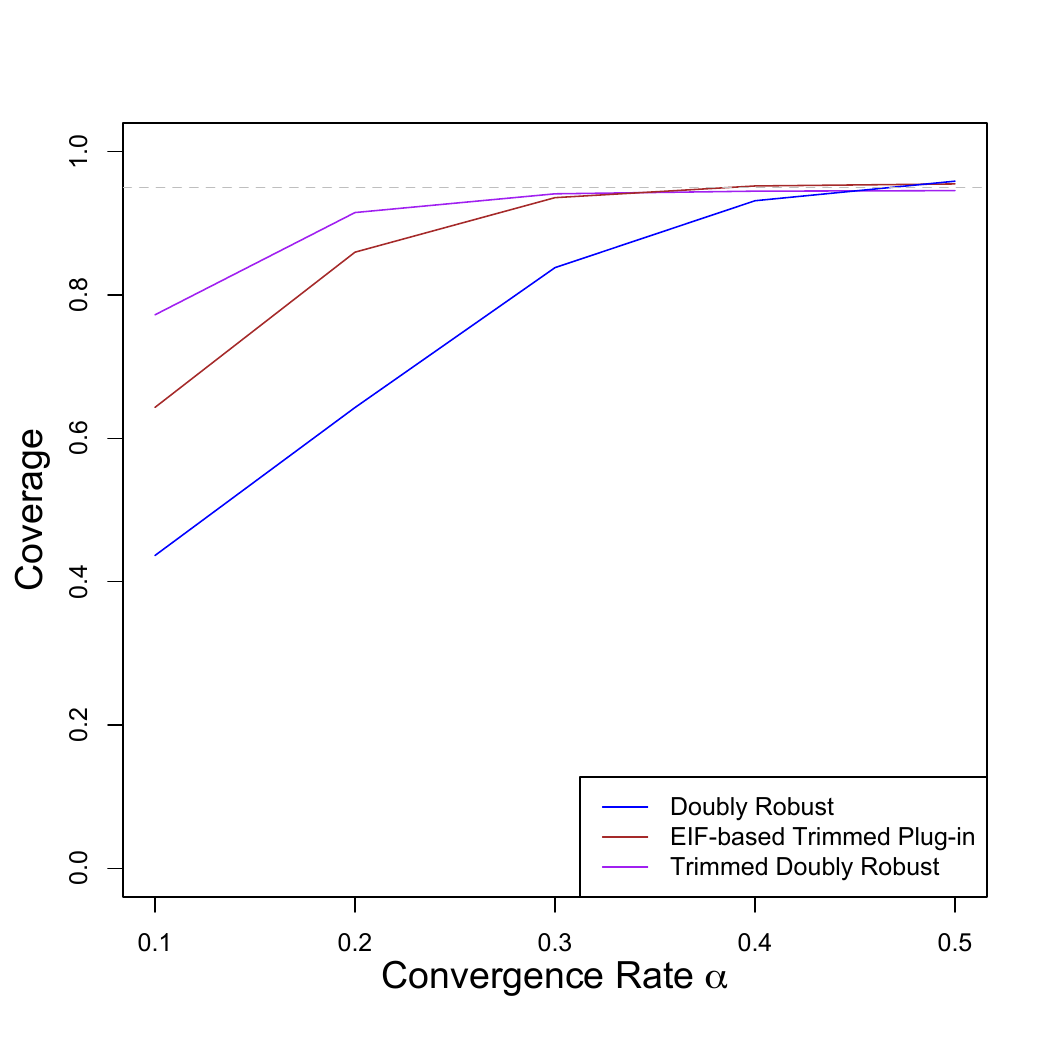}
    \caption{Average coverage for $t = 0.1$.}
    \label{fig:coverage}
  \end{subfigure}
  \begin{subfigure}[b]{0.475\textwidth} 
  \centering
  \includegraphics[scale=0.425]{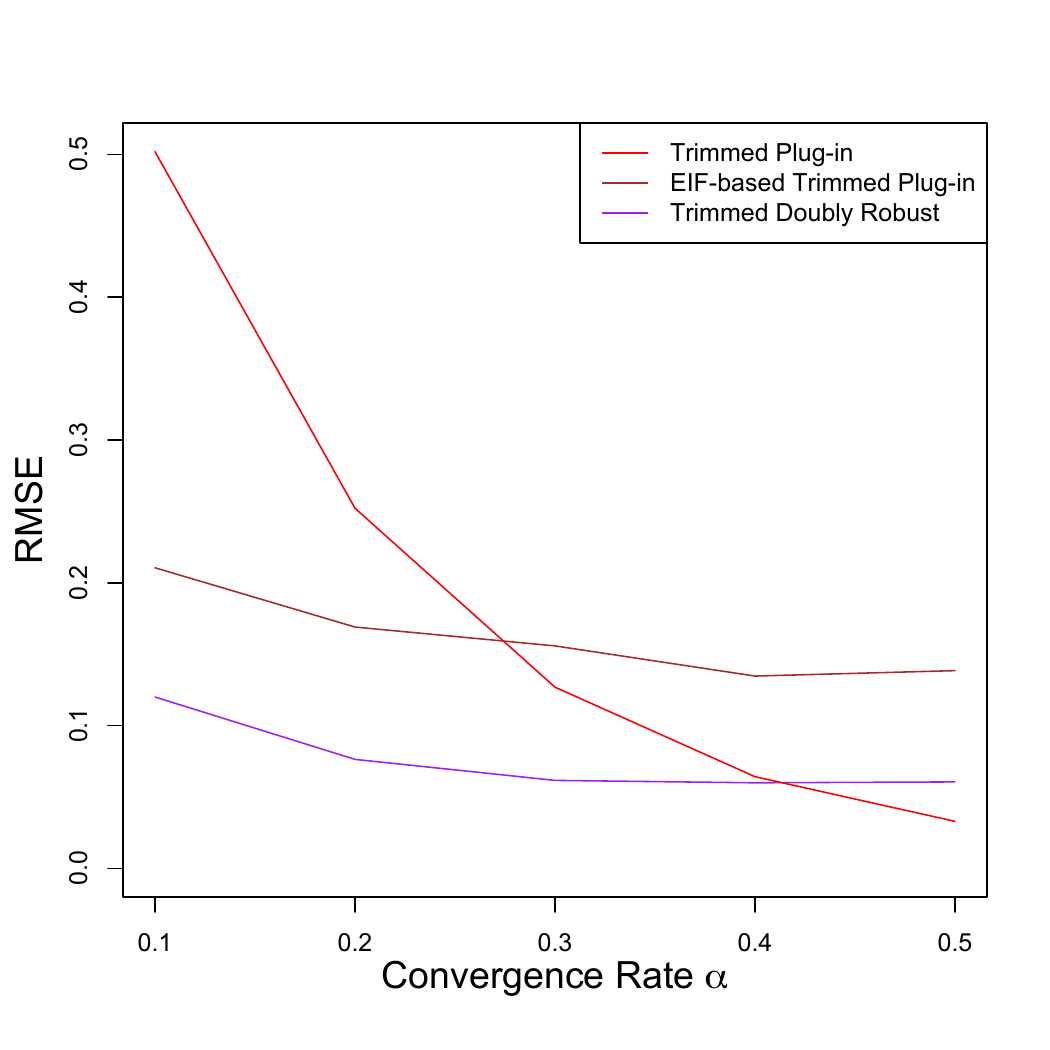}
  \caption{Estimators' average RMSE for estimated $t$.}
  \label{fig:simRMSEEstT}
  \end{subfigure}
  \begin{subfigure}[b]{0.475\textwidth} 
  \centering
  \includegraphics[scale=0.425]{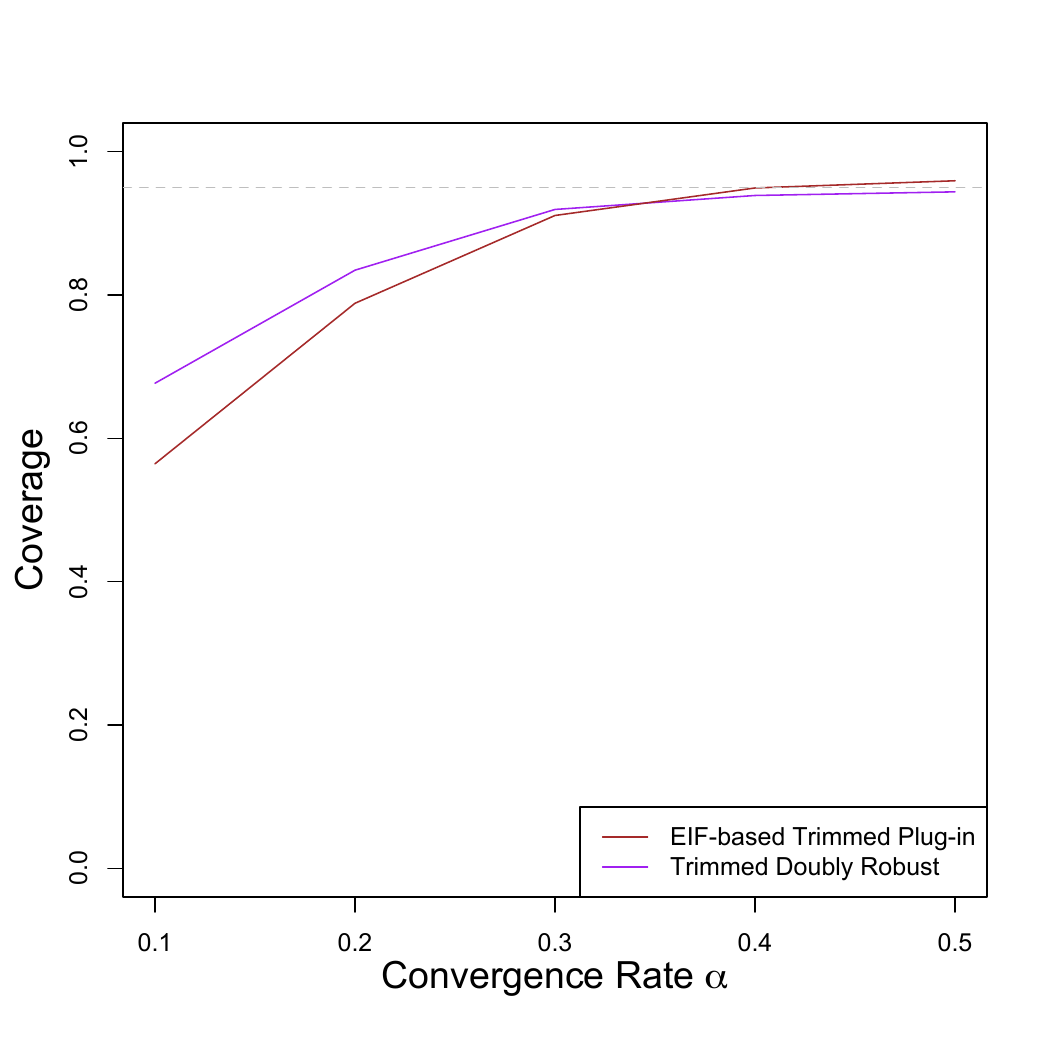}
  \caption{Average coverage for estimated $t$.}
  \label{fig:simCoverageEstT}
  \end{subfigure}
  \caption{Estimators' average RMSE (left) and average coverage of 95\% CIs (right) across convergence rates $\alpha \in \{0.1,0.2,0.3,0.4,0.5\}$, averaged across treatment values $a \in \{0, 0.05,\dots, 0.95, 1\}$, for $t = 0.1$ (top) and estimated $t$ with $\gamma = 0.2$ (bottom).}
\end{figure}

We also considered confidence interval (CI) coverage. The 95\% CI for the doubly robust estimator is $\widehat{\psi}_h(a) \pm z_{0.025} \sqrt{ \text{Var}_n\{ \widehat{\varphi}_h(a) \} / n}$. Meanwhile, for the EIF-based plug-in $\widehat{\psi}^{\text{alt}}(a; t)$, we computed an analogous interval within the trimmed sample. Finally, for our trimmed doubly robust estimator, we computed the interval (\ref{eqn:fixedTCI}). For a given estimand $\theta(a)$ with CI $[\hat{\theta}_{\ell,s}(a), \hat{\theta}_{u,s}(a)]$ for simulated dataset $s$, we computed its average coverage:
\begin{align*}
  \text{coverage} &= \int \left( \frac{1}{1000} \sum_{s=1}^{1000} \mathbb{I} \left\{\theta(a) \in [\hat{\theta}_{\ell,s}(a), \hat{\theta}_{u,s}(a)]\right\} \right)  dP(a).
\end{align*}
We did not compute CIs for the plug-in $\widehat{\psi}(a; t)$---which would require a bootstrap procedure---but we expect coverage to be low, given its poor RMSE. Figure \ref{fig:coverage} displays the coverage results. Coverage is worst for $\widehat{\psi}_h(a)$, due to instability from small propensity scores. Meanwhile, $\widehat{\psi}_h(a; t)$ has coverage close to the nominal level once $\alpha \geq 0.3$, confirming that we can conduct valid inference even when nuisance functions are estimated at nonparametric rates. Coverage for $\widehat{\psi}^{\text{alt}}(a; t)$ is slightly lower, but still preferable to that for $\widehat{\psi}_h(a)$.

\subsection{Results When the Threshold is Estimated} \label{ss:simEstT}

Although the threshold was fixed at $t = 0.1$ above, the amount of trimming varied across treatment values: Figure \ref{fig:simTruePS} suggests that approximately 50\% of subjects were trimmed for extreme treatment values close to 0 or 1, whereas almost no subjects were trimmed for moderate values. Instead, researchers may set $t$ such that a fixed proportion of subjects are trimmed. A common choice is the $\gamma$-quantile of the estimated propensity scores, $\widehat{t}_{\text{pi}} = \widehat{F}_a^{-1}(\gamma)$, where $\widehat{F}_a^{-1}(\gamma)$ is the inverse empirical CDF of $\widehat{\pi}(a|X)$. Thus, $\widehat{t}_{\text{pi}}$ ensures the denominator of the plug-in estimators $\widehat{\psi}(a; t)$ and $\widehat{\psi}^{\text{alt}}(a; t)$ are equal to $1-\gamma$. Similarly, in Section \ref{ss:estimatedThreshold} we studied an estimator, $\widehat{t}$ in (\ref{eqn:estimatedT}), that ensures the denominator of $\widehat{\psi}_h(a; t)$ is equal to $1-\gamma$. We consider these choices for $t$ and the properties of the resulting estimators.

We consider $\gamma = 0.2$, such that 20\% of subjects are trimmed. We simulated $\widehat{\pi}(a|X)$ and $\widehat{\mu}(X, a)$ as in the previous subsection, and then computed $\widehat{t}_{\text{pi}}$ accordingly. Meanwhile, to compute $\widehat{t}$, we computed $\widehat{\psi}_h^{\text{den}}(a; t)$ for $t \in [0, 0.5]$ in increments of 0.005 and selected the smallest $t$ such that $\widehat{\psi}_h^{\text{den}}(a; t) \leq 1-\gamma$. These threshold estimators target true thresholds $t_0^{\text{pi}}$ and $t_0$, respectively. Thus, the TATE and STATE estimands are $\psi(a; t_0^{\text{pi}})$ and $\psi_h(a; t_0)$.

Figure \ref{fig:simRMSEEstT} displays the RMSE for $\widehat{\psi}(a; \widehat{t}_{\text{pi}})$, $\widehat{\psi}^{\text{alt}}(a; \widehat{t}_{\text{pi}})$, and $\widehat{\psi}_h(a; \widehat{t})$. Again our estimator performs well across convergence rates, whereas the plug-in $\widehat{\psi}(a; \widehat{t}_{\text{pi}})$ performs poorly for slow rates. Meanwhile, the EIF-based plug-in $\widehat{\psi}^{\text{alt}}(a; \widehat{t}_{\text{pi}})$ performs worse than our estimator, whereas it performed similarly in Section \ref{ss:simFixedT}. The reason is that, compared to Section \ref{ss:simFixedT}, here we are performing a more moderate amount of trimming for extreme treatment values with positivity violations. Thus, small propensity scores still remain when estimating causal effects, which can adversely affect the doubly robust estimator involved in $\widehat{\psi}^{\text{alt}}(a; \widehat{t}_{\text{pi}})$.

Finally, Figure \ref{fig:simCoverageEstT} displays the coverage results. For the trimmed doubly robust estimator, we used our interval (\ref{eqn:estTCI}) from Section \ref{ss:estimatedThreshold}. For the EIF-based trimmed plug-in, we computed the typical CI for the doubly robust estimator, but within the trimmed sample. Similar to Section \ref{ss:simFixedT}, the EIF-based trimmed plug-in exhibits slightly less coverage, but coverage for both estimators approaches the nominal level as the convergence rate increases.

In summary, our estimators from Section \ref{s:EIFs} performed well in terms of RMSE and coverage whether $t$ was fixed or estimated. Meanwhile, the typical doubly robust estimator performed poorly when propensity scores were small, and the trimmed plug-in estimator performed poorly when nuisance functions' convergence rate was slow. Finally, computing a doubly robust estimator on a trimmed sample tended to perform well, but could still be susceptible to small propensity scores if trimming was not extensive.

\section{The Effect of Smoking on Medical Expenditure} \label{s:application}

In order to demonstrate our trimming approach with a real application, we consider data from the 1987 National Medical Expenditure Survey (NMES), which has been used to study the causal impact of smoking on medical expenses among people who smoke \citep{johnson2003disease,imai2004causal,zhao2020propensity}. In these works, the outcome variable is log annual medical expenditure (in United States dollars). Meanwhile, the treatment variable is a continuous measure of smoking use, defined as
\begin{align}
  A = \log \left( 0.05 \times \text{number of cigarettes per day} \times \text{number of years smoked} \right), \label{eqn:packyear}
\end{align}
where 0.05 is used to measure packs of cigarettes. Finally, the covariates $X$ include: categorical measures of gender, race, marriage status, education, geographic region, income, and seatbelt use; and quantitative measures of age when surveyed, age when the subject started smoking, and their quadratic transformations. Following previous works, we focus on the subset of observations without missing data whose medical expenditure is greater than zero.\footnote{\cite{johnson2003disease} found that addressing missing data via multiple imputation did not notably change results. Thus, \cite{imai2004causal} and \cite{zhao2020propensity} conducted complete-case analyses. Meanwhile, \cite{zhao2020propensity} noted that accounting for subjects with zero expenditure would require modeling the probability $P(Y > 0 | A, X)$ and distribution $p(Y | A, X, Y > 0)$. Like them, we only consider this second model to focus on the challenges of dose-response estimation.} This results in a final set of 8263 subjects. In our analyses we also incorporate subjects' survey-sampling weights provided within the NMES dataset. Replication materials for our analyses can be found at \href{https://github.com/zjbranson/contTreatments}{https://github.com/zjbranson/contTreatments}.

The goal is to determine whether a change in smoking use causes a change in medical expenditure. \cite{imai2004causal} estimated a marginal version of the dose-response curve, and \cite{zhao2020propensity} extended their methodology to estimate the full curve. \cite{zhao2020propensity} noted that if consistency, no unmeasured confounding, and positivity hold, then $\mathbb{E}[Y(a)] = \mathbb{E}[\mathbb{E}\{Y | \pi(A|X), A = a\}]$. \cite{zhao2020propensity} assumed $\pi(A|X)$ was parameterized by $\theta$ and proposed regressing $Y$ onto $\widehat{\theta}$ and $A$. They used linear regression to estimate the propensity score and splines for the second-stage regression. Instead, we consider flexible models for both the propensity score and outcome regression.

We first estimated $m(X) = \mathbb{E}[A | X]$ and $s(X) = \text{Var}(A |X)$ with the \texttt{SuperLearner} R package \citep{van2007super} using an ensemble of adaptive regression splines, generalized linear models, generalized additive models, regression trees, and random forests. Then, we computed $\widehat{\pi}(a|X)$ as the density at $a$ for the distribution $N(\widehat{m}(X), \widehat{s}(X))$. A similar procedure was done in \cite{zhao2020propensity}, but with linear regression for $m(X)$ and a homoskedasticity assumption for $s(X)$. Figure \ref{fig:psEsts} displays the estimated propensity scores across treatment values. Many estimated propensity scores are close to 0 for low treatments, suggesting that this area of the dose-response curve is difficult to estimate precisely. Although \cite{zhao2020propensity} did not focus on positivity violations for this application, this validates their emphasis to ``advise caution in that all available methods [for causal dose-response estimation] can be biased by model misspecification and extrapolation.''

\begin{figure}
\centering
  \includegraphics[scale=0.5]{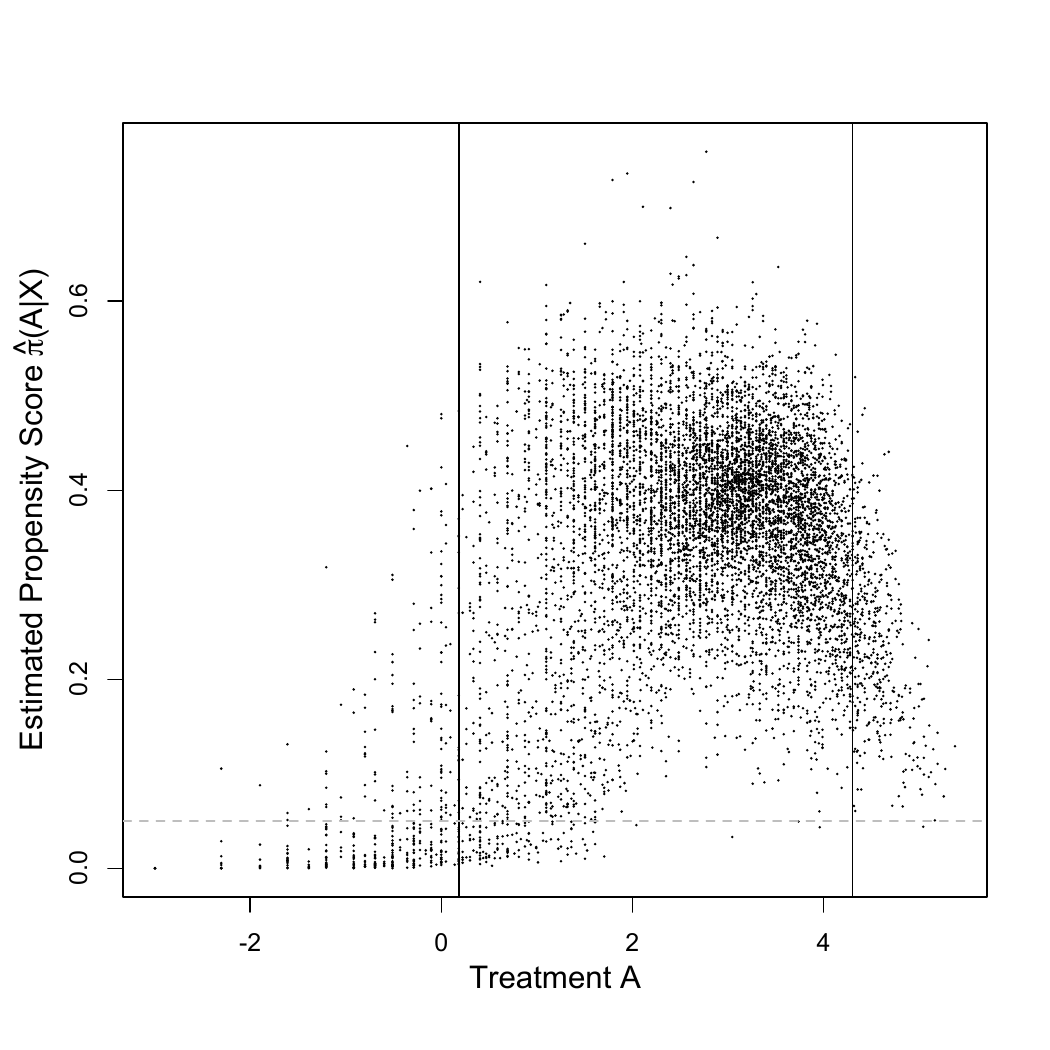}
  \caption{Scatterplot of $A$ and estimated propensity scores $\widehat{\pi}(A|X)$. Horizontal dashed line denotes $t = 0.05$; vertical lines denote the 5\% and 95\% quantiles of $A$.}
  \label{fig:psEsts}
\end{figure}

Similar to \cite{zhao2020propensity}, we estimate causal effects at 10 equally-spaced points between the 5\% and 95\% quantiles of $A$. We consider the doubly robust estimator without trimming, $\widehat{\psi}_h(a)$ in (\ref{eqn:psihHat}), and with trimming in (\ref{eqn:eifRatioTwoEsts}), $\widehat{\psi}_h(a; t)$. For both estimators, we again used the ensemble approach described above to estimate $\mu(x,a) = \mathbb{E}[Y | X = x, A = a]$, and used the cross-fitting procedure described in Section \ref{ss:fixedThreshold} with two splits of equal size. Figure \ref{fig:psEsts} suggests a fixed trimming threshold $t = 0.05$, which will result in some trimming for low treatments. To choose $h$, we implemented the procedure in Section \ref{ss:bandwidth}, where we computed the estimated risk (\ref{eqn:estimatedRisk}) for $h \in \{0.05, 0.06, \dots, 1.99, 2\}$. This resulted in $\widehat{h} = 0.92$ for $\widehat{\psi}_h(a; t)$. We followed an analogous process for $\widehat{\psi}_h(a)$, which resulted in $\widehat{h} = 0.21$. We present results for both bandwidths below. To choose the smoothing parameter $\epsilon$, we implemented the procedure in Section \ref{ss:epsilon}, where we computed the estimated entropy (\ref{eqn:estimatedEntropy}) for $\epsilon = 10^{-c}$ where $c \in \{1, 1.2, \dots, 4.8, 5\}$; we set $\epsilon = 10^{-2}$ because it results in an entropy close to 0.05. Following recommendations from \cite{yang2018asymptotic}, we also implemented our analyses for several choices $\epsilon$, and found that results were quite similar to those presented here. Visualizations of the estimated risk and entropy are in the Supplementary Material.

Figure \ref{fig:appEstH21} displays point estimates and 95\% confidence intervals for $\widehat{\psi}_h(a)$ and $\widehat{\psi}_h(a; t)$ when $h = 0.21$, i.e., the optimal bandwidth for $\widehat{\psi}_h(a)$. The non-trimmed estimator is very unstable in areas where propensity scores are close to 0, whereas the trimmed estimator exhibits notably narrower confidence intervals. Figure \ref{fig:appEstH92} displays results when $h = 0.92$, i.e., the optimal bandwidth for $\widehat{\psi}_h(a; t)$. The non-trimmed estimator is completely unstable, because the larger bandwidth makes small propensity scores influence more areas of the dose-response curve. Meanwhile, results for the trimmed estimator are quite similar for both bandwidths, although $h = 0.92$ leads to slightly tighter confidence intervals.

\begin{figure}
\centering
\begin{subfigure}[b]{0.475\textwidth}
\centering
  \includegraphics[scale=0.4]{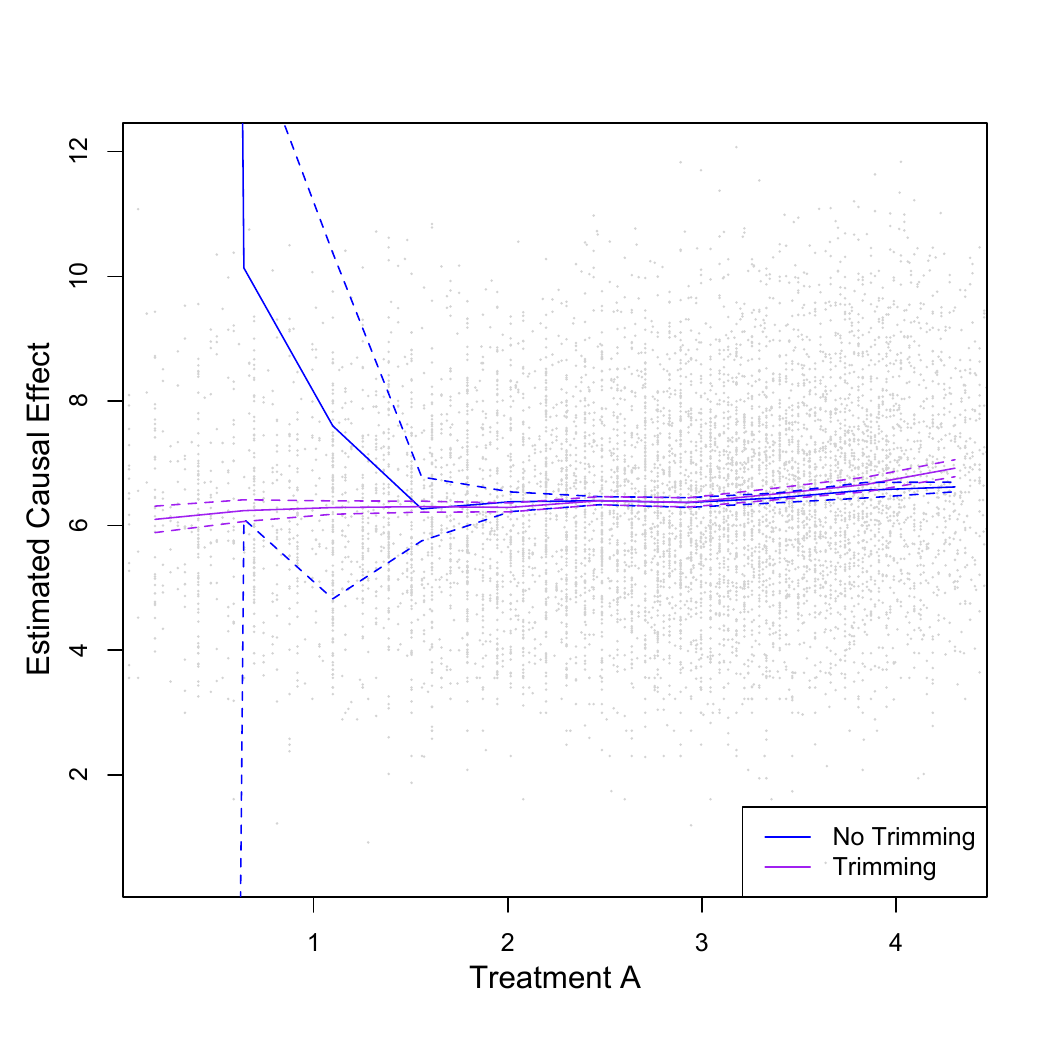}
  \caption{Results when $h = 0.21$.}
  \label{fig:appEstH21}
\end{subfigure}
\hspace{0.1in}
\begin{subfigure}[b]{0.475\textwidth}
\centering
  \includegraphics[scale=0.4]{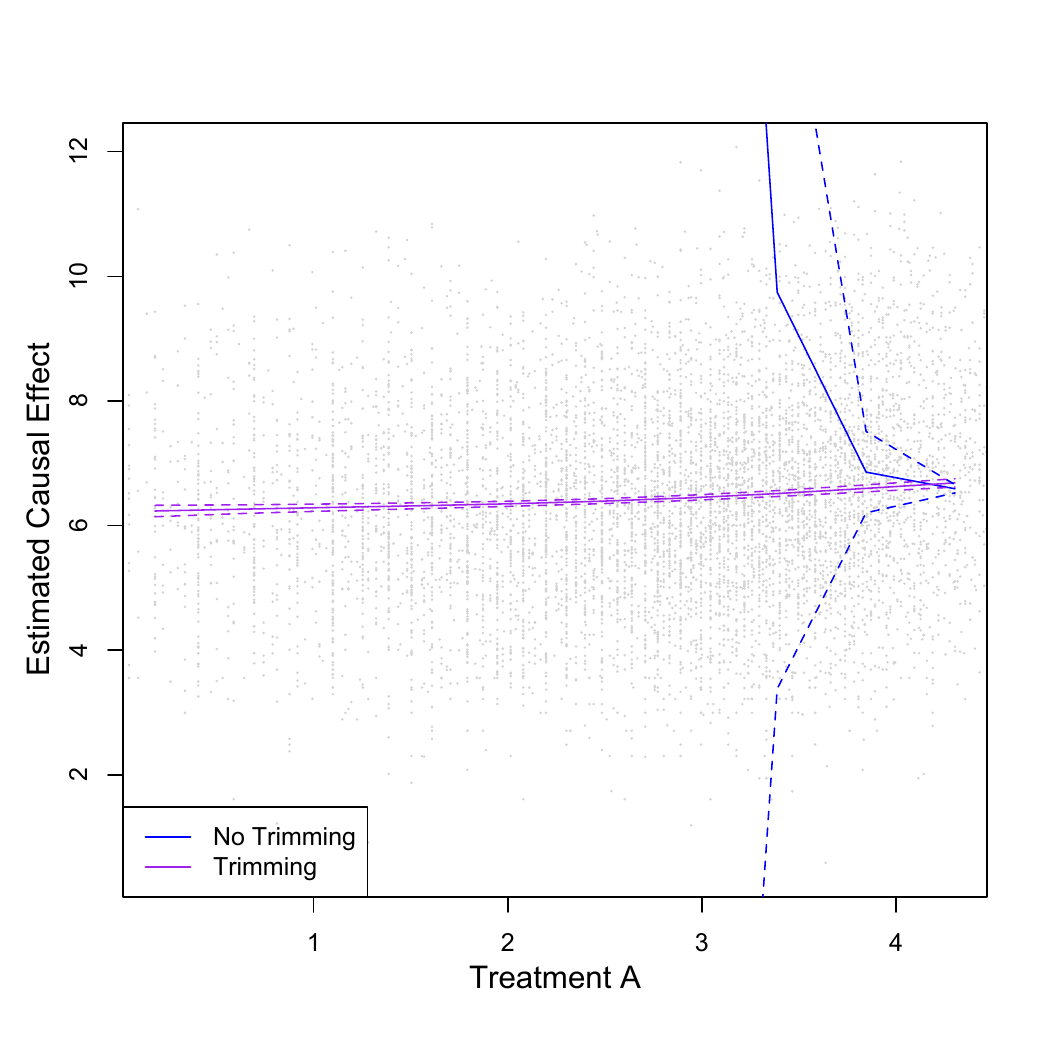}
  \caption{Results when $h = 0.92$.}
  \label{fig:appEstH92}
\end{subfigure}
\begin{subfigure}[b]{0.475\textwidth}
  \centering
  \includegraphics[scale=0.4]{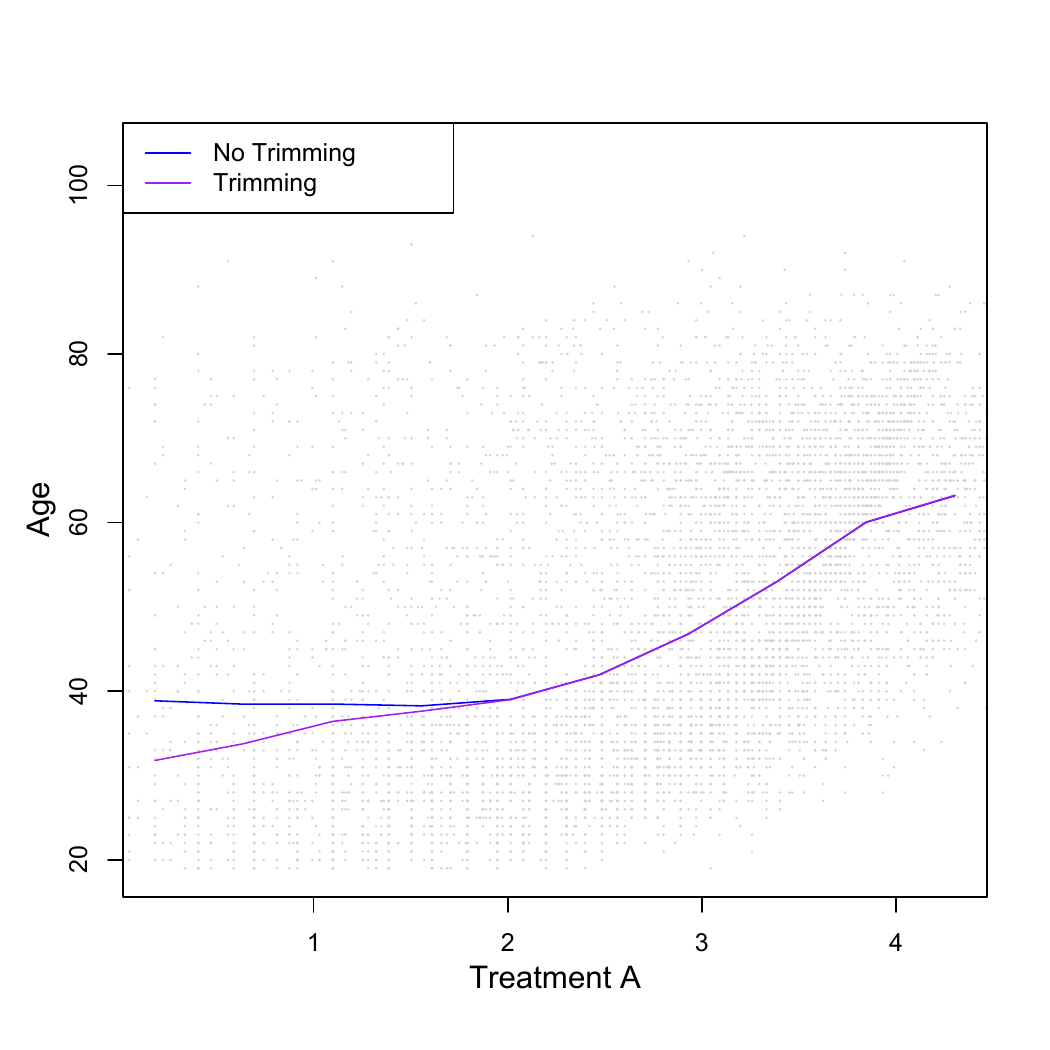}
    \caption{Smoothed age when $h = 0.21$.}
  \label{fig:smoothedAgeH21}
  \end{subfigure}
  \begin{subfigure}[b]{0.475\textwidth}
  \centering
  \includegraphics[scale=0.4]{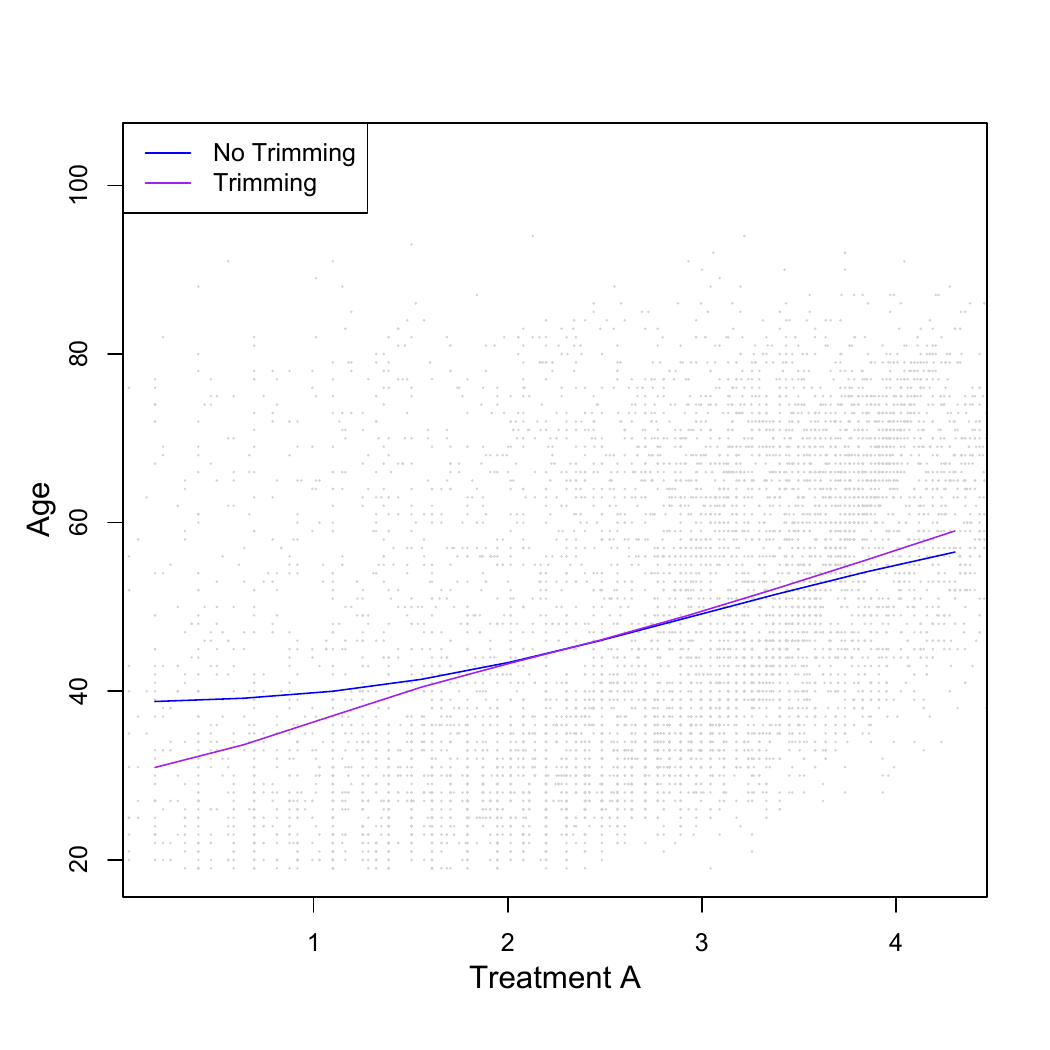}
  \caption{Smoothed age when $h = 0.92$.}
  \label{fig:smoothedAgeH92}
  \end{subfigure}
\caption{Estimated causal effects (top) and smoothed average age (bottom) when $h = 0.21$ (left) and $h = 0.92$ (right) for the doubly robust estimator without trimming (blue) and with trimming (purple). Point estimates are solid lines, and 95\% confidence intervals are dashed lines. Dots denote $(A,Y)$ observations (top) and $(A, X)$ observations (bottom).}
\label{fig:appEst}
\end{figure}

The trimmed estimator suggests that estimated average medical expenditure decreases very slightly as smoking decreases, but plateaus around $a = 1$. For example, two treatment values considered were $a = 4.30$ and $a = 2.01$, which corresponds to $\exp(4.30) \approx 73.70$ versus $\exp(2.01) \approx 7.46$ pack-years. For the trimmed estimator with $h = 0.92$, the corresponding confidence intervals were $(\$748.59, \$855.64)$ and $(\$549.40, \$596.59)$, suggesting a statistically significant decrease, with a point estimate of $\exp\{\widehat{\psi}_h(4.30; t)\} - \exp\{\widehat{\psi}_h(2.01; t)\} = \$227.82$. That said, this estimate (in 1987 US dollars) may be considered modest. Meanwhile, confidence intervals for $\exp\{\widehat{\psi}_h(a; t)\}$ overlap substantially for $a < 2$.

Results using the method from \cite{zhao2020propensity} are similar; indeed, the estimated propensity scores in Figure \ref{fig:psEsts} are similar to those estimated with linear regression, as in \cite{zhao2020propensity}. However, unlike $\widehat{\psi}_h(a)$, their estimator was not sensitive to positivity violations, because they studied a plug-in estimator based on the identification result $\mathbb{E}[Y(a)] = \mathbb{E}[\mathbb{E}\{Y | \pi(A|X), A = a\}]$. In Section \ref{s:simulations}, we saw that plug-in estimators' variance can be insensitive to positivity violations, but bias can be severe when nuisance functions are estimated at slow rates. The fact that our results are similar may suggest that propensity scores could be adequately modeled with parametric models for this dataset.

Figures \ref{fig:appEstH21} and \ref{fig:appEstH92} demonstrate that the trimmed estimator $\widehat{\psi}_h(a; t)$ is more precise than $\widehat{\psi}_h(a)$ for most treatment values, especially those where there are positivity violations. However, these estimators target different estimands: $\widehat{\psi}_h(a; t)$ and $\widehat{\psi}_h(a)$ target causal effects for the trimmed sample and full sample, respectively. Thus, to aid interpretation, it's useful to understand how the trimmed sample differs from the full sample. We computed kernel-smoothed covariate means before and after trimming, defined as
\begin{align*}
  \bar{X}_h(a) = \frac{\sum_{i=1}^n K_h(A_i - a) X_i}{\sum_{i=1}^n K_h(A_i - a)}, \hspace{0.1in} \text{and } \bar{X}_h(a; t) = \frac{\sum_{i=1}^n K_h(A_i - a) S(\widehat{\pi}(a|X_i), t) X_i}{\sum_{i=1}^n K_h(A_i - a) S(\widehat{\pi}(a|X_i), t)}.
\end{align*}
Figures \ref{fig:smoothedAgeH21} and \ref{fig:smoothedAgeH92} displays $\bar{X}_h(a)$ and $\bar{X}_h(a; t)$ for subjects' age when surveyed, which previous works found is the covariate most associated with the outcome. The trimmed sample is much younger for low treatments, and this difference between the trimmed sample and full sample increases as the treatment decreases. This complicates interpretation, because it indicates that the trimmed population changes across treatment values, which is a by-product of trimming separately for each treatment value. Nonetheless, because the discrepancy between the trimmed population and full population is monotonic in age, we can still interpret our analysis as precisely estimating causal effects for a younger sample.

To our knowledge, this complication in interpreting trimmed causal effects for continuous treatments has not been previously acknowledged, perhaps because trimming has focused on binary treatments. One option is to trim subjects whose propensity score is small for any treatment \citep{colangelo2020double}, but when we attempted this for this dataset, this resulted in trimming almost all subjects. In any case, when estimating a dose-response curve after trimming, we recommend visualizing how the trimmed population compares to the full population across treatment values, as we have done here. As we discuss below, one valuable direction for future work is to explore how to define a readily interpretable but also estimable trimmed dose-response curve.

\section{Discussion and Conclusion} \label{s:conclusion}

Although the positivity assumption is commonly invoked to identify and estimate causal effects, violations are also common, thereby destabilizing many estimators. When the treatment is continuous, positivity violations are likely a greater concern. We derived estimators for trimmed causal effects for continuous treatments, which incorporate kernel smoothing and a smoothed trimming indicator. These estimators exhibit doubly robust style guarantees, such that we can conduct valid inference even when the propensity score and outcome regression are estimated at slower-than-parameteric rates. We allowed the trimming threshold to be fixed or estimated as a quantile of the propensity score, and we provided data-driven ways to select smoothing parameters. Through a simulation study and an application, we found we could precisely estimate trimmed causal effects even when some propensity scores are small and nonparametric models are used.

That said, this work has several limitations. First, our estimators only target a trimmed dose-response curve, but one may be interested in the full-population dose-response curve. This discrepancy in estimands can be framed as ``trimming bias,'' and some works have employed bias-correction approaches to target full-population estimands \citep{chaudhuri2014heavy,ma2020robust}. One promising approach is that of \cite{sasaki2022estimation}, who use sieve estimators to model the trimmed causal effect as a function of the proportion of trimmed subjects, and then conduct a bias correction when this proportion is zero. Although this approach requires additional modeling assumptions, one could potentially use it in conjunction with our estimators to target the full-population dose-response curve.

Second, interpreting the dose-response curve after trimming can be challenging, because the trimmed population may change across treatment values in complex ways, especially for high-dimensional covariates. One could create a single trimmed population by trimming any subject whose propensity score is small for any treatment \citep{colangelo2020double}, but this may result in trimming most subjects, as was the case in our application. Alternatively, one could only estimate effects for treatment values where positivity is not violated---i.e., trimming treatment values instead of subjects---but this is unsatisfactory if one is interested in effects for all treatment values. It would be useful to develop tools that diagnose how a trimmed sample differs from a full population across treatments to aid interpretation.

Finally, our trimmed estimand depends on the threshold $t$ and smoothing parameters $h$ and $\epsilon$. We allowed $t$ to be estimated and derived confidence intervals which reflect uncertainty in estimating the threshold. Meanwhile, we provided data-driven procedures for selecting $h$ and $\epsilon$, but our confidence intervals do not reflect uncertainty in estimating these parameters. Researchers likely want $h$ and $\epsilon$ to be small, such that smoothing does not drastically change the estimand, but making $h$ and $\epsilon$ very small can adversely affect the variance of our estimators. It would be useful to derive an optimal way to choose these parameters, while also having inference reflect the uncertainty involved in this choice.

\begin{singlespace}
\bibliographystyle{apa}

\bibliography{trimmingBib}
\end{singlespace}

\newpage

\begin{center}
{\large\bf SUPPLEMENTARY MATERIAL}
\end{center}

\begin{description}

\item Section \ref{s:appendixEIFsHow}: Description of how efficient influence functions are derived throughout proofs in the supplementary materials.

\item Section \ref{as:lemmaNumDenomEIF}: Proof of Theorem \ref{thm:numDenomEIFs}.

\item Section \ref{s:remainderTermsFixedT}: Analysis of the remainder terms involved in the von Mises expansions for Theorem \ref{thm:numDenomEIFs}. This establishes sufficient conditions for the remainder terms to be $o_p(n^{-1/2})$, which are used to prove Proposition \ref{prop:inferenceFixedT}.

\item Section \ref{as:propositionProof}: Proof of Proposition \ref{prop:inferenceFixedT}.

\item Section \ref{as:proofTheorem2}: Proof of Theorem \ref{thm:eifErrorEstT}.

\item Section \ref{s:appendixEIFRatio}: Efficient influence function for the ratio $\psi_h(a; t)$, defined in (\ref{eqn:causalEstimandSmoothed}).

\item Section \ref{s:appendixSimDetails}: Details about the simulation study. Specifically, descriptions of the functions $m(X)$ and $\mu(X)$ visualized in Figure \ref{fig:simFunctions}, and visualizations of dose-response curve estimands for the simulation.

\item Section \ref{as:smoothingParameters}: Visualizations for the estimated risk and entropy used to select the smoothing parameters $h$ and $\epsilon$ for the application in Section \ref{s:application}.

\item Section \ref{s:appendixBinaryTreatment}: Results for discrete and binary treatments. Specifically, efficient influence functions for trimmed causal effects when the treatment is discrete or binary, and simulation results for the binary treatment case.

\end{description}

\section{How Efficient Influence Functions are Derived} \label{s:appendixEIFsHow}

In this Supplementary Material we will derive several efficient influence functions (EIFs). In general, the EIF of a parameter or estimand $\psi$ (e.g., the STATE $\psi_{h}(a; t)$ in (\ref{eqn:causalEstimandSmoothed})) acts as the pathwise derivative of the functional $\psi$ \citep{newey1994asymptotic,tsiatis2006semiparametric}. Importantly, a (centered) EIF $\xi$ satisfies the von Mises expansion \citep{mises1947asymptotic}, for two probability measures $P$ and $\bar{P}$:
\begin{align}
  \psi(\bar{P}) - \psi(P) = \int \xi(\bar{P}) d(\bar{P} - P)(x) + R_2(\bar{P}, P) \label{eqn:vonMises2}
\end{align}
where $R_2(\bar{P}, P)$ is a second-order remainder of a functional Taylor expansion \citep{hines2022demystifying,kennedy2022semiparametric}. Thus, one can first derive a potential EIF $\xi$, and then verify that the von Mises expansion (\ref{eqn:vonMises2}) holds by establishing that the remainder term $R_2(\bar{P}, P)$ is indeed second-order in the probability measures $\bar{P}$ and $P$. One common way to derive potential EIFs is via Gateaux derivatives \citep{ichimura2022influence,hines2022demystifying}. Instead, we will follow ``tricks'' for deriving influence functions, proposed in \cite{kennedy2022semiparametric}, which recognize that influence fuctions---as a pathwise derivative---obey differentiation rules (e.g. product and chain rules). Specifically, the ``tricks'' we use are:
\begin{enumerate}
  \item \textit{Discretization trick}: Momentarily treat the data as discrete when deriving influence functions.
  \item \textit{Product rule}: For two estimands $\psi_1$ and $\psi_2$, the EIF of their product is $\mathbb{IF}(\psi_1 \psi_2) = \mathbb{IF}(\psi_1) \psi_2 + \psi_1 \mathbb{IF}(\psi_2)$.
  \item \textit{Chain rule}: For an estimand $\psi$ and function $f(\psi)$, its EIF is $\mathbb{IF}(f(\psi)) = f'(\psi) \mathbb{IF}(\psi)$, where $f'(\psi)$ is the derivative of $f$ with respect to $\psi$.
\end{enumerate}
See \cite{kennedy2022semiparametric} for details why these ``tricks'' are justified for deriving EIFs. After deriving a proposed EIF $\xi$, we confirm that the von Mises expansion (\ref{eqn:vonMises2}) holds, thereby verifying that our proposed $\xi$ is indeed the EIF for the estimand $\psi$ of interest. Thus, the ``tricks'' are simply useful tools for finding potential EIFs, which are then verified rigorously. Throughout, we will use the notation $\mathbb{IF}(\psi)$ to denote the EIF of $\psi$.

\section{Proof of Theorem \ref{thm:numDenomEIFs}} \label{as:lemmaNumDenomEIF}

First, let

\begin{align*}
  \psi_{h}^{\text{num}}(a; t) &= \int \int K_h(a_0 - a) S(\pi(a_0|x), t) \mu(x,a_0) da_0 dP(x) \\
  \psi_{h}^{\text{den}}(a; t) &= \int \int K_h(a_0 - a) S(\pi(a_0|x), t) da_0 dP(x)
\end{align*}
Our goal is to derive the EIFs for these two quantities. First, note that when $A$ and $X$ are discrete, we have the following influence functions:
\begin{align*}
  \mathbb{IF}\{\mu(x,a)\} &= \frac{\mathbb{I}(X=x, A = a)}{\pi(a|x)p(x)}(Y - \mu(x,a)) \\
  \mathbb{IF}\{S(\pi(a|x), t)\} &= \frac{\partial S(\pi(a|x), t)}{\partial \pi} \mathbb{IF}\{\pi(a|x)\} \\
   &= \frac{\partial S(\pi(a|x), t)}{\partial \pi} \frac{\mathbb{I}(X = x)}{P(X=x)}(\mathbb{I}(A = a) - \pi(a|x)) \\
   \mathbb{IF}\{p(x)\} &= \mathbb{I}(X = x) - p(x)
\end{align*}
First we'll derive the EIF for $\psi_{h}^{\text{den}}(a; t)$. We have the following:
\begin{align*}
  \mathbb{IF}\{\psi_{h}^{\text{den}}(a; t)\} &= \mathbb{IF} \left\{ \sum_x \sum_{a_0} K_h(a_0 - a) S(\pi(a_0|x), t) p(x) \right\} \tag{by discretization trick} \\
  &= \sum_x \sum_{a_0} K_h(a_0 - a) \mathbb{IF} \left\{ S(\pi(a_0|x), t) p(x)\right\} \tag{\text{$K_h(\cdot)$ is constant}} \\
  &= \sum_x \sum_{a_0} K_h(a_0 - a) \left[ \mathbb{IF} \{ S(\pi(a_0|x), t) \} p(x) + S(\pi(a_0 | x)) \mathbb{IF}\{p(x)\} \right] \tag{product rule} \\
  &= \sum_x \sum_{a_0} K_h(a_0 - a) \frac{\partial S(\pi(a_0|x), t)}{\partial \pi} \mathbb{I}(X = x)(\mathbb{I}(A = a_0) - \pi(a_0|x)) \\ &+ \sum_x \sum_{a_0} K_h(a_0 - a) S(\pi(a_0 | x)) ( \mathbb{I}(X = x) - p(x) ) \tag{plugging in IFs} \\
  &= K_h(A - a) \frac{\partial S(\pi(A | X), t) }{\partial \pi} - \sum_{a_0} K_h(a_0 - a) \frac{\partial S(\pi(a_0 | X), t) }{\partial \pi} \pi(a_0|X) \\ &+ \sum_{a_0} K_h(a_0 - a) S(\pi(a_0 | X), t) - \psi_{h}^{\text{den}}(a; t) \tag{simplifying} \\
  &= K_h(A - a) \frac{\partial S(\pi(A | X), t) }{\partial \pi} - \int K_h(a_0 - a) \frac{\partial S(\pi(a_0 | X), t) }{\partial \pi} \pi(a_0|X) da_0 \\ &+ \int K_h(a_0 - a) S(\pi(a_0 | X), t) da_0 - \psi_{h}^{\text{den}}(a; t) \tag{sums to integrals}
\end{align*}
The above quantity is the EIF of $\psi_h^{\text{den}}(a; t)$, which we'll verify in the next subsection. Note that, in Theorem \ref{thm:numDenomEIFs}, $\varphi_h^{\text{den}}(a; t) = \mathbb{IF}\{\psiden(a; t)\} + \psiden(a; t)$ is the uncentered EIF.

Now we'll derive the EIF for $\psi_{h}^{\text{num}}(a; t)$. First, note that we have the following, by the discretization trick and product rule:
\begin{align*}
  \mathbb{IF}\{\psi_{h}^{\text{num}}(a; t)\} &= \mathbb{IF} \left\{ \sum_x \sum_{a_0} K_h(a_0 - a) S(\pi(a_0|x), t) \mu(x, a_0) p(x) \right\} \\ 
  &=  \sum_x \sum_{a_0} K_h(a_0 - a) \mathbb{IF}\{\mu(x,a_0)\} S(\pi(a_0|x), t) p(x) \tag{1} \\
  &+ \sum_x \sum_{a_0} K_h(a_0 - a) \mu(x,a_0) \mathbb{IF}\{S(\pi(a_0|x), t)\} p(x) \tag{2} \\
  &+ \sum_x \sum_{a_0} K_h(a_0 - a) \mu(x,a_0)S(\pi(a_0|x), t)\mathbb{IF}\{p(x)\} \tag{3}
\end{align*}
Now we will compute terms (1), (2), and (3). First, for (1), we have:
\begin{align*}
  (1) &= \sum_x \sum_{a_0} K_h(a_0 - a) \frac{\mathbb{I}(X=x, A = a_0)}{\pi(a_0|x)}\{Y - \mu(x,a_0)\} S(\pi(a_0|x), t) \\
  &= K_h(A - a) \frac{\{Y - \mu(X,A)\} S(\pi(A|X), t)}{\pi(A|X)}
\end{align*}
Meanwhile, for (2) and (3), note that the same terms appeared in our work for the EIF of $\psi_{h}^{\text{den}}(a; t)$ (where now we multiply these two terms by $\mu(x,a_0)$). Thus, for (2), we have:
\begin{align*}
  (2) =  K_h(A - a) \mu(X,A) \frac{\partial S(\pi(A | X), t) }{\partial \pi} - \int K_h(a_0 - a) \mu(X,a_0) \frac{\partial S(\pi(a_0|x), t)}{\partial \pi} \pi(a_0|X)) da_0
\end{align*}
and for (3), we have:
\begin{align*}
  (3) = \int K_h(a_0 - a) S(\pi(a_0 | X), t) \mu(X, a_0) da_0 - \psi_{h}^{\text{num}}(a; t)
\end{align*}
Plugging these all back into our original derivation of $\mathbb{IF}\{\psi_{h}^{\text{num}}(a; t)\}$, we have:
\begin{align*}
  \mathbb{IF}\{\psi_{h}^{\text{num}}(a; t)\} &= K_h(A - a) \frac{\{Y - \mu(X,A)\} S(\pi(A|X), t)}{\pi(A|X)} \\
  &+ K_h(A - a) \mu(X,A) \frac{\partial S(\pi(A|X), t)}{\partial \pi} - \int K_h(a_0 - a) \mu(X,a_0) \frac{\partial S(\pi(a_0|X), t)}{\partial \pi} \pi(a_0|X)) da_0 \\
  &+ \int K_h(a_0 - a) S(\pi(a_0 | X), t) \mu(X, a_0) da_0 - \psi_{h}^{\text{num}}(a; t)
\end{align*}
The above quantity is the EIF of $\psi_h^{\text{num}}(a; t)$. Again, note that, in Theorem \ref{thm:numDenomEIFs}, $\varphi_h^{\text{num}}(a; t) = \mathbb{IF}\{\psinum(a; t)\} + \psinum(a; t)$ is the uncentered EIF.

This completes the proof. To formally verify that the above quantities are indeed EIFs, we must establish their second-order bias, which we do in the following section.

\section{Analyzing the Remainder Terms of Theorem \ref{thm:numDenomEIFs}} \label{s:remainderTermsFixedT}

First we must extend our notation such it can incorporate probability distributions $P$ and $\bar{P}$. Our estimands, for a particular probability distribution $P$, are defined as:
\begin{align*}
  \psi_{h}^{\text{num}}(a; t; P) &= \int \int K_h(a_0 - a) S(\pi(a_0|x), t) \mu(x,a_0) da_0 dP(x) \\
  \psi_{h}^{\text{den}}(a; t; P) &= \int \int K_h(a_0 - a) S(\pi(a_0|x), t) da_0 dP(x)
\end{align*}
Note that, implicitly, $\mu$ and $\pi$ are also indexed by $P$. In the previous section (and in Theorem \ref{thm:numDenomEIFs}), we posit that the corresponding (centered) EIFs, indexed by $P$, are:
\begin{align*}
  \xi_{h}^{\text{num}}(a; t; P) &= K_h(A - a) \frac{\{Y - \mu(X,A)\} S(\pi(A|X), t)}{\pi(A|X)} \\
  &+ K_h(A - a) \mu(X,A) \frac{\partial S(\pi(A|X), t)}{\partial \pi} - \int K_h(a_0 - a) \mu(X,a_0) \frac{\partial S(\pi(a_0|X), t)}{\partial \pi} \pi(a_0|X) da_0 \\
  &+ \int K_h(a_0 - a) S(\pi(a_0 | X), t) \mu(X, a_0) da_0 - \psi_{h}^{\text{num}}(a; t; P) \\ \\
  \xi_{h}^{\text{den}}(a; t; P) &= K_h(A - a) \frac{\partial S(\pi(A | X), t) }{\partial \pi} - \int K_h(a_0 - a) \frac{\partial S(\pi(a_0 | X), t) }{\partial \pi} \pi(a_0|X) da_0 \\ &+ \int K_h(a_0 - a) S(\pi(a_0 | X), t) da_0 - \psi_{h}^{\text{den}}(a; t; P)
\end{align*}
Note that in order for the von Mises expansion (\ref{eqn:vonMises2}) to be valid for these quantities, it must be the case that $\mathbb{E}_P[\xi_{h}^{\text{num}}(a; t; P)] = 0$ and $\mathbb{E}_P[\xi_{h}^{\text{den}}(a; t; P)] = 0$. For completeness, we'll first confirm this, and then we will study the remainder terms for these quantities based on the von Mises expansion.

First let's start with $\xi_{h}^{\text{den}}(a; t; P)$. Note that
\begin{align*}
  \mathbb{E}_P \left[ K_h(A - a) \frac{\partial S(\pi(A | X), t) }{\partial \pi} \right] = \int \int K_h(a_0 - a) \frac{\partial S(\pi(a_0 | x), t) }{\partial \pi} \pi(a_0|x) da_0 dP(x)
\end{align*}
Thus, we can readily see that $\mathbb{E}_P[\xi_{h}^{\text{den}}(a; t; P)] = 0$. Meanwhile, by iterated expectations and the definition of $\mu(X,A) = \mathbb{E}[Y | X, A]$, we have
\begin{align*}
  \mathbb{E}_P \left[ K_h(A - a) \frac{\{Y - \mu(X,A)\} S(\pi(A|X), t)}{\pi(A|X)} \right] &= \mathbb{E}_P \left[ \mathbb{E}_P \left[ K_h(A - a) \frac{\{Y - \mu(X,A)\} S(\pi(A|X), t)}{\pi(A|X)} \bigg| X,A \right] \right] \\
  &= \mathbb{E}_P \left[ K_h(A - a) \frac{\{\mu(X,A) - \mu(X,A)\} S(\pi(A|X), t)}{\pi(A|X)} \right] \\
  &= 0
\end{align*}
Thus, we can also readily see that $\mathbb{E}_P[\xi_{h}^{\text{num}}(a; t; P)] = 0$.

\subsection{Second-Order Remainder for the Denominator Estimand}

We have the following lemma, which establishes that the remainder term for the EIF of $\psi_{h}^{\text{den}}(a; t)$ based on the von Mises expansion (\ref{eqn:vonMises2}) is second-order.

\begin{lemma} \label{lemma:secondOrderDenomFixedT}
  Consider the estimand $\psi_{h}^{\text{den}}(a; t)$ and two probability measures $P$ and $\bar{P}$. Let $\pi(a|x)$ and $\bar{\pi}(a|x)$ denote the propensity score under $P$ and $\bar{P}$, respectively. The remainder term is:
\begin{align*}
  R_2^{\text{den}}(\bar{P}, P) &= \psi_{h}^{\text{den}}(a; t; \bar{P}) - \psi_{h}^{\text{den}}(a; t; P) + \mathbb{E}_P\left[ \xi_{h}^{\text{den}}(a; t; \bar{P}) \right] \\
  &= -\int \int K_h(a_0 - a) \left[ \frac{1}{2}\frac{\partial^2 S(\bar{\pi}(a_0 | x), t)}{\partial \bar{\pi}^2} \left\{ \pi(a_0|x) - \bar{\pi}(a_0|x) \right\}^2 + R_3(\pi - \bar{\pi}) \right] da_0 dP(x)
 \end{align*} 
 where $R_3(\bar{\pi} - \pi)$ denotes the higher-order remainder (third-order and above) of the Taylor expansion of $S(\pi(a_0 | x), t)$ around $\bar{\pi}(a_0|x)$.
\end{lemma}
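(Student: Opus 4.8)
The plan is to evaluate the remainder directly from its definition, so the whole proof is a bookkeeping computation. First I would rearrange the von Mises expansion (\ref{eqn:vonMises2}). Since a centered EIF integrates to zero under its own law — here $\mathbb{E}_{\bar P}[\xi_{h}^{\text{den}}(a;t;\bar P)] = 0$, which is exactly the mean-zero calculation already carried out above applied with $\bar P$ in place of $P$ — the term $\int \xi_{h}^{\text{den}}(a;t;\bar P)\, d(\bar P - P)$ equals $-\mathbb{E}_P[\xi_{h}^{\text{den}}(a;t;\bar P)]$, giving
\begin{align*}
  R_2^{\text{den}}(\bar P, P) = \psi_{h}^{\text{den}}(a; t; \bar P) - \psi_{h}^{\text{den}}(a; t; P) + \mathbb{E}_P\!\left[ \xi_{h}^{\text{den}}(a; t; \bar P) \right].
\end{align*}
So it remains to compute $\mathbb{E}_P[\xi_{h}^{\text{den}}(a;t;\bar P)]$ and add the first two terms.

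Next I would take $\mathbb{E}_P$ of $\xi_{h}^{\text{den}}(a;t;\bar P)$ term by term. For the leading term, iterate expectations over $A\mid X$ under $P$: since $A\mid X \sim \pi(\cdot\mid X)$,
\begin{align*}
  \mathbb{E}_P\!\left[ K_h(A-a)\tfrac{\partial S(\bar\pi(A|X),t)}{\partial\bar\pi} \right] = \int\int K_h(a_0-a)\tfrac{\partial S(\bar\pi(a_0|x),t)}{\partial\bar\pi}\,\pi(a_0|x)\,da_0\,dP(x).
\end{align*}
The second term of $\xi_{h}^{\text{den}}$ contributes $-\int\int K_h(a_0-a)\,\partial S(\bar\pi(a_0|x),t)/\partial\bar\pi\,\bar\pi(a_0|x)\,da_0\,dP(x)$, so these two combine into an integral of $\partial S(\bar\pi,t)/\partial\bar\pi\cdot(\pi-\bar\pi)$. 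The third term contributes $\int\int K_h(a_0-a) S(\bar\pi(a_0|x),t)\,da_0\,dP(x)$, and the centering constant contributes $-\psi_{h}^{\text{den}}(a;t;\bar P)$. Adding back $\psi_{h}^{\text{den}}(a;t;\bar P) - \psi_{h}^{\text{den}}(a;t;P)$, the $\pm\psi_{h}^{\text{den}}(a;t;\bar P)$ cancel, $\psi_{h}^{\text{den}}(a;t;P)$ is absorbed as $-\int\int K_h S(\pi)\,da_0\,dP$, and what is left is
\begin{align*}
  R_2^{\text{den}}(\bar P, P) &= \int\int K_h(a_0-a)\Big[ S(\bar\pi(a_0|x),t) - S(\pi(a_0|x),t) \\
  &\qquad + \tfrac{\partial S(\bar\pi(a_0|x),t)}{\partial\bar\pi}\big( \pi(a_0|x) - \bar\pi(a_0|x) \big) \Big]\,da_0\,dP(x).
\end{align*}
Finally I would Taylor-expand $S(\pi(a_0|x),t)$, as a function of its first argument, around $\bar\pi(a_0|x)$: the zeroth- and first-order terms are cancelled exactly by the bracket above, leaving $-\tfrac12\,\partial^2 S(\bar\pi(a_0|x),t)/\partial\bar\pi^2\,\{\pi(a_0|x)-\bar\pi(a_0|x)\}^2 - R_3(\pi-\bar\pi)$ inside, which is the claimed formula. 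This expansion is legitimate because $S(\cdot,t)$ is assumed smooth (for instance $S(\cdot,t) = \Phi_\epsilon(\cdot - t)$ is $C^\infty$), so $R_3$ is a genuine third-and-higher-order remainder.

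The argument has no deep obstacle; the one thing to be careful about is tracking which outer measure ($P$ versus $\bar P$) multiplies each integrand, since the cancellation of $\int\int K_h(a_0-a) S(\bar\pi(a_0|x),t)\,da_0\,d\bar P(x)$ between $\psi_{h}^{\text{den}}(a;t;\bar P)$ and the centering constant is precisely what collapses the expression to a single $dP(x)$ integral of a pointwise Taylor remainder — a mislabeled measure would spoil it. A secondary point worth stating explicitly is the smoothness assumption on $S$ needed for the expansion with remainder; if one later wants a quantitative $o_P(n^{-1/2})$ bound (as in Proposition \ref{prop:inferenceFixedT}), one would additionally invoke boundedness of $\partial^2 S/\partial\pi^2$ and apply Cauchy--Schwarz to the $\int\int K_h(a_0-a)\{\pi(a_0|x)-\bar\pi(a_0|x)\}^2\,da_0\,dP(x)$ term, but for the identity as stated this is not required.
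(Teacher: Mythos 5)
Your proof is correct and follows essentially the same route as the paper's: compute $\mathbb{E}_P[\xi_{h}^{\text{den}}(a;t;\bar P)]$ term by term via iterated expectation over $A\mid X$ under $P$, cancel the $\pm\psi_{h}^{\text{den}}(a;t;\bar P)$ terms, absorb $\psi_{h}^{\text{den}}(a;t;P)$ as $-\int\int K_h S(\pi)\,da_0\,dP(x)$, and then Taylor-expand $S(\pi(a_0|x),t)$ around $\bar\pi(a_0|x)$ so the zeroth- and first-order terms cancel. Your added remarks (the mean-zero property of the centered EIF under $\bar P$ justifying the form of $R_2^{\text{den}}$, and the smoothness of $S(\cdot,t)$ needed for the expansion) are consistent with, and slightly more explicit than, the paper's presentation.
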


\begin{proof}
  
First, note that the expectation of $\xi_{h}^{\text{den}}(a; t; \bar{P})$ is
 \begin{align*}
  \mathbb{E}_P \left[ \xi_{h}^{\text{den}}(a; t; \bar{P}) \right] &= \mathbb{E}_P \left[ K_h(A - a) \frac{\partial S(\bar{\pi}(A | X), t) }{\partial \bar{\pi}} \right] - \mathbb{E}_P \left[ \int K_h(a_0 - a) \frac{\partial S(\bar{\pi}(a_0 | X), t) }{\partial \bar{\pi}} \bar{\pi}(a_0|X) da_0 \right] \\ &+ \mathbb{E}_P \left[ \int K_h(a_0 - a) S(\bar{\pi}(a_0|X), t) da_0 \right] - \mathbb{E}_P[\psi_{h}^{\text{den}}(a; t; \bar{P})] \\
  &= \int \int K_h(a_0 - a) \frac{\partial S(\bar{\pi}(a_0 | x), t) }{\partial \bar{\pi}} \pi(a_0|x) da_0 dP(x) \\ &- \int \int K_h(a_0 - a) \frac{\partial S(\bar{\pi}(a_0 | x), t) }{\partial \bar{\pi}} \bar{\pi}(a_0|x) da_0 dP(x) \\ &+ \int \int K_h(a_0 - a) S(\bar{\pi}(a_0|x), t) da_0 dP(x) - \psi_{h}^{\text{den}}(a; t; \bar{P}) \\
  &= \int \int K_h(a_0 - a) \frac{\partial S(\bar{\pi}(a_0 | x), t) }{\partial \bar{\pi}} \left\{ \pi(a_0|x) - \bar{\pi}(a_0|x) \right\} da_0 dP(x) \\ &+ \int \int K_h(a_0 - a) S(\bar{\pi}(a_0|x), t) da_0 dP(x) - \psi_{h}^{\text{den}}(a; t; \bar{P})
 \end{align*}
 Plugging this back into the decomposition of $R_2^{\text{den}}(\bar{P},P)$ above, we have
 \begin{align*}
  R_2^{\text{den}}(\bar{P},P) &= - \psi_{h}^{\text{den}}(a; t; P) + \int \int K_h(a_0 - a) \frac{\partial S(\bar{\pi}(a_0 | x), t) }{\partial \bar{\pi}} \left\{ \pi(a_0|x) - \bar{\pi}(a_0|x) \right\} da_0 dP(x) \\ &+ \int \int K_h(a_0 - a) S(\bar{\pi}(a_0|x), t) da_0 dP(x) \\
  &= \int \int K_h(a_0 - a) \frac{\partial S(\bar{\pi}(a_0 | x), t) }{\partial \bar{\pi}} \left\{ \pi(a_0|x) - \bar{\pi}(a_0|x) \right\} da_0 dP(x) \\ &+ \int \int K_h(a_0 - a) S(\bar{\pi}(a_0|x), t) da_0 dP(x) \\ &- \int \int K_h(a_0 - a) S(\pi(a_0|x), t) da_0 dP(x) \\
  &= \int \int K_h(a_0 - a) \frac{\partial S(\bar{\pi}(a_0 | x), t) }{\partial \bar{\pi}} \left\{ \pi(a_0|x) - \bar{\pi}(a_0|x) \right\} da_0 dP(x) \\ &+ \int \int K_h(a_0 - a) \left\{ S(\bar{\pi}(a_0|x), t) - S(\pi(a_0|x), t) \right\} da_0 dP(x)
 \end{align*}
 Now note that we can do a Taylor's expansion of $S(\pi(a_0 | x), t)$ around $\bar{\pi}(a_0|x)$ to obtain:
 \begin{align*}
  S(\pi(a_0 | x), t) &= S(\bar{\pi}(a_0 | x), t) + \frac{\partial S(\bar{\pi}(a_0 | x), t)}{\partial \bar{\pi}} \left\{ \pi(a_0|x) - \bar{\pi}(a_0|x) \right\} \\&+ \frac{1}{2}\frac{\partial^2 S(\bar{\pi}(a_0 | x), t)}{\partial \bar{\pi}^2} \left\{ \pi(a_0|x) - \bar{\pi}(a_0|x) \right\}^2 + R_3(\pi - \bar{\pi})
 \end{align*}
where $R_3(\bar{\pi} - \pi)$ denotes the higher-order remainder of the Taylor expansion of $S(\pi(a_0 | x), t)$ around $\bar{\pi}(a_0|x)$. Thus:
 \begin{align*}
  \frac{\partial S(\bar{\pi}(a_0 | x), t)}{\partial \bar{\pi}} \left\{ \pi(a_0|x) - \bar{\pi}(a_0|x) \right\} &= S(\pi(a_0|x), t) - S(\bar{\pi}(a_0|x), t) \\&- \frac{1}{2}\frac{\partial^2 S(\bar{\pi}(a_0 | x), t)}{\partial \bar{\pi}^2} \left\{ \pi(a_0|x) - \bar{\pi}(a_0|x) \right\}^2 - R_3(\pi - \bar{\pi})
 \end{align*}
Then, plugging this into the above for $R_2^{\text{den}}(\bar{P},P)$, we have:
\begin{align*}
  R_2^{\text{den}}(\bar{P},P) &= -\int \int K_h(a_0 - a) \left[ \frac{1}{2}\frac{\partial^2 S(\bar{\pi}(a_0 | x), t)}{\partial \bar{\pi}^2} \left\{ \pi(a_0|x) - \bar{\pi}(a_0|x) \right\}^2 + R_3(\pi - \bar{\pi}) \right] da_0 dP(x)
\end{align*}
which completes the proof.
\end{proof}

\subsection{Second-Order Remainder for the Numerator Estimand}

We have the following lemma, which establishes that the remainder term for the EIF of $\psi_{h}^{\text{num}}(a; t)$ based on the von Mises expansion (\ref{eqn:vonMises2}) is second-order.

\begin{lemma} \label{lemma:secondOrderNumFixedT}
  Consider the estimand $\varphi_{h}^{\text{num}}(a; t)$ and two probability measures $P$ and $\bar{P}$. Let $\pi(a|x)$ and $\bar{\pi}(a|x)$ denote the propensity score under $P$ and $\bar{P}$, respectively, and let $\mu(x,a)$ and $\bar{\mu}(x,a)$ denote the outcome regression function under $P$ and $\bar{P}$, respectively. The remainder term is:
  \footnotesize{
 \begin{align*}
  R_2^{\text{num}}(\bar{P},P) &= \psi_{h}^{\text{num}}(a; t; \bar{P}) - \psi_{h}^{\text{num}}(a; t; P) + \mathbb{E}_P\left[ \xi_{h}^{\text{num}}(a; t; \bar{P}) \right] \\ &= \int \int K_h(a_0 - a) \left\{ \frac{S(\bar{\pi}(a_0|x),t)}{\bar{\pi}(a_0|x)} - \frac{\partial S(\bar{\pi}(a_0|x),t)}{\partial \bar{\pi}} \right\} \{ \pi(a_0|x) - \bar{\pi}(a_0|x) \} \{\mu(x,a_0) - \bar{\mu}(x,a_0)\} da_0 dP(x) \\ &-\int \int K_h(a_0 - a) \mu(x,a_0) \left[ \frac{1}{2}\frac{\partial^2 S(\bar{\pi}(a_0 | x), t)}{\partial \bar{\pi}^2} \left\{ \pi(a_0|x) - \bar{\pi}(a_0|x) \right\}^2 + R_3(\pi - \bar{\pi}) \right] da_0 dP(x) 
 \end{align*}
 }
 where $R_3(\bar{\pi} - \pi)$ denotes the higher-order remainder (third-order and above) of the Taylor expansion of $S(\pi(a_0 | x), t)$ around $\bar{\pi}(a_0|x)$.
\end{lemma}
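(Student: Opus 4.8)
The plan is to imitate the argument already carried out for Lemma~\ref{lemma:secondOrderDenomFixedT}, taking advantage of the fact that three of the four correction pieces of $\xi_h^{\text{num}}$ are exactly the correction pieces of $\xi_h^{\text{den}}$ multiplied by the outcome regression $\mu(x,a_0)$. First I would expand $\mathbb{E}_P\{\xi_h^{\text{num}}(a;t;\bar P)\}$ term by term. For the inverse-weighting term, iterated expectations together with $\mathbb{E}_P[Y\mid X,A]=\mu(X,A)$ replaces $Y-\bar\mu(X,A)$ by $\mu(X,A)-\bar\mu(X,A)$, and then the change of measure $\mathbb{E}_P\{K_h(A-a)g(X,A)\}=\int\int K_h(a_0-a)g(x,a_0)\pi(a_0|x)\,da_0\,dP(x)$ turns the remaining $A$-expectation into an integral against the true propensity $\pi$. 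Applying the same change of measure to the $\partial S/\partial\bar\pi$ term attached to $K_h(A-a)$ produces a factor $\pi(a_0|x)$, while the other $\partial S/\partial\bar\pi$ term and the $S(\bar\pi,t)$ term are already expressed as integrals over $a_0$; combining these three gives $\int\int K_h(a_0-a)\bar\mu(x,a_0)\big[\frac{\partial S(\bar\pi(a_0|x),t)}{\partial\bar\pi}\{\pi(a_0|x)-\bar\pi(a_0|x)\}+S(\bar\pi(a_0|x),t)\big]\,da_0\,dP(x)$, which is exactly the denominator computation from Lemma~\ref{lemma:secondOrderDenomFixedT} with an extra $\bar\mu$ inserted.

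Next I would add $\psi_h^{\text{num}}(a;t;\bar P)-\psi_h^{\text{num}}(a;t;P)$ back in; the $+\psi_h^{\text{num}}(a;t;\bar P)$ cancels the centering term $-\psi_h^{\text{num}}(a;t;\bar P)$ inside $\mathbb{E}_P\{\xi_h^{\text{num}}(a;t;\bar P)\}$, leaving $-\psi_h^{\text{num}}(a;t;P)=-\int\int K_h(a_0-a)S(\pi(a_0|x),t)\mu(x,a_0)\,da_0\,dP(x)$. At this stage $R_2^{\text{num}}(\bar P,P)$ is a single double integral of $K_h(a_0-a)$ against the integrand (suppressing the arguments $(x,a_0)$ and $t$)
\[
\frac{\{\mu-\bar\mu\}S(\bar\pi)}{\bar\pi}\,\pi+\bar\mu\,\frac{\partial S(\bar\pi)}{\partial\bar\pi}(\pi-\bar\pi)+S(\bar\pi)\bar\mu-S(\pi)\mu.
\]
I would then substitute the second-order Taylor expansion $S(\pi)=S(\bar\pi)+\frac{\partial S(\bar\pi)}{\partial\bar\pi}(\pi-\bar\pi)+\frac12\frac{\partial^2 S(\bar\pi)}{\partial\bar\pi^2}(\pi-\bar\pi)^2+R_3(\pi-\bar\pi)$ (the same expansion used in Lemma~\ref{lemma:secondOrderDenomFixedT}) into $-S(\pi)\mu$: its zeroth- and first-order pieces combine with $S(\bar\pi)\bar\mu$ and $\bar\mu\frac{\partial S(\bar\pi)}{\partial\bar\pi}(\pi-\bar\pi)$ to give $-S(\bar\pi)(\mu-\bar\mu)-\frac{\partial S(\bar\pi)}{\partial\bar\pi}(\mu-\bar\mu)(\pi-\bar\pi)$, while its quadratic and higher pieces give the $-\mu\big[\frac12\frac{\partial^2 S(\bar\pi)}{\partial\bar\pi^2}(\pi-\bar\pi)^2+R_3(\pi-\bar\pi)\big]$ contribution.

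The only genuinely new algebraic step is $\frac{S(\bar\pi)}{\bar\pi}\,\pi-S(\bar\pi)=\frac{S(\bar\pi)}{\bar\pi}(\pi-\bar\pi)$, after which the $(\mu-\bar\mu)$-terms collapse to $\big(\frac{S(\bar\pi)}{\bar\pi}-\frac{\partial S(\bar\pi)}{\partial\bar\pi}\big)(\pi-\bar\pi)(\mu-\bar\mu)$, the claimed product form, and the remaining terms are already in the stated shape; reinstating the arguments and the outer $\int\int K_h(a_0-a)\,da_0\,dP(x)$ yields the lemma. I expect the main obstacle to be purely organizational bookkeeping --- keeping straight which nuisance functions are evaluated under $\bar P$ versus $P$ inside $\mathbb{E}_P\{\cdot\}$, and noticing that only the inverse-propensity term contributes the outcome-regression difference $\mu-\bar\mu$ while everything else reduces to the denominator calculation scaled by $\bar\mu$. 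There is no delicate analytic estimate: once the integrand is in product form the ``second-order'' conclusion is immediate, modulo the standing assumptions that $\bar\pi$ is bounded away from zero on the support and that $S(\cdot,t)$ is twice differentiable so that $R_3$ is well defined.
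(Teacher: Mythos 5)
Your proposal is correct and follows essentially the same route as the paper: iterated expectations plus the change of measure to evaluate $\mathbb{E}_P\{\xi_h^{\text{num}}(a;t;\bar P)\}$, cancellation of the centering term against $\psi_h^{\text{num}}(a;t;\bar P)$, the second-order Taylor expansion of $S(\pi(a_0|x),t)$ around $\bar\pi(a_0|x)$, and the identity $\frac{S(\bar\pi)}{\bar\pi}\pi - S(\bar\pi) = \frac{S(\bar\pi)}{\bar\pi}(\pi-\bar\pi)$ to collapse the $(\mu-\bar\mu)$ terms. The only difference is organizational: you Taylor-expand once inside the single term $-S(\pi)\mu$, whereas the paper first splits $S(\bar\pi)\bar\mu - S(\pi)\mu$ into two pieces and expands in two places, which is why the $\mu$ (rather than $\bar\mu$) multiplying the quadratic remainder appears directly in your version but only after a recombination in theirs.
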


\begin{proof}
 First, it's useful to note that the $\psi_{h}^{\text{num}}(a; t; \bar{P})$ will cancel with the same term contained in $\xi_{h}^{\text{num}}(a; t; \bar{P})$, in the same way as the previous remainder term. Thus,
 \begin{align*}
  \mathbb{E}_P \left[ \xi_{h}^{\text{num}}(a; t; \bar{P})\right] + \psi_{h}^{\text{num}}(a; t; \bar{P}) &= \mathbb{E}_P \left[ K_h(A - a) \frac{(Y - \bar{\mu}(X,A)) S(\bar{\pi}(A|X), t)}{\bar{\pi}(A|X)} \right] \\ &+ \mathbb{E}_P \left[ K_h(A - a) \bar{\mu}(X,A) \frac{\partial S(\bar{\pi}(A|X), t)}{\partial \bar{\pi}} \right] \\ &- \mathbb{E}_P \left[ \int K_h(a_0 - a) \bar{\mu}(X,a_0) \frac{\partial S(\bar{\pi}(a_0|X), t)}{\partial \bar{\pi}} \bar{\pi}(a_0|X) da_0  \right] \\ &+ \mathbb{E}_P \left[ \int K_h(a_0 - a) S(\bar{\pi}(a_0, X), t) \bar{\mu}(X, a_0) da_0 \right] \\
  &= \mathbb{E}_P \left[ K_h(A - a) \frac{(Y - \bar{\mu}(X,A)) S(\bar{\pi}(A|X), t)}{\bar{\pi}(A|X)} \right] \\ &+ \int \int K_h(a_0 - a) \bar{\mu}(x, a_0) \frac{\partial S(\bar{\pi}(a_0, x), t)}{\partial \bar{\pi}} \{ \pi(a_0|x) - \bar{\pi}(a_0|x) \} da_0 dP(x)  \\ &+ \mathbb{E}_P \left[ \int K_h(a_0 - a) S(\bar{\pi}(a_0, X), t) \bar{\mu}(X, a_0) da_0 \right]
 \end{align*}
 Now let's look at the first expectation in the above decomposition. Note that, by iterated expectations:
 \begin{align*}
  \mathbb{E}_P \left[ K_h(A - a) \frac{(Y - \bar{\mu}(X,A)) S(\bar{\pi}(A|X), t)}{\bar{\pi}(A|X)} \right] &= \mathbb{E}_P \left[ \mathbb{E}_P \left[ K_h(A - a) \frac{(Y - \bar{\mu}(X,A)) S(\bar{\pi}(A|X), t)}{\bar{\pi}(A|X)} \bigg| X, A \right] \right] \\
  &= \mathbb{E}_P \left[ K_h(A - a) \frac{(\mu(X,A) - \bar{\mu}(X,A)) S(\bar{\pi}(A|X), t)}{\bar{\pi}(A|X)} \right]
 \end{align*}
 Thus, given this decomposition, we have:
 \begin{align*}
  R_2^{\text{num}}(\bar{P}, P) &= - \psi_{h}^{\text{num}}(a; t; P) + \mathbb{E}_P \left[ \xi_{h}^{\text{num}}(a; t; \bar{P}) \right] + \psi_{h}^{\text{num}}(a; t; \bar{P}) \\
  &= -\int \int K_h(a_0 - a) S(\pi(a_0|x), t) \mu(x,a_0) da_0 dP(x) \\ &+ \mathbb{E}_P \left[ K_h(A - a) \frac{(\mu(X,A) - \bar{\mu}(X,A)) S(\bar{\pi}(A|X), t)}{\bar{\pi}(A|X)} \right] \\ &+ \int \int K_h(a_0 - a) \bar{\mu}(x, a_0) \frac{\partial S(\bar{\pi}(a_0, x), t)}{\partial \bar{\pi}} \{ \pi(a_0|x) - \bar{\pi}(a_0|x) \} da_0 dP(x)  \\ &+ \mathbb{E}_P \left[ \int K_h(a_0 - a) S(\bar{\pi}(a_0, X), t) \bar{\mu}(X, a_0) da_0 \right] \\
  &= -\int \int K_h(a_0 - a) S(\pi(a_0|x), t) \mu(x,a_0) da_0 dP(x) \\ &+ \int \int K_h(a_0 - a) \frac{(\mu(x,a_0) - \bar{\mu}(x,a_0)) S(\bar{\pi}(a_0|x), t)}{\bar{\pi}(a_0|x)} \pi(a_0|x) da_0 dP(x) \\ &+ \int \int K_h(a_0 - a) \bar{\mu}(x, a_0) \frac{\partial S(\bar{\pi}(a_0, x), t)}{\partial \bar{\pi}} \{ \pi(a_0|x) - \bar{\pi}(a_0|x) \} da_0 dP(x)  \\ &+ \int \int K_h(a_0 - a) S(\bar{\pi}(a_0, x), t) \bar{\mu}(x, a_0) da_0 dP(x)
 \end{align*}
 Now let's look specifically at the first and fourth terms. We have:
 \begin{align*}
  &\int \int K_h(a_0 - a) S(\bar{\pi}(a_0, x), t) \bar{\mu}(x, a_0) da_0 dP(x) - \int \int K_h(a_0 - a) S(\pi(a_0|x), t) \mu(x,a_0) da_0 dP(x) \\
  &= \int \int K_h(a_0 - a) S(\pi(a_0|x), t) \{ \bar{\mu}(x,a_0) - \mu(x,a_0) \} da_0 dP(x) \\ &+ \int \int K_h(a_0 - a) \bar{\mu}(x,a_0) \{ S(\bar{\pi}(a_0,x),t) - S(\pi(a_0,x),t) \} da_0 dP(x)
 \end{align*}
 Plugging this into the above decomposition, we have:
 \begin{align*}
  R_2^{\text{num}}(\bar{P}, P) &= \int \int K_h(a_0 - a) \frac{(\mu(x,a_0) - \bar{\mu}(x,a_0)) S(\bar{\pi}(a_0|x), t)}{\bar{\pi}(a_0|x)} \pi(a_0|x) da_0 dP(x) \\ &+ \int \int K_h(a_0 - a) \bar{\mu}(x, a_0) \frac{\partial S(\bar{\pi}(a_0, x), t)}{\partial \bar{\pi}} \{ \pi(a_0|x) - \bar{\pi}(a_0|x) \} da_0 dP(x) \\ &+ \int \int K_h(a_0 - a) S(\pi(a_0|x), t) \{ \bar{\mu}(x,a_0) - \mu(x,a_0) \} da_0 dP(x) \\ &+ \int \int K_h(a_0 - a) \bar{\mu}(x,a_0) \{ S(\bar{\pi}(a_0,x),t) - S(\pi(a_0,x),t) \} da_0 dP(x)
 \end{align*}
 Now let's look at the first and third terms. For simplicity, let's ignore the integrals and $K_h(a_0 - a)$ term; furthermore, let's write $\mu, \bar{\mu}$ and $\pi, \bar{\pi}$ for notational simplicity. We have that these first and third terms equal:
 \begin{align*}
  \frac{(\mu - \bar{\mu}) S(\bar{\pi},t)}{\bar{\pi}} \pi - S(\pi,t)(\mu - \bar{\mu}) &= \left( \frac{1}{\bar{\pi}} - \frac{1}{\pi} \right) (\mu - \bar{\mu}) S(\bar{\pi},t) \pi + (\mu - \bar{\mu})S(\bar{\pi},t) - S(\pi,t)(\mu - \bar{\mu}) \\
  &= \left( \frac{1}{\bar{\pi}} - \frac{1}{\pi} \right) (\mu - \bar{\mu}) S(\bar{\pi},t) \pi + (\mu - \bar{\mu}) \{S(\bar{\pi},t) - S(\pi,t)\} \\
  &= (\pi - \bar{\pi}) (\mu - \bar{\mu}) \frac{S(\bar{\pi},t)}{\bar{\pi}} + (\mu - \bar{\mu}) \{S(\bar{\pi},t) - S(\pi,t)\}
 \end{align*}
 Plugging this back into the first and third terms of $R_2^{\text{num}}(\bar{P},P)$ above, we have:
 \begin{align*}
  R_2^{\text{num}}(\bar{P},P) &= \int \int K_h(a_0 - a) \frac{S(\bar{\pi}(a_0|x),t)}{\bar{\pi}(a_0|x)} \{ \pi(a_0|x) - \bar{\pi}(a_0|x) \} \{\mu(x,a_0) - \bar{\mu}(x,a_0)\} da_0 dP(x) \\ &+ \int \int K_h(a_0 - a) \{\mu(x,a_0) - \bar{\mu}(x,a_0)\}\{ S(\bar{\pi}(a_0|x), t) - S(\pi(a_0|x),t) \} da_0 dP(x) \\ &+\int \int K_h(a_0 - a) \bar{\mu}(x, a_0) \frac{\partial S(\bar{\pi}(a_0, x), t)}{\partial \bar{\pi}} \{ \pi(a_0|x) - \bar{\pi}(a_0|x) \} da_0 dP(x) \\ &+ \int \int K_h(a_0 - a) \bar{\mu}(x,a_0) \{ S(\bar{\pi}(a_0,x),t) - S(\pi(a_0,x),t) \} da_0 dP(x)
 \end{align*}
Now let's inspect the third and fourth terms---these are identical to the terms involved in $R_2^{\text{den}}(\bar{P},P)$, except there is a $\bar{\mu}(x,a_0)$ involved in the integral. Thus, following the same procedure as the proof of Lemma \ref{lemma:secondOrderDenomFixedT}, we have:
\begin{align*}
  R_2^{\text{num}}(\bar{P},P) &= \int \int K_h(a_0 - a) \frac{S(\bar{\pi}(a_0|x),t)}{\bar{\pi}(a_0|x)} \{ \pi(a_0|x) - \bar{\pi}(a_0|x) \} \{\mu(x,a_0) - \bar{\mu}(x,a_0)\} da_0 dP(x) \\ &+ \int \int K_h(a_0 - a) \{\mu(x,a_0) - \bar{\mu}(x,a_0)\}\{ S(\bar{\pi}(a_0|x), t) - S(\pi(a_0|x),t) \} da_0 dP(x) \\ &-\int \int K_h(a_0 - a) \bar{\mu}(x,a_0) \left[ \frac{1}{2}\frac{\partial^2 S(\bar{\pi}(a_0 | x), t)}{\partial \bar{\pi}^2} \left\{ \pi(a_0|x) - \bar{\pi}(a_0|x) \right\}^2 + R_3(\pi - \bar{\pi}) \right] da_0 dP(x) 
 \end{align*}
where $R_3(\bar{\pi} - \pi)$ denotes the higher-order remainder of the Taylor expansion of $S(\pi(a_0 | x), t)$ around $\bar{\pi}(a_0|x)$. Again, note that this Taylor expansion is:
 \begin{align*}
  S(\pi(a_0 | x), t) &= S(\bar{\pi}(a_0 | x), t) + \frac{\partial S(\bar{\pi}(a_0 | x), t)}{\partial \bar{\pi}} \left\{ \pi(a_0|x) - \bar{\pi}(a_0|x) \right\} \\&+ \frac{1}{2}\frac{\partial^2 S(\bar{\pi}(a_0 | x), t)}{\partial \bar{\pi}^2} \left\{ \pi(a_0|x) - \bar{\pi}(a_0|x) \right\}^2 + R_3(\pi - \bar{\pi})
 \end{align*}
 Thus:
 \begin{align*}
  S(\bar{\pi}(a_0 | x), t) - S(\pi(a_0 | x), t) &= \frac{\partial S(\bar{\pi}(a_0 | x), t)}{\partial \bar{\pi}} \left\{ \bar{\pi}(a_0|x) - \pi(a_0|x) \right\} \\&- \frac{1}{2}\frac{\partial^2 S(\bar{\pi}(a_0 | x), t)}{\partial \bar{\pi}^2} \left\{ \pi(a_0|x) - \bar{\pi}(a_0|x) \right\}^2 - R_3(\pi - \bar{\pi})
 \end{align*}
 Therefore, plugging this back into $R_2^{\text{num}}(\bar{P},P)$, we have:
 \begin{align*}
  R_2^{\text{num}}(\bar{P},P) &= \int \int K_h(a_0 - a) \frac{S(\bar{\pi}(a_0|x),t)}{\bar{\pi}(a_0|x)} \{ \pi(a_0|x) - \bar{\pi}(a_0|x) \} \{\mu(x,a_0) - \bar{\mu}(x,a_0)\} da_0 dP(x) \\ &+ \int \int K_h(a_0 - a) \frac{\partial S(\bar{\pi}(a_0|x),t)}{\partial \bar{\pi}} \{\mu(x,a_0) - \bar{\mu}(x,a_0)\}\{ \bar{\pi}(a_0|x) - \pi(a_0|x) \} da_0 dP(x) \\ &-\int \int K_h(a_0 - a) \mu(x,a_0) \left[ \frac{1}{2}\frac{\partial^2 S(\bar{\pi}(a_0 | x), t)}{\partial \bar{\pi}^2} \left\{ \pi(a_0|x) - \bar{\pi}(a_0|x) \right\}^2 + R_3(\pi - \bar{\pi}) \right] da_0 dP(x)
 \end{align*}
 Now by collapsing the first two terms, we have:
 \footnotesize{
 \begin{align*}
  R_2^{\text{num}}(\bar{P},P) &= \int \int K_h(a_0 - a) \left\{ \frac{S(\bar{\pi}(a_0|x),t)}{\bar{\pi}(a_0|x)} - \frac{\partial S(\bar{\pi}(a_0|x),t)}{\partial \bar{\pi}} \right\} \{ \pi(a_0|x) - \bar{\pi}(a_0|x) \} \{\mu(x,a_0) - \bar{\mu}(x,a_0)\} da_0 dP(x) \\ &-\int \int K_h(a_0 - a) \mu(x,a_0) \left[ \frac{1}{2}\frac{\partial^2 S(\bar{\pi}(a_0 | x), t)}{\partial \bar{\pi}^2} \left\{ \pi(a_0|x) - \bar{\pi}(a_0|x) \right\}^2 + R_3(\pi - \bar{\pi}) \right] da_0 dP(x) 
 \end{align*}
 }
 \normalsize
 This completes the proof.
\end{proof}

\begin{remark}
\label{remark:numDenomEIFs}
  Above we have shown that the remainder terms of the von Mises expansions for $\psi_h^{\text{num}}(a; t)$ and $\psi_h^{\text{den}}(a; t)$ are:
  \footnotesize{
  \begin{align*}
    R_2^{\text{num}}(\bar{P},P) &= \int \int K_h(a_0 - a) \left\{ \frac{S(\bar{\pi}(a_0|x),t)}{\bar{\pi}(a_0|x)} - \frac{\partial S(\bar{\pi}(a_0|x),t)}{\partial \bar{\pi}} \right\} \{ \pi(a_0|x) - \bar{\pi}(a_0|x) \} \{\mu(x,a_0) - \bar{\mu}(x,a_0)\} da_0 dP(x) \\ &-\int \int K_h(a_0 - a) \mu(x,a_0) \left[ \frac{1}{2}\frac{\partial^2 S(\bar{\pi}(a_0 | x), t)}{\partial \bar{\pi}^2} \left\{ \pi(a_0|x) - \bar{\pi}(a_0|x) \right\}^2 + R_3(\pi - \bar{\pi}) \right] da_0 dP(x), \text{ and} \\ \\
    R_2^{\text{den}}(\bar{P}, P) &= -\int \int K_h(a_0 - a) \left[ \frac{1}{2}\frac{\partial^2 S(\bar{\pi}(a_0 | x), t)}{\partial \bar{\pi}^2} \left\{ \pi(a_0|x) - \bar{\pi}(a_0|x) \right\}^2 + R_3(\pi - \bar{\pi}) \right] da_0 dP(x),
  \end{align*}
  }
  \normalsize
  where $\bar{\pi}$ and $\bar{\mu}$ denote the propensity score and outcome regression under probability measure $\bar{P}$, and $R_3(\bar{\pi} - \pi)$ denotes the third-order remainder of the Taylor expansion of $S(\pi(a_0 | x), t)$ around $\bar{\pi}(a_0|x)$. This demonstrates that sufficient conditions for the remainders of $\psi^{\text{num}}(a; t)$ and $\psi^{\text{den}}(a; t)$ to be $o_P(n^{-1/2})$ are that $\|\widehat{\pi}(a|x)-\pi(a|x)\|^2 = o_P(n^{-1/2})$ and $\|\widehat{\pi}(a|x)-\pi(a|x)\| \cdot \| \widehat{\mu}(x,a) - \mu(x,a)\| = o_P(n^{-1/2})$.
\end{remark}

\section{Proof of Proposition \ref{prop:inferenceFixedT}} \label{as:propositionProof}

To prove Proposition \ref{prop:inferenceFixedT}, it is sufficient to prove that
\begin{align}
  \sqrt{n}(\widehat{\psi}^{\text{num}}_h(a; t) - \psi^{\text{num}}_h(a; t) ) &\rightarrow N(0, \text{Var}\{ \varphi_h^{\text{num}}(a; t) \}), \text{ and} \label{eqn:propProofNum} \\
  \sqrt{n}(\widehat{\psi}^{\text{den}}_h(a; t) - \psi^{\text{den}}_h(a; t) ) &\rightarrow N(0, \text{Var}\{ \varphi_h^{\text{den}}(a; t) \}) \label{eqn:propProofDenom}
\end{align}
because, given (\ref{eqn:propProofNum}) and (\ref{eqn:propProofDenom}), Proposition \ref{prop:inferenceFixedT} immediately follows by the delta method.

First, note that $\widehat{\psi}_h^{\text{num}}(a; t) = \mathbb{P}_n\{ \widehat{\varphi}_h^{\text{num}}(a; t) \}$ and $\psi_h^{\text{num}}(a; t) = \mathbb{P}\{ \varphi_h^{\text{num}}(a; t) \}$. Thus, we have the following decomposition by adding and subtracting terms:
\begin{align}
  \widehat{\psi}^{\text{num}}_h(a; t) - \psi^{\text{num}}_h(a; t) = &\left( \mathbb{P}_n - \mathbb{P} \right) \{ \varphi_h^{\text{num}}(a; t) \} + \left( \mathbb{P}_n - \mathbb{P} \right) \{ \widehat{\varphi}_h^{\text{num}}(a; t) - \varphi_h^{\text{num}}(a; t) \} + R_2^{\text{num}}(\mathbb{P}_n, \mathbb{P}), \label{eqn:propProofDecomp}
\end{align}
where $R_2^{\text{num}}(\mathbb{P}_n, \mathbb{P}) = \mathbb{P} \{ \widehat{\varphi}_h^{\text{num}}(a; t) - \varphi_h^{\text{num}}(a; t) \}$ is the remainder term.
The first term in (\ref{eqn:propProofDecomp}) is the difference between a sample average and the expectation of a fixed function, and thus by the central limit theorem will be asymptotically Normal with mean zero and variance $\text{Var}(\varphi_h^{\text{num}})/n$ up to error $o_P(n^{-1/2})$; this is the same distribution in (\ref{eqn:propProofNum}). Thus, in order to prove (\ref{eqn:propProofNum}), we need to establish that the second and third terms in (\ref{eqn:propProofDecomp}) are $o_P(n^{-1/2})$.

By assumption, the sample used to compute $\widehat{\varphi}_h^{\text{num}}(a;t)$ is independent of the sample that defines the above empirical measure $\mathbb{P}_n\{ \cdot \}$. Thus, for the second term in (\ref{eqn:propProofDecomp}), we can write:
\begin{align*}
  \left( \mathbb{P}_n - \mathbb{P} \right) \{ \widehat{\varphi}_h^{\text{num}}(a; t) - \varphi_h^{\text{num}}(a; t) \} = O_p\left( n^{-1/2} \| \widehat{\varphi}_h^{\text{num}}(a; t) - \varphi_h^{\text{num}}(a; t) \| \right) = o_P(n^{-1/2}),
\end{align*}
where the first equality follows by Lemma 2 in \cite{kennedy2020sharp}, and the second equality follows by our assumption that $\| \widehat{\varphi}_h^{\text{num}}(a; t) - \varphi_h^{\text{num}}(a; t) \| = o_P(1)$.

Lastly, by assumption, $\|\widehat{\pi}(a|X) - \pi(a|X)\|^2 = o_P(n^{-1/2})$ and $\|\widehat{\mu}(X,a) - \mu(X,a)\| \cdot \|\widehat{\pi}(a|X) - \pi(a|X)\| = o_P(n^{-1/2})$ and thus, as discussed in Remark \ref{remark:numDenomEIFs}, $R_2^{\text{num}}(\mathbb{P}_n, \mathbb{P}) = o_P(n^{-1/2})$.

This proves (\ref{eqn:propProofNum}). The proof for (\ref{eqn:propProofDenom}) is exactly the same, with the only difference being that $\varphi_h^{\text{num}}(a; t)$ and $\widehat{\varphi}_h^{\text{num}}(a; t)$ are replaced with $\varphi_h^{\text{den}}(a; t)$ and $\widehat{\varphi}_h^{\text{den}}(a; t)$.

\section{Proof of Theorem \ref{thm:eifErrorEstT}} \label{as:proofTheorem2}

 First, note that $\widehat{\psi}_{h}^{\text{num}}(a; t) = \mathbb{P}_n\{ \varphi_{h}^{\text{num}}(a; t, \widehat{\eta}) \}$ and $\psi_{h}^{\text{num}}(a; t_0) = \mathbb{P}\{ \varphi_{h}^{\text{num}}(a; t_0, \eta_0) \}$. Thus,
 \begin{align}
  \widehat{\psi}^{\text{num}}_{h}(a; \widehat{t}) - \psi^{\text{num}}_{h}(a; t_0) &= \mathbb{P}_n\{ \varphi^{\text{num}}_{h}(a; \widehat{t}, \widehat{\eta}) \} - \mathbb{P}\{ \varphi^{\text{num}}_{h}(a; t_0, \eta_0) \} \notag \\
  &= (\mathbb{P}_n - \mathbb{P})\{ \varphi^{\text{num}}_{h}(a; t_0, \eta_0) \} + (\mathbb{P}_n - \mathbb{P})\{ \varphi^{\text{num}}_{h}(a; \widehat{t}, \widehat{\eta}) - \varphi^{\text{num}}_{h}(a; t_0, \widehat{\eta}) \} \label{eqn:thm2ProofLine1} \\
  &+ (\mathbb{P}_n - \mathbb{P})\{ \varphi^{\text{num}}_{h}(a; t_0, \widehat{\eta}) - \varphi^{\text{num}}_{h}(a; t_0, \eta_0) \} \label{eqn:thm2ProofLine2} \\
  &+ \mathbb{P}\{ \varphi^{\text{num}}_{h}(a; \widehat{t}, \widehat{\eta}) - \varphi^{\text{num}}_{h}(a; t_0, \widehat{\eta}) \} + \mathbb{P}\{ \varphi^{\text{num}}_{h}(a; t_0, \widehat{\eta}) - \varphi^{\text{num}}_{h}(a; t_0, \eta_0) \} \label{eqn:thm2ProofLine3}
 \end{align}
 which follows by adding and subtracting terms. We will analyze each of the three terms above, (\ref{eqn:thm2ProofLine1}), (\ref{eqn:thm2ProofLine2}), and (\ref{eqn:thm2ProofLine3}), in turn.

 The first term in (\ref{eqn:thm2ProofLine1}) is the difference between a sample average and the expectation of a fixed function, and thus by the central limit theorem will be asymptotically Normal. Meanwhile, because $\varphi^{\text{num}}_{h}(a; t, \widehat{\eta})$ is Donsker in $t$ by assumption and $\widehat{t}$ is consistent by assumption, the second term in (\ref{eqn:thm2ProofLine1}) is $o_P(n^{-1/2})$ according to Lemma 19.24 of \cite{van2000asymptotic}.

 Meanwhile, because $\widehat{\eta}$ is consistent and is estimated on an independent sample by assumption, (\ref{eqn:thm2ProofLine2}) is $o_P(n^{-1/2})$ by Lemma 2 of \cite{kennedy2020sharp}.

 Finally, because the map $t \mapsto \psi^{\text{num}}_{h}(a; t, \eta)$ is differentiable at $t_0$ uniformly in $\eta$, and noting that $\mathbb{P}\{ \varphi^{\text{num}}_{h}(a; t_0, \widehat{\eta}) \} = \psi^{\text{num}}_{h}(a; t_0, \widehat{\eta})$, we can use a Taylor expansion to write the first term of (\ref{eqn:thm2ProofLine3}) as
 \begin{align*}
  \mathbb{P}\{ \varphi^{\text{num}}_{h}(a; \widehat{t}, \widehat{\eta}) - \varphi^{\text{num}}_{h}(a; t_0, \widehat{\eta}) \} &= \left\{ \frac{\partial}{\partial t} \psi^{\text{num}}_{h}(a; t_0, \widehat{\eta}) \right\} (\widehat{t} - t_0) + o_P(\|\widehat{t} - t_0\|) \\
  &= \left\{ \frac{\partial}{\partial t} \psi^{\text{num}}_{h}(a; t_0, \eta_0) \right\} (\widehat{t} - t_0) + o_P(\|\widehat{t} - t_0\|)
 \end{align*}
 where the second equality follows by the assumption that $\frac{\partial}{\partial t} \psi^{\text{num}}_{h}(a; t_0, \widehat{\eta}) \xrightarrow{p} \frac{\partial}{\partial t} \psi^{\text{num}}_{h}(a; t_0, \eta_0)$.

 Putting these results together, we have
 \begin{equation}
 \begin{aligned}
  \widehat{\psi}^{\text{num}}_{h}(a; \widehat{t}) - \psi^{\text{num}}_{h}(a; t_0) &= (\mathbb{P}_n - \mathbb{P})\{ \varphi^{\text{num}}_{h}(a; t_0, \eta_0) \} + \left\{ \frac{\partial}{\partial t} \psi^{\text{num}}_{h}(a; t_0, \eta_0) \right\} (\widehat{t} - t_0) \\ &+ O_P(R^{\text{num}}_2) + o_P(\|\widehat{t} - t_0\|) + o_P(n^{-1/2}) \label{eqn:thm2ProofLine4}
 \end{aligned}
 \end{equation}
 where $R^{\text{num}}_2 = \mathbb{P}\{ \varphi^{\text{num}}_{h}(a; t_0, \widehat{\eta}) - \varphi^{\text{num}}_{h}(a; t_0, \eta_0) \} $ is the last term in (\ref{eqn:thm2ProofLine3}).

 Now recall that $\widehat{t}$ is estimated as
 \begin{align*}
  \widehat{t} = \inf \{t:\widehat{\psi}_{h}^{\text{den}}(a; t) \leq 1-\gamma\}
\end{align*}
In other words, $t$ is estimated via the estimating equation
\begin{align*}
  1 - \gamma - \varphi^{\text{den}}_{h}(a; t, \eta) = 0 
\end{align*}
First, note that for the true $t_0$ and $\eta_0$, $\mathbb{P}\{ 1 - \gamma - \varphi^{\text{den}}_{h}(a; t_0, \eta_0) \} = 0$, because $\varphi^{\text{den}}_{h}(a; t, \eta)$ is the uncentered EIF for $\psi^{\text{den}}_{h}(a; \eta, t)$, and the true $t_0$ is defined such that $\psi^{\text{den}}_{h}(a; t_0, \eta_0) = 1 - \gamma$. Meanwhile, $\widehat{t}$ is chosen such that $\mathbb{P}_n\{ 1 - \gamma - \varphi^{\text{den}}_{h}(a; \widehat{t}, \widehat{\eta}) \} = 0$, because $\mathbb{P}_n\{ \varphi^{\text{den}}_{h}(a; t, \widehat{\eta}) \} = \widehat{\psi}^{\text{den}}_{h}(a; t)$. Given this and Assumptions 1-4, we have by Lemma 3 of \cite{kennedy2021semiparametric} that
\begin{align}
  \widehat{t} - t_0 &= \left\{ \frac{\partial}{\partial t} \psi^{\text{den}}_{h}(a; t_0, \eta_0) \right\}^{-1} (\mathbb{P}_n - \mathbb{P}) \{ 1 - \gamma - \varphi^{\text{den}}_{h}(a; t_0, \eta_0) \} + O_P (R_2^{\text{den}}) + o_P(n^{-1/2}) \notag \\
  &= -\left\{ \frac{\partial}{\partial t} \psi^{\text{den}}_{h}(a; t_0, \eta_0) \right\}^{-1} (\mathbb{P}_n - \mathbb{P}) \{ \varphi^{\text{den}}_{h}(a; t_0, \eta_0) \} + O_P (R_2^{\text{den}}) + o_P(n^{-1/2}) \label{eqn:thm2ProofLine5}
\end{align}
where $R_2^{\text{den}} = \mathbb{P} \{ \varphi^{\text{den}}_{h}(a; t_0, \widehat{\eta}) - \varphi^{\text{den}}_{h}(a; t_0, \eta_0) \}$, and furthermore that
\begin{align}
  o_P(\|\widehat{t} - t_0\|) = o_P(n^{-1/2}) + o_P(R_2^{\text{den}}) \label{eqn:thm2ProofLine6}
\end{align}
Plugging this into (\ref{eqn:thm2ProofLine4}), we have
\begin{align*}
  \widehat{\psi}^{\text{num}}_{h}(a; \widehat{t}) - \psi^{\text{num}}_{h}(a; t_0, \eta_0) &= (\mathbb{P}_n - \mathbb{P})\{ \varphi^{\text{num}}_{h}(a; t_0, \eta_0) \} - \frac{\partial \psi^{\text{num}}_{h}(a; t_0, \eta_0)/\partial t}{\partial \psi^{\text{den}}_{h}(a; t_0, \eta_0)/\partial t}  (\mathbb{P}_n - \mathbb{P})\{ \varphi^{\text{den}}_{h}(a; t_0, \eta_0) \} \\ &+ \left\{ \frac{\partial}{\partial t} \psi^{\text{num}}_{h}(a; t_0, \eta_0) \right\} \{O_P (R_2^{\text{den}}) + o_P(n^{-1/2})\} \\ &+ O_P(R^{\text{num}}_2) + o_P(\|\widehat{t} - t_0\|) + o_P(n^{-1/2}) \\ \\
  &= (\mathbb{P}_n - \mathbb{P})\{ \varphi^{\text{num}}_{h}(a; t_0, \eta_0) \} - \frac{\partial \psi^{\text{num}}_{h}(a; t_0, \eta_0)/\partial t}{\partial \psi^{\text{den}}_{h}(a; t_0, \eta_0)/\partial t}  (\mathbb{P}_n - \mathbb{P})\{ \varphi^{\text{den}}_{h}(a; t_0, \eta_0) \} \\ &+ \left\{ \frac{\partial}{\partial t} \psi^{\text{num}}_{h}(a; t_0, \eta_0) \right\} \{O_P (R_2^{\text{den}}) + o_P(n^{-1/2})\} \\ &+ O_P(R^{\text{num}}_2) + o_P(R_2^{\text{den}}) + o_P(n^{-1/2}) \\ \\
  &= (\mathbb{P}_n - \mathbb{P})\{ \varphi^{\text{num}}_{h}(a; t_0, \eta_0) \} - \frac{\partial \psi^{\text{num}}_{h}(a; t_0, \eta_0)/\partial t}{\partial \psi^{\text{den}}_{h}(a; t_0, \eta_0)/\partial t}  (\mathbb{P}_n - \mathbb{P})\{ \varphi^{\text{den}}_{h}(a; t_0, \eta_0) \} \\ &+ O_P(R^{\text{num}}_2) + O_P(R_2^{\text{den}}) + o_P(n^{-1/2})
 \end{align*}
 where the first equality follows by plugging (\ref{eqn:thm2ProofLine5}) into (\ref{eqn:thm2ProofLine4}), the second equality follows by plugging in (\ref{eqn:thm2ProofLine6}), and the third equality follows by assuming $|\frac{\partial}{\partial t} \psi^{\text{num}}_{h}(a; t_0, \eta_0)|$ is bounded and recognizing that $O_p(R_2^{\text{den}}) + o_p(R_2^{\text{den}}) = O_p(R_2^{\text{den}})$.
 This completes the proof.

\section{Efficient Influence Function for the Ratio $\psi_{h}(a; t)$} \label{s:appendixEIFRatio}

The main text derives estimators for STATE $\psi_{h}(a; t)$ defined in (\ref{eqn:causalEstimandSmoothed}). Specifically, we wrote $\psi_{h}(a; t) \equiv \frac{\psi_{h}^{\text{num}}(a; t)}{\psi_{h}^{\text{den}}(a; t)}$, and then derived efficient influence functions (EIFs) for $\psi_{h}^{\text{num}}(a; t)$ and $\psi_{h}^{\text{den}}(a; t)$. An alternative, asymptotically equivalent approach is to instead derive the EIF of $\psi_{h}(a; t)$ itself. The following corollary communicates the EIF $\psi_{h}(a; t)$ when the trimming threshold $t$ is fixed.

\begin{corollary} \label{corollary:trimmedEffectEIF}
  Let $\psi_{h}(a; t) = \psi_{h}^{\text{num}}(a; t)/\psi_{h}^{\text{den}}(a; t)$, where $\psi_{h}^{\text{num}}(a; t)$ and $\psi_{h}^{\text{den}}(a; t)$ are defined in (\ref{eqn:funcNum}) and (\ref{eqn:funcDenom}). Then, if the trimming threshold $t$ is fixed and $\psi_{h}^{\text{den}}(a; t) \geq b > 0$ for some constant $b$, the EIF of $\psi_{h}(a; t)$ is
\begin{align*}
  \varphi_{h}(a; t) &= \frac{1}{\psi_{h}^{\text{den}}(a; t)} \bigg[ K_h(A - a) \frac{\{Y - \mu(X,A)\} S(\pi(A|X), t)}{\pi(A|X)} \\
  &+ K_h(A - a) \frac{\partial S(\pi(A | X), t) }{\partial \pi} \big\{ \mu(X,A) - \psi_{h}(a; t) \big\} \\
  &+ \int K_h(a_0 - a) \left\{ S(\pi(a_0 | X), t) - \frac{\partial S(\pi(a_0 | X), t) }{\partial \pi} \pi(a_0|X) \right\} \big\{ \mu(X,a_0) - \psi_{h}(a; t) \big\} da_0 \bigg]  .
\end{align*}
\end{corollary}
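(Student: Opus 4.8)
The plan is to read $\varphi_{h}(a;t)$ off of Theorem \ref{thm:numDenomEIFs} using the product and chain rules for influence functions recorded in Section \ref{s:appendixEIFsHow}, and then to certify the candidate by verifying the von Mises expansion (\ref{eqn:vonMises2}) for $\psi_{h}(a;t)$ with the help of the remainder analysis in Section \ref{s:remainderTermsFixedT}.

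First I would write $\psi_{h}(a;t) = \psi_{h}^{\text{num}}(a;t)\cdot\{\psi_{h}^{\text{den}}(a;t)\}^{-1}$ and apply the product and chain rules to obtain
\[
  \mathbb{IF}\{\psi_{h}(a;t)\} = \frac{\mathbb{IF}\{\psi_{h}^{\text{num}}(a;t)\}}{\psi_{h}^{\text{den}}(a;t)} - \frac{\psi_{h}^{\text{num}}(a;t)}{\{\psi_{h}^{\text{den}}(a;t)\}^{2}}\,\mathbb{IF}\{\psi_{h}^{\text{den}}(a;t)\}.
\]
Substituting $\mathbb{IF}\{\psi_{h}^{\text{num}}(a;t)\} = \varphi_{h}^{\text{num}}(a;t) - \psi_{h}^{\text{num}}(a;t)$ and $\mathbb{IF}\{\psi_{h}^{\text{den}}(a;t)\} = \varphi_{h}^{\text{den}}(a;t) - \psi_{h}^{\text{den}}(a;t)$ from Theorem \ref{thm:numDenomEIFs}, and using $\psi_{h}^{\text{num}}(a;t) = \psi_{h}(a;t)\,\psi_{h}^{\text{den}}(a;t)$, the centering constants cancel and this collapses to $\mathbb{IF}\{\psi_{h}(a;t)\} = \{\psi_{h}^{\text{den}}(a;t)\}^{-1}\{ \varphi_{h}^{\text{num}}(a;t) - \psi_{h}(a;t)\,\varphi_{h}^{\text{den}}(a;t) \}$. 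Plugging in the explicit forms from Theorem \ref{thm:numDenomEIFs} and grouping: the term $K_h(A-a)\{Y-\mu(X,A)\}S(\pi(A|X),t)/\pi(A|X)$ has no denominator analog; the two $K_h(A-a)\,\partial S(\pi(A|X),t)/\partial\pi$ terms combine to $K_h(A-a)\frac{\partial S(\pi(A|X),t)}{\partial\pi}\{\mu(X,A)-\psi_{h}(a;t)\}$; and the remaining four integrals combine to $\int K_h(a_0-a)\{S(\pi(a_0|X),t) - \frac{\partial S(\pi(a_0|X),t)}{\partial\pi}\pi(a_0|X)\}\{\mu(X,a_0)-\psi_{h}(a;t)\}\,da_0$. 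This is precisely the stated $\varphi_{h}(a;t)$. I would also record that $\mathbb{E}_P[\varphi_{h}(a;t)] = 0$: the first term is mean zero by iterated expectations, and the last two terms integrate to $\psi_{h}^{\text{num}}(a;t) - \psi_{h}(a;t)\psi_{h}^{\text{den}}(a;t) = 0$, so $\varphi_{h}(a;t)$ is already centered and no separate "uncentered" form is needed.

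Next, to confirm that this candidate really is the EIF, I would verify (\ref{eqn:vonMises2}). Writing $A_P = \psi_{h}^{\text{num}}(a;t;P)$, $B_P = \psi_{h}^{\text{den}}(a;t;P)$ and similarly for $\bar P$, the elementary identity
\[
  \frac{A_{\bar P}}{B_{\bar P}} - \frac{A_{P}}{B_{P}} = \frac{A_{\bar P}-A_{P}}{B_{P}} - \frac{A_{P}\,(B_{\bar P}-B_{P})}{B_{P}^{2}} - \frac{(A_{\bar P}-A_{P})(B_{\bar P}-B_{P})}{B_{\bar P}B_{P}} + \frac{A_{P}\,(B_{\bar P}-B_{P})^{2}}{B_{\bar P}B_{P}^{2}}
\]
exhibits the ratio difference as a linear part plus a remainder that is quadratic in $(A_{\bar P}-A_{P},\,B_{\bar P}-B_{P})$. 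Substituting the von Mises expansions for the numerator and denominator from Lemmas \ref{lemma:secondOrderNumFixedT} and \ref{lemma:secondOrderDenomFixedT} into the linear part, and then re-expressing the coefficients $1/B_P$ and $A_P/B_P^2$ in terms of $1/B_{\bar P}$ and $A_{\bar P}/B_{\bar P}^2$ (each such replacement costing a further product of first-order errors), the linear-in-$(\bar P - P)$ contribution is exactly $\int \varphi_{h}(a;t;\bar P)\,d(\bar P - P)(z)$, while everything else is a finite sum of: $R_2^{\text{num}}(\bar P,P)$; $\psi_{h}(a;t;\bar P)$-weighted multiples of $R_2^{\text{den}}(\bar P,P)$; products of the first-order numerator and denominator errors; and the terms created by the coefficient replacements. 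The hypothesis $\psi_{h}^{\text{den}}(a;t)\ge b>0$ keeps all $1/\psi_{h}^{\text{den}}$ factors (at both $P$ and $\bar P$) uniformly bounded, so each summand is second order and $R_2(\bar P,P)$ is a bona fide second-order remainder, which certifies $\varphi_{h}(a;t)$ as the EIF.

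The main obstacle will be the bookkeeping in this last step: expanding the ratio difference, matching its linear part against $\int\varphi_{h}(a;t;\bar P)\,d(\bar P-P)$, and verifying that every leftover term — including those produced by swapping $\psi_{h}^{\text{den}}(a;t;P)$ for $\psi_{h}^{\text{den}}(a;t;\bar P)$ in the coefficients — is genuinely a product of first-order errors rather than a stray first-order term. This is the standard "delta method for von Mises expansions," so it is routine rather than deep, but it is precisely where the positivity-type lower bound $\psi_{h}^{\text{den}}(a;t)\ge b$ is indispensable and must be invoked explicitly.
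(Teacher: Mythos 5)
Your proposal is correct and follows essentially the same route as the paper: apply the quotient form of the product rule to $\psi_{h}^{\text{num}}/\psi_{h}^{\text{den}}$, substitute the Theorem \ref{thm:numDenomEIFs} influence functions so the centering constants cancel and the terms regroup into the stated $\varphi_{h}(a;t)$, and then certify it by showing the ratio's von Mises remainder is a combination of $R_2^{\text{num}}$, $R_2^{\text{den}}$, and a product of first-order numerator/denominator errors, all scaled by $1/\psi_{h}^{\text{den}}$ which the hypothesis $\psi_{h}^{\text{den}}(a;t)\ge b>0$ keeps bounded. The only cosmetic difference is in the remainder bookkeeping (your ratio-difference identity with coefficient swaps versus the paper's direct evaluation of $\mathbb{E}_P[\varphi_{h}(a;t;\bar P)]$), which yields the same second-order remainder.
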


\begin{proof}
First, let $\psi_{h}(a; t) \equiv \frac{\psi_{h}^{\text{num}}(a; t)}{\psi_{h}^{\text{den}}(a; t)}$, where 
\begin{align*}
  \psi_{h}^{\text{num}}(a; t) &= \int \int \mu(x,a_0) S(\pi(a_0|x), t) K_h(a_0 - a) da_0 dP(x) \\
  \psi_{h}^{\text{den}}(a; t) &= \int \int S(\pi(a_0|x), t) K_h(a_0 - a) da_0 dP(x)
\end{align*}
Our goal here is to derive the efficient influence function for $\psi_{h}(a; t)$. First, by the influence function ``product rule,'' we have:
\begin{align}
  \mathbb{IF}(\psi_{h}(a; t)) &= \frac{\mathbb{IF}(\psi_{h}^{\text{num}}(a; t))}{\psi_{h}^{\text{den}}(a; t)} - \left( \frac{\psi_{h}^{\text{num}}(a; t)}{\psi_{h}^{\text{den}}(a; t)} \right) \frac{\mathbb{IF}(\psi_{h}^{\text{den}}(a; t))}{\psi_{h}^{\text{den}}(a; t)} \\
  &= \frac{1}{\psi_{h}^{\text{den}}(a; t)} \left[ \mathbb{IF}(\psi_{h}^{\text{num}}(a; t)) - \psi_{h}(a; t) \mathbb{IF}(\psi_{h}^{\text{den}}(a; t))  \right] \label{eqn:ifFrac}
\end{align}
In Theorem \ref{thm:numDenomEIFs} we found $\mathbb{IF}(\psi_{h}^{\text{num}}(a; t))$ and $\mathbb{IF}(\psi_{h}^{\text{den}}(a; t))$, which can be plugged into (\ref{eqn:ifFrac}) to get the following:
\begin{align*}
  \mathbb{IF}(\psi_{h}(a; t)) &= \frac{1}{\psi_{h}^{\text{den}}(a; t)} \left[ \mathbb{IF}(\psi_{h}^{\text{num}}(a; t)) - \psi_{h}(a; t) \mathbb{IF}(\psi_{h}^{\text{den}}(a; t))  \right] \\
  &= \frac{1}{\psi_{h}^{\text{den}}(a; t)} \bigg[ K_h(A - a) \frac{(Y - \mu(X,A)) S(\pi(A|X), t)}{\pi(A|X)} + K_h(A - a) \mu(X,A) \frac{\partial S(\pi(A | X), t) }{\partial \pi} \\
  &+ \int K_h(a_0 - a) \mu(X, a_0) \{ S(\pi(a_0 | X), t) - \frac{\partial S(\pi(a_0 | X), t) }{\partial \pi} \pi(a_0|X) \} da_0 \\
  &- \psi_{h}(a; t) \big\{ K_h(A - a) \frac{\partial S(\pi(A | X), t) }{\partial \pi} \\ &+ \int K_h(a_0 - a) \{ S(\pi(a_0 | X), t) - \frac{\partial S(\pi(a_0 | X), t) }{\partial \pi} \pi(a_0|X) \} da_0 \big\}  \bigg]
\end{align*}
Note that the $\psi_{h}(a; t)$ terms in $\mathbb{IF}(\psi_{h}^{\text{num}}(a; t))$ and $\mathbb{IF}(\psi_{h}^{\text{den}}(a; t))$ end up cancelling after distributing the outer $\frac{1}{\psi_{h}^{\text{den}}(a; t)}$. We can then combine some terms to obtain:
\begin{align*}
  \mathbb{IF}(\psi_{h}(a; t)) &= \frac{1}{\psi_{h}^{\text{den}}(a; t)} \bigg[ K_h(A - a) \frac{(Y - \mu(X,A)) S(\pi(A|X), t)}{\pi(A|X)} \\
  &+ K_h(A - a) \frac{\partial S(\pi(A | X), t) }{\partial \pi} \big\{ \mu(X,A) - \psi_{h}(a; t) \big\} \\
  &+ \int K_h(a_0 - a) \{ S(\pi(a_0 | X), t) - \frac{\partial S(\pi(a_0 | X), t) }{\partial \pi} \pi(a_0|X) \} \big\{ \mu(X,a_0) - \psi_{h}(a; t) \big\} da_0 \bigg]
\end{align*}
This completes the proof.
\end{proof}

\subsection{Analyzing the von Mises Remainder Term of Corollary \ref{corollary:trimmedEffectEIF}} \label{s:trimmedEffectEIFRemainder}

Note that in the proof of Corollary \ref{corollary:trimmedEffectEIF}, we found that the EIF $\varphi_{h}(a; t)$ is:
\begin{align*}
  \varphi_{h}(a; t) = \frac{\varphi_{h}^{\text{num}}(a; t)}{\psiden(a; t)} - \psi_{h}(a; t) \frac{\varphi_{h}^{\text{den}}(a; t)}{\psiden(a; t)}
\end{align*}
where $\varphi_{h}^{\text{num}}(a; t)$ and $\varphi_{h}^{\text{den}}(a; t)$ are the EIFs provided in Theorem \ref{thm:numDenomEIFs}.

In order for the von Mises expansion (\ref{eqn:vonMises2}) to be valid, we must have that $\mathbb{E}_P[\varphi_{t,h,\epsilon}(a; P)] = 0$. In the proof of Theorem \ref{thm:numDenomEIFs}, we already found that $\mathbb{E}_P[\varphi_{t,h,\epsilon}^{\text{num}}(a; P)] = 0$ and $\mathbb{E}_P[\varphi_{t,h,\epsilon}^{\text{den}}(a; P)] = 0$, and so it immediately follows that $\mathbb{E}_P[\varphi_{t,h,\epsilon}(a; P)] = 0$.

Now we can study the above von Mises expansion for the proposed EIF $\varphi_{h}(a; t)$. First, note that the von Mises expansions for $\varphi_{h}^{\text{num}}(a; t)$ and $\varphi_{h}^{\text{den}}(a; t)$ are
\begin{align*}
  \mathbb{E}_P[\varphi_{t,h,\epsilon}^{\text{num}}(a;\bar{P})] &= R_2^{\text{num}}(\bar{P},P) + \psinum(a; P) - \psinum(a; \bar{P}), \text{ and} \\
  \mathbb{E}_P[\varphi_{t,h,\epsilon}^{\text{den}}(a;\bar{P})] &= R_2^{\text{den}}(\bar{P},P) + \psiden(a; P) - \psiden(a; \bar{P})
\end{align*}
Therefore, we have:
\begin{align*}
  \mathbb{E}_P[\varphi_{t,h,\epsilon}(a; \bar{P})] &= \mathbb{E}_P \left[ \frac{\xi_{h}^{\text{num}}(a; t; \bar{P})}{\psiden(a; \bar{P})} - \psi_{t,h,\epsilon}(a; \bar{P}) \frac{\xi_{h}^{\text{den}}(a; t; \bar{P})}{\psiden(a; \bar{P})} \right] \\
  &= \frac{\mathbb{E}_P[\xi_{h}^{\text{num}}(a; t; \bar{P})]}{\psiden(a; \bar{P})} - \frac{\psi_{t,h,\epsilon}(a; \bar{P})}{\psiden(a; \bar{P})} \mathbb{E}_P[\xi_{h}^{\text{den}}(a; t; \bar{P})] \\
  &= \frac{R_2^{\text{num}}(\bar{P},P) + \psinum(a; P) - \psinum(a; \bar{P})}{\psiden(a; \bar{P})} \\ &- \frac{\psi_{t,h,\epsilon}(a; \bar{P})}{\psiden(a; \bar{P})} \left\{ R_2^{\text{den}}(\bar{P},P) + \psiden(a; P) - \psiden(a; \bar{P}) \right\} \\
  &= \frac{R_2^{\text{num}}(\bar{P},P) + \psinum(a; P)}{\psiden(a; \bar{P})} - \frac{\psi_{t,h,\epsilon}(a; \bar{P})}{\psiden(a; \bar{P})} \left\{ R_2^{\text{den}}(\bar{P},P) + \psiden(a; P) \right\} \\
  &= \frac{R_2^{\text{num}}(\bar{P},P)}{\psiden(a; \bar{P})} + \frac{\psi_{t,h,\epsilon}(a; P) \psiden(a; P)}{\psiden(a; \bar{P})} - \frac{\psi_{t,h,\epsilon}(a; \bar{P}) R_2^{\text{den}}(\bar{P},P)}{\psiden(a; \bar{P})} - \frac{\psi_{t,h,\epsilon}(a; \bar{P}) \psiden(a; P)}{\psiden(a; \bar{P})} \\
  &= \frac{\psiden(a; P)}{\psiden(a; \bar{P})} \{\psi_{t,h,\epsilon}(a; P) - \psi_{t,h,\epsilon}(a; \bar{P})\} + \frac{R_2^{\text{num}}(\bar{P},P)}{\psiden(a; \bar{P})} - \frac{\psi_{t,h,\epsilon}(a; \bar{P}) R_2^{\text{den}}(\bar{P},P)}{\psiden(a; \bar{P})}
\end{align*}
Now plugging this into the von Mises expansion, we have:
\begin{align*}
  R_2(\bar{P},P) &= \psi_{t,h,\epsilon}(a; \bar{P}) - \psi_{t,h,\epsilon}(a; P) + \mathbb{E}_P\left[ \varphi_{t,h,\epsilon}(a; \bar{P}) \right] \\
  &= \psi_{t,h,\epsilon}(a; \bar{P}) - \psi_{t,h,\epsilon}(a; P) + \frac{\psiden(a; P)}{\psiden(a; \bar{P})} \{\psi_{t,h,\epsilon}(a; P) - \psi_{t,h,\epsilon}(a; \bar{P})\} \\ &+ \frac{R_2^{\text{num}}(\bar{P},P)}{\psiden(a; \bar{P})} - \frac{\psi_{t,h,\epsilon}(a; \bar{P}) R_2^{\text{den}}(\bar{P},P)}{\psiden(a; \bar{P})} \\
  &= \left\{ \psi_{t,h,\epsilon}(a; \bar{P}) - \psi_{t,h,\epsilon}(a; P) \right\} \left( 1 - \frac{\psiden(a; P)}{\psiden(a; \bar{P})} \right) + \frac{R_2^{\text{num}}(\bar{P},P)}{\psiden(a; \bar{P})} - \frac{\psi_{t,h,\epsilon}(a; \bar{P}) R_2^{\text{den}}(\bar{P},P)}{\psiden(a; \bar{P})} \\
  &= \frac{1}{\psiden(a;\bar{P})} \bigg[ \left\{ \psi_{t,h,\epsilon}(a; \bar{P}) - \psi_{t,h,\epsilon}(a; P) \right\} \left\{ \psiden(a; \bar{P}) - \psiden(a; P) \right\} \\&+ R_2^{\text{num}}(\bar{P}, P) - \psi_{t,h,\epsilon}(a; \bar{P}) R_2^{\text{den}}(\bar{P}, P) \bigg]
\end{align*}
We already confirmed in our proof of Theorem \ref{thm:numDenomEIFs} that $R_2^{\text{num}}(\bar{P}, P)$ and $R_2^{\text{den}}(\bar{P}, P)$ are second-order. Thus, at this point we have shown that $R_2(\bar{P}, P)$ is also second-order, since the first term above involves the product of errors in estimating $\psi_{h}(a; t)$ and $\psi^{\text{den}}_{h}(a; t)$.

\section{Details of Simulation Study} \label{s:appendixSimDetails}

In the simulation study in Section \ref{s:simulations}, we generated datasets where $A|X \sim N(m(X), 0.2^2)$ and $Y|X, A \sim N(\mu(X), 0.5^2)$. The functions $m(X)$ and $\mu(X)$ are visualized in Figure \ref{fig:simFunctions} and are defined as:
\begin{align*}
  m(X) &= \begin{cases}
    0.05 &\mbox{if $X < 0.25$} \\
    0.15 - 24(0.25 - X)(0.5 - X) &\mbox{if $0.25 \leq X < 0.5$} \\
    X &\mbox{if $X \geq 0.5$}
  \end{cases} \\
  \mu(X) &= \begin{cases}
    0.5 -4(X - 0.2)^2 &\mbox{if $X \leq 0.25$} \\
    0.25 + 2(X - 0.2)^2 &\mbox{if $0.25 < X \leq 0.75$} \\
    1.25 - X &\mbox{if $X > 0.75$}
  \end{cases}
\end{align*}
These were chosen as arbitrary functions that were somewhat complex, thereby motivating nonparametric models, and that were constrained between 0 and 1 for ease of interpretation.

Meanwhile, in the simulation study we considered the estimands $\psi_h(a)$, $\psi(a; t)$, and $\psi_h(a; t)$, shown in Table \ref{tab:ests}, for treatment values $a \in \{0, 0.05, \dots, 0.95, 1\}$. In the simulation study, we set the bandwidth $h = 0.1$ and the smoothing parameter $\epsilon = 0.01$. In Section \ref{ss:simFixedT} we set the trimming threshold to $t = 0.1$. The resulting estimands are visualized in Figure \ref{fig:simEstimands}. The non-trimmed estimand $\psi_h(a)$ is flat, because $\mathbb{E}[Y | X, A] = \mu(X)$ is not a function of $A$. However, the trimmed estimands $\psi(a; t)$ and $\psi_h(a; t)$ are not flat; this is because the non-trimmed population changes across $a$. That said, the discrepancy between these estimands is somewhat small, given the small $y$-axis scale in Figure \ref{fig:simEstimands}, although this discrepancy could have been made larger by changing the data-generating process. This would not invalidate our simulation results, but nonetheless researchers may find these trimmed estimands difficult to interpret, given that the non-trimmed population changes across $a$. Thus, in practice, we recommend assessing how the non-trimmed population changes across $a$, as we illustrated in our application (Section \ref{s:application}).

\begin{figure}
  \centering
  \includegraphics[scale=0.45]{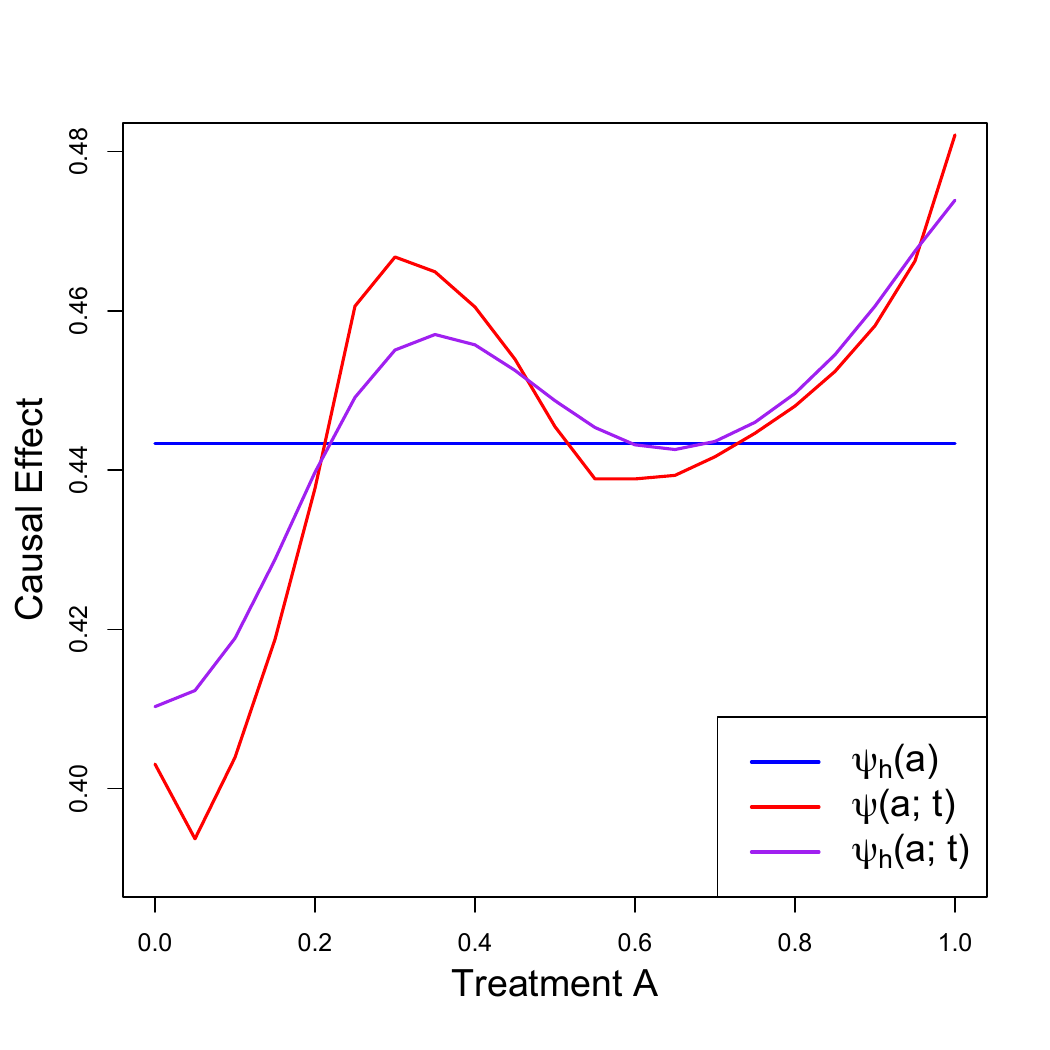}
  \caption{The SATE, TATE, and STATE for $a \in \{0, 0.05, \dots, 0.95, 1\}$ when $t = 0.1$, $h = 0.1$, and $\epsilon = 0.01$. To compute these estimands, we simulated $X \sim \text{Unif}(0,1)$ $n = 10^5$ times, computed the true $\pi(a|X)$ and $\mu(X,a)$, and empirically computed the SATE, TATE, and STATE. To evaluate the inner integrals in the SATE and STATE, we considered values $a_0 \in [-0.5, 1.5]$ in 0.05 increments, such that $\int \int K_h(a_0 - a) da_0 dP(x) \approx 1$ for all $a \in [0,1]$.}
  \label{fig:simEstimands}
\end{figure}

Meanwhile, in Section \ref{ss:simEstT} we considered simulation results when the trimming threshold is estimated. Specifically, we used the trimming estimator $\widehat{t}$ in (\ref{eqn:estimatedT}) for our estimator and the plug-in estimator $\widehat{t}_{\text{pi}} = \widehat{F}_a^{-1}(\gamma)$, where $\widehat{F}_a^{-1}(\gamma)$ is the inverse empirical CDF of $\widehat{\pi}(a|X)$, for the plug-in estimators, where the proportion of trimmed subjects was set at $\gamma = 0.2$. These estimators correspond to true thresholds $t_0$ and $t_0^{\text{pi}}$, respectively. In other words, $t_0$ is defined as the $t$ such that $\psi_h^{\text{den}}(a; t) = 1-\gamma$, and $t_0^{\text{pi}}$ is defined as the $t$ such that $\mathbb{E}[\mathbb{I}(\pi(a | X) > t)] = 1 - \gamma$. Figure \ref{fig:trueThresholds} displays $t_0$ and $t_0^{\text{pi}}$ for this simulation setup, which makes clear that the threshold changes across $a$ in this case. We see that $t_0^{\text{pi}}$ is smaller on the edges of the dose-response curve but higher in the middle of the curve; this is because $t_0$ corresponds to a quantile that is kernel-smoothed across $a$, and this smoothing moderates the magnitude of the threshold. Furthermore, these thresholds are smaller than $t = 0.1$ for extreme treatment values but larger for moderate values. Meanwhile, Figure \ref{fig:estimandsEstT} visualizes the corresponding TATE and STATE estimands, $\psi(a; t_0^{\text{pi}})$ and $\psi_h(a; t_0)$.

\begin{figure}
  \centering
  \begin{subfigure}[b]{0.475\textwidth} 
  \centering
  \includegraphics[scale=0.45]{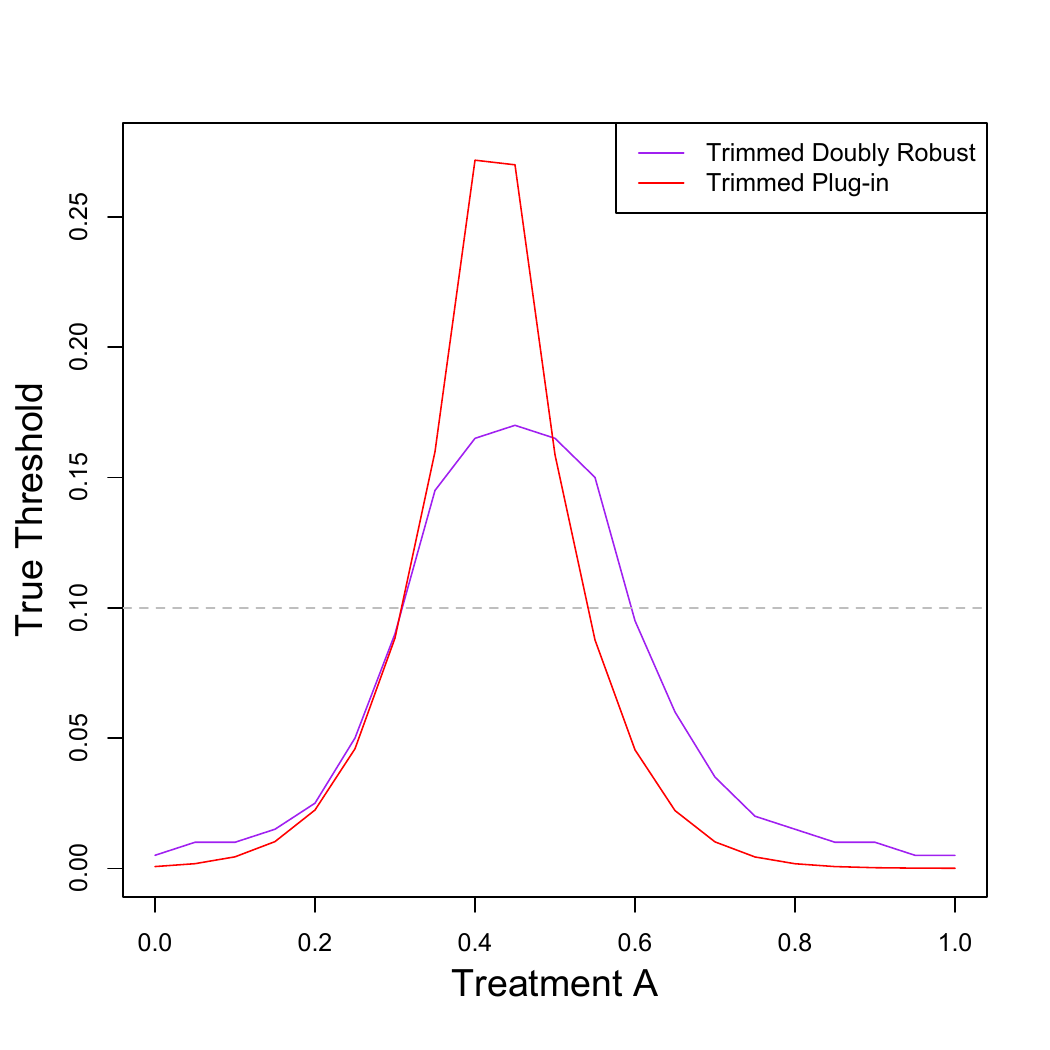}
  \caption{The true trimming thresholds $t_0^{\text{pi}}$ (red) and $t_0$ (purple).}
  \label{fig:trueThresholds}
  \end{subfigure}
  \begin{subfigure}[b]{0.475\textwidth} 
  \centering
  \includegraphics[scale=0.45]{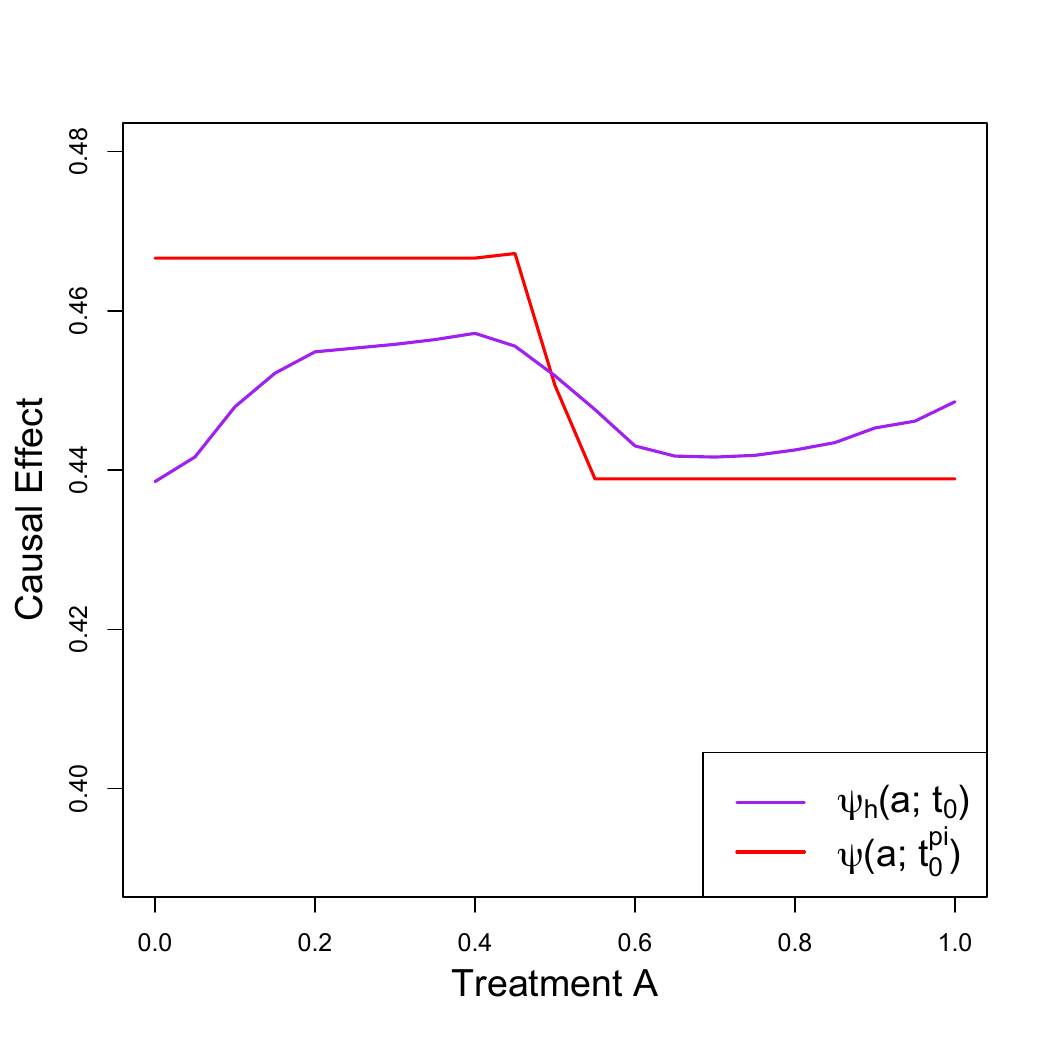}
  \caption{The TATE (red) and STATE (purple) estimands for thresholds $t_0^{\text{pi}}$ and $t_0$.}
  \label{fig:estimandsEstT}
  \end{subfigure}
  \caption{The true trimming thresholds (left) and their corresponding dose-response curves (right) when $\gamma = 0.2$.}
\end{figure}

\section{Smoothing Parameter Selection for Application} \label{as:smoothingParameters}

In Section \ref{s:application}, we selected the bandwidth $h$ using the risk-minimization approach from Section \ref{ss:bandwidth}. Specifically, we computed the estimated risk (\ref{eqn:estimatedRisk}) for $h \in \{0.05, 0.06, \dots, 1.99, 2\}$. The estimated risk across bandwidth values is shown in Figure \ref{fig:estimatedBandwidth}; the $h$ that minimizes this risk is $h = 0.92$.  Meanwhile, to choose the smoothing parameter $\epsilon$, we computed the estimated entropy (\ref{eqn:estimatedEntropy}) for $\epsilon = 10^{-c}$ where $c \in \{1, 1.2, \dots, 4.8, 5\}$. The resulting entropy across $c$ is shown in Figure \ref{fig:estimatedEntropy}. In the application, we chose $\epsilon = 10^{-2}$ because it results in an entropy close to 0.05.

\begin{figure}
\centering
\begin{subfigure}[b]{0.45\textwidth}
\centering
  \includegraphics[scale=0.4]{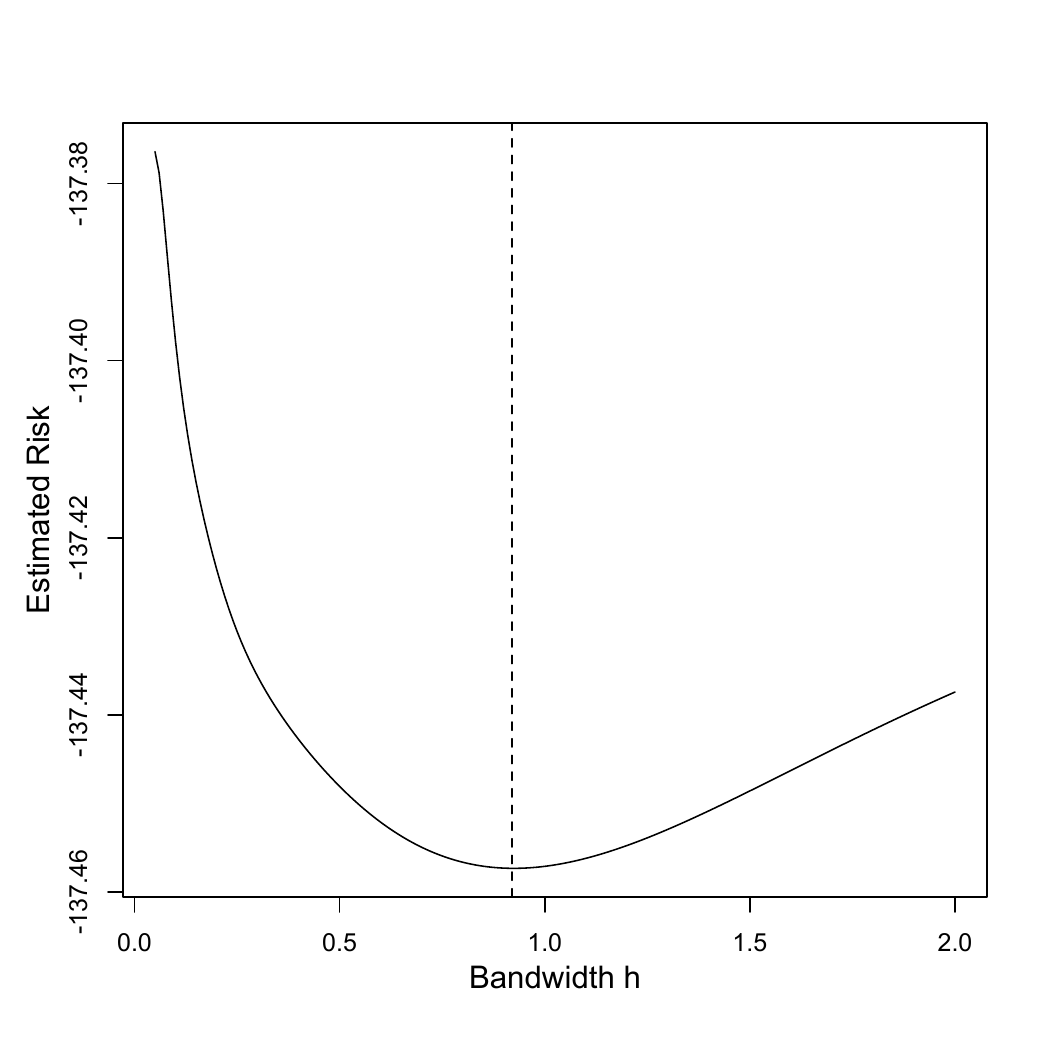}
  \caption{Estimated risk (\ref{eqn:estimatedRisk}) for $h \in \{0.05, 0.06, \dots, 1.99, 2\}$. Vertical line denotes $\widehat{h} = 0.92$.}
  \label{fig:estimatedBandwidth}
\end{subfigure}
\hspace{0.1in}
\begin{subfigure}[b]{0.45\textwidth}
\centering
  \includegraphics[scale=0.4]{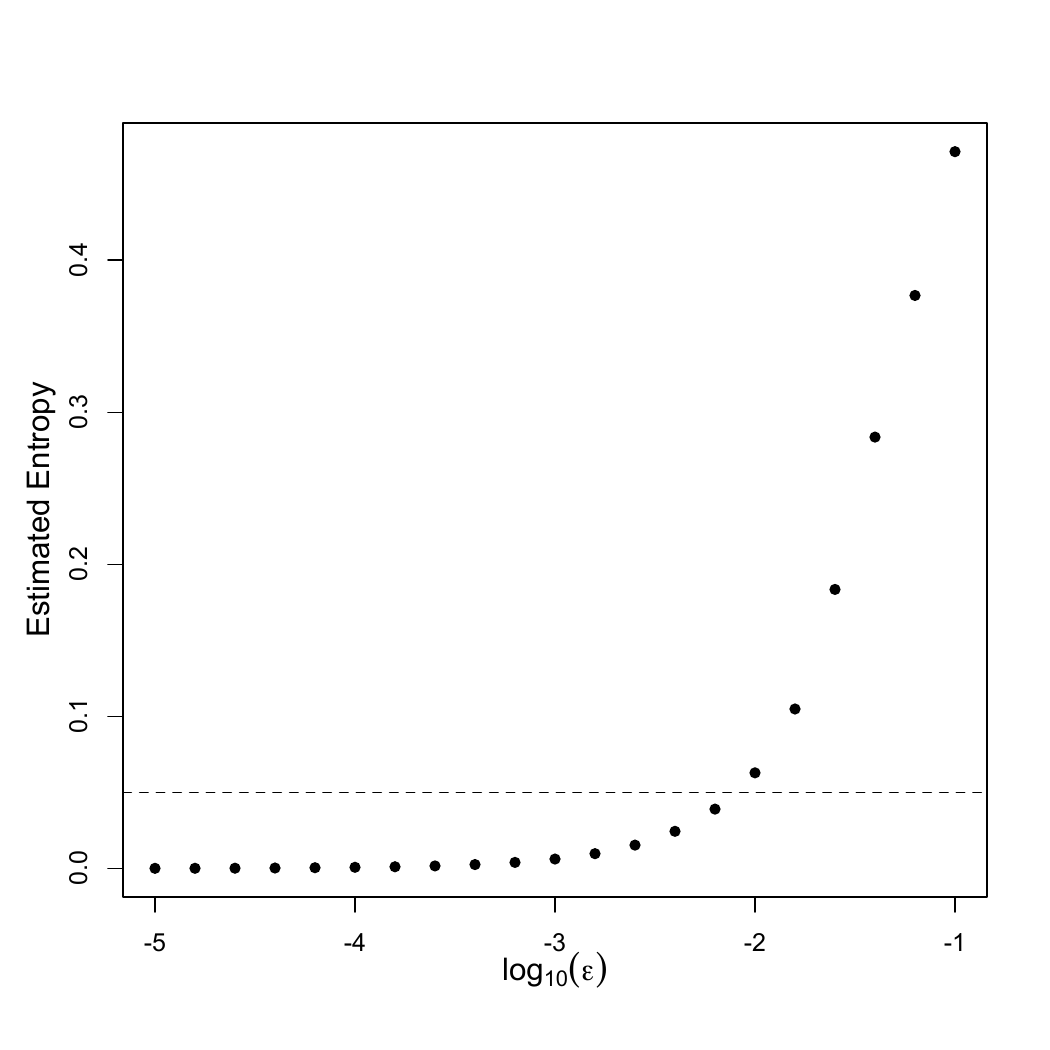}
  \caption{Estimated entropy (\ref{eqn:estimatedEntropy}) for $\epsilon = 10^{-c}$ where $c \in \{0.1, 0.12, \dots, 4.98, 5\}$. Horizontal line denotes 0.05.}
  \label{fig:estimatedEntropy}
\end{subfigure}
\caption{Selecting the bandwidth $h$ (left) and selecting smoothing parameter $\epsilon$ (right).}
\end{figure}

\section{Results for Discrete Treatments} \label{s:appendixBinaryTreatment}

\subsection{Estimands and Efficient Influence Functions}

In the main text, we considered the estimands $\psi_h^{\text{num}}(a; t)$ and $\psi_h^{\text{den}}(a; t)$, which incorporate kernel smoothing and a smoothed trimming indicator $S(\pi(a|x), t)$. Kernel smoothing is required to define an EIF in the case of a continuous treatment. In this section we instead consider the case where the treatment $A$ is discrete, such that $A$ only takes on a finite number of values. In this case, kernel smoothing is not needed to define estimands that have EIFs, although the smoothed trimming indicator $S(\pi(a|x), t)$ is still required. 

Consider the following estimands, for a given treatment value $a$:
  \begin{align*}
    \psi^{\text{num}}_{\epsilon}(a; t) = \int S(\pi(a|x), t) \mu(x,a) dP(x), \hspace{0.1in} \text{and} \hspace{0.1in} \psi^{\text{den}}_{\epsilon}(a; t) = \int S(\pi(a|x), t) dP(x),
  \end{align*}
  where $\epsilon$ is the smoothing parameter for $S(\pi(a|x), t)$. Following the same proof mechanics as that of Theorem \ref{thm:numDenomEIFs}, we have the following uncentered EIFs when $t$ is fixed.
  \begin{proposition}
  \label{prop:eifDiscreteTreatment}
    When the trimming threshold $t$ is fixed, the uncentered EIFs for $\psi^{\text{num}}_{\epsilon}(a; t)$ and $\psi^{\text{den}}_{\epsilon}(a; t)$ are
      \begin{align*}
    \varphi^{\text{num}}_{\epsilon}(a; t) &= \mathbb{I}(A = a) \frac{\{Y - \mu(X,A)\} S(\pi(A|X), t)}{\pi(A|X)} + S(\pi(a|X),t) \mu(X,a) \\ &+ \mathbb{I}(A = a) \mu(X,A) \frac{\partial S(\pi(A | X), t) }{\partial \pi} - \mu(X,a) \frac{\partial S(\pi(a | X), t) }{\partial \pi} \pi(a|X),
    \end{align*}
    and
    \begin{align*}
  \varphi^{\text{den}}_{\epsilon}(a; t) &= S(\pi(a | X), t)  + \mathbb{I}(A = a) \frac{\partial S(\pi(A | X), t) }{\partial \pi} - \frac{\partial S(\pi(a | X), t) }{\partial \pi} \pi(a|X)  .
  \end{align*}
\end{proposition}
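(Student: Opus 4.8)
The plan is to mirror the proof of Theorem \ref{thm:numDenomEIFs}, specialized to the case where no kernel smoothing is needed because $A$ is discrete. Concretely, observe that $\psi^{\text{num}}_{\epsilon}(a; t)$ and $\psi^{\text{den}}_{\epsilon}(a; t)$ are obtained from $\psi^{\text{num}}_{h}(a; t)$ and $\psi^{\text{den}}_{h}(a; t)$ by replacing the weighting $\int K_h(a_0-a)(\cdot)\,da_0$ with evaluation at $a_0 = a$ and $K_h(A-a)$ with $\mathbb{I}(A=a)$. Since the derivation underlying Theorem \ref{thm:numDenomEIFs} treats $K_h(\cdot)$ as a fixed constant and the inner integral as a (formal) sum, every manipulation carries over verbatim.

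First I would invoke the discretization trick (Section \ref{s:appendixEIFsHow}), momentarily treating $X$ as discrete (the treatment $A$ is already discrete), and write $\psi^{\text{den}}_{\epsilon}(a;t) = \sum_x S(\pi(a|x),t)\,p(x)$ and $\psi^{\text{num}}_{\epsilon}(a;t) = \sum_x S(\pi(a|x),t)\,\mu(x,a)\,p(x)$. I would then use the elementary influence functions $\mathbb{IF}\{p(x)\} = \mathbb{I}(X=x) - p(x)$, $\mathbb{IF}\{\pi(a|x)\} = \frac{\mathbb{I}(X=x)}{P(X=x)}\{\mathbb{I}(A=a) - \pi(a|x)\}$, and $\mathbb{IF}\{\mu(x,a)\} = \frac{\mathbb{I}(X=x,A=a)}{\pi(a|x)p(x)}\{Y-\mu(x,a)\}$, together with the chain rule $\mathbb{IF}\{S(\pi(a|x),t)\} = \frac{\partial S(\pi(a|x),t)}{\partial\pi}\mathbb{IF}\{\pi(a|x)\}$. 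Applying the product rule to the products $S\cdot p$ and $S\cdot\mu\cdot p$, plugging in these components, and collapsing the sums over $x$ back into expectations over $P$ produces the centered EIFs; adding back $\psi^{\text{num}}_{\epsilon}(a;t)$ and $\psi^{\text{den}}_{\epsilon}(a;t)$ gives the uncentered forms in the statement.

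Next I would verify that the proposed functions are mean-zero under $P$: the term $\mathbb{I}(A=a)\frac{\{Y-\mu(X,A)\}S(\pi(A|X),t)}{\pi(A|X)}$ vanishes by iterated expectations using $\mu(X,A) = \mathbb{E}[Y\mid X,A]$, while $\mathbb{E}_P\big[\mathbb{I}(A=a)\frac{\partial S(\pi(A|X),t)}{\partial\pi}\big] = \mathbb{E}_P\big[\frac{\partial S(\pi(a|X),t)}{\partial\pi}\pi(a|X)\big]$ cancels the subtracted term. Finally I would confirm the von Mises remainders are second order by repeating the Taylor-expansion argument of Lemmas \ref{lemma:secondOrderDenomFixedT} and \ref{lemma:secondOrderNumFixedT} — expanding $S(\pi(a|x),t)$ around $\bar{\pi}(a|x)$ to second order — now with the factor $\int K_h(a_0-a)\,da_0$ deleted. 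This shows $R_2^{\text{den}}$ is controlled by $\|\widehat{\pi}(a|x)-\pi(a|x)\|^2$ and $R_2^{\text{num}}$ by $\|\widehat{\pi}(a|x)-\pi(a|x)\|^2$ together with $\|\widehat{\pi}(a|x)-\pi(a|x)\|\cdot\|\widehat{\mu}(x,a)-\mu(x,a)\|$, confirming the candidates are genuine EIFs.

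Since this proposition is a direct specialization of the continuous-treatment results, there is no substantive obstacle. The only points requiring care are the usual ones: the ``tricks'' merely generate a candidate, so the mean-zero check and the second-order remainder verification are what actually establish the claim; and, as in the continuous case, one implicitly needs the quantity $S(\pi(a|x),t)/\pi(a|x)$ appearing in the inverse-propensity term to remain well-behaved as $\pi \to 0$, which the smoothed indicator $S(\pi(a|x),t)$ (e.g. $\Phi_\epsilon\{\pi(a|x)-t\}$) provides.
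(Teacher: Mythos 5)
Your proposal is correct and follows essentially the same route as the paper, which proves this proposition simply by noting it follows "the same proof mechanics" as Theorem \ref{thm:numDenomEIFs}: the discretization trick with the elementary influence functions for $\mu(x,a)$, $\pi(a|x)$, and $p(x)$, the product and chain rules, the mean-zero check, and the Taylor-expansion remainder analysis of Lemmas \ref{lemma:secondOrderDenomFixedT} and \ref{lemma:secondOrderNumFixedT} with the kernel integral replaced by evaluation at $a_0=a$ and $K_h(A-a)$ by $\mathbb{I}(A=a)$. No gaps.
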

The corresponding one-step estimators based on the above proposition are:
\begin{align*}
  \widehat{\psi}^{\text{num}}_{\epsilon}(a; t) = \mathbb{P}_n\{ \widehat{\varphi}^{\text{num}}_{\epsilon}(a; t) \}, \hspace{0.1in} \widehat{\psi}^{\text{den}}_{\epsilon}(a; t) = \mathbb{P}_n\{ \widehat{\varphi}^{\text{den}}_{\epsilon}(a; t) \},
\end{align*}
where $\widehat{\varphi}^{\text{num}}_{\epsilon}(a; t)$ and $\widehat{\varphi}^{\text{den}}_{\epsilon}(a; t)$ are equal to $\varphi^{\text{num}}_{\epsilon}(a; t)$ and $\varphi^{\text{den}}_{\epsilon}(a; t)$, but with $\pi(a|x)$ and $\mu(x,a)$ replaced with estimators $\widehat{\pi}(a|x)$ and $\widehat{\mu}(x,a)$. Then, one can compute the ratio $\widehat{\psi}^{\text{num}}_{\epsilon}(a; t)/\widehat{\psi}^{\text{den}}_{\epsilon}(a; t)$ to estimate the trimmed causal effect $\psi^{\text{num}}_{\epsilon}(a; t)/\psi^{\text{den}}_{\epsilon}(a;t)$.

In the specific case of a binary treatment, researchers are typically interested in the contrast $a = 1$ versus $a = 0$. Using the above results in this case, the estimand would be:
\begin{align*}
  \frac{\psi^{\text{num}}_{\epsilon}(1; t)}{\psi^{\text{den}}_{\epsilon}(1;t)} - \frac{\psi^{\text{num}}_{\epsilon}(0; t)}{\psi^{\text{den}}_{\epsilon}(0;t)} = \frac{\int S(\pi(1|x), t) \mu(x,1) dP(x)}{\int S(\pi(1|x), t) dP(x)} - \frac{\int S(\pi(0|x), t) \mu(x,0) dP(x)}{\int S(\pi(0|x), t) dP(x)}.
\end{align*}
The above estimand trims treatment and control separately, such that the non-trimmed populations for treatment and control may differ.  Instead, researchers may wish to construct a single non-trimmed population. One can consider the following trimmed estimand:
\begin{align}
  \psi_{\epsilon}(t) = \frac{\psi^{\text{num}}_{\epsilon}(t)}{\psi^{\text{den}}_{\epsilon}(t)} = \frac{\int S(\pi(1|x), t) \{ \mu(x, 1) - \mu(x, 0)\} dP(x)}{\int S(\pi(1|x), t) dP(x)}. \label{eqn:binaryTrimmedEstimand}
\end{align}
Here, $S(\pi(1|x), t)$ corresponds to a smoothed analog of the trimming indicator $\mathbb{I}(t < \pi(1|x) < 1 - t)$. For example, \cite{yang2018asymptotic} defined $S(\pi(1|x), t)$ as:
\begin{align*}
  S(\pi(1|x), t) = \Phi_{\epsilon}\{\pi(1|x) - t\} \Phi_{\epsilon}\{t - \pi(1|x)\}.
\end{align*}
The following proposition establishes the EIFs for $\psi^{\text{num}}_{\epsilon}(t)$ and $\psi^{\text{den}}_{\epsilon}(t)$ in (\ref{eqn:binaryTrimmedEstimand}).

\begin{proposition}
\label{prop:eifBinaryTreatment}
  When the trimming threshold $t$ is fixed, the uncentered EIFs of $\psi^{\text{num}}_{\epsilon}(t)$ and $\psi^{\text{den}}_{\epsilon}(t)$ are
  \begin{align*}
    \varphi^{\text{num}}_{\epsilon}(t) &= S(\pi(1|X), t) \{ \mu(X,1) - \mu(X,0) \} + \frac{\partial S(\pi(1|X), t)}{\partial \pi} (A - \pi(1|X)) \{ \mu(X, 1) - \mu(X, 0)\} \\
    &+ S(\pi(1|X), t) \left\{ \frac{A (Y - \mu(X,1))}{\pi(1|X)} - \frac{(1-A)(Y - \mu(X,0))}{(1 - \pi(1|X))} \right\},
    \end{align*}
    and
    \begin{align*}
    \varphi^{\text{den}}_{\epsilon}(t) &= S(\pi(1|X), t) + \frac{\partial S(\pi(1|X), t)}{\partial \pi} (A - \pi(1|X)).
  \end{align*}
\end{proposition}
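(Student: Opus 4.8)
The plan is to mirror the derivation of Theorem~\ref{thm:numDenomEIFs}: first use the ``tricks'' of Section~\ref{s:appendixEIFsHow} (discretization, product rule, chain rule) to \emph{propose} the two uncentered influence functions, and then verify---exactly as in Section~\ref{s:remainderTermsFixedT}---that each candidate has mean zero under $P$ and that the associated von Mises remainder is second order, so that the candidates are genuinely the EIFs. Since $A$ is discrete here no kernel smoothing is needed, so $K_h(A-a)$ is everywhere replaced by the indicator $\mathbb{I}(A=a)$, with $\pi(0|x) = 1 - \pi(1|x)$. The binary estimand in (\ref{eqn:binaryTrimmedEstimand}) is not simply a difference of two copies of the results of Theorem~\ref{thm:numDenomEIFs}, because it uses the single common weight $S(\pi(1|x),t)$ for both arms and a contrast $\mu(x,1)-\mu(x,0)$ of regressions, so the derivation is carried out afresh with these tools.

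For $\psi^{\text{den}}_{\epsilon}(t) = \sum_x S(\pi(1|x),t)\,p(x)$ I would apply the product rule to the factors $S(\pi(1|x),t)$ and $p(x)$. Using $\mathbb{IF}\{p(x)\} = \mathbb{I}(X=x) - p(x)$ and, by the chain rule together with $\mathbb{IF}\{\pi(1|x)\} = \frac{\mathbb{I}(X=x)}{p(x)}\{A - \pi(1|x)\}$, the identity $\mathbb{IF}\{S(\pi(1|x),t)\} = \frac{\partial S(\pi(1|x),t)}{\partial\pi}\,\frac{\mathbb{I}(X=x)}{p(x)}\{A - \pi(1|x)\}$, then collapsing the sum over $x$ back into the observed $X$, gives $\mathbb{IF}\{\psi^{\text{den}}_{\epsilon}(t)\} = \frac{\partial S(\pi(1|X),t)}{\partial\pi}\{A - \pi(1|X)\} + S(\pi(1|X),t) - \psi^{\text{den}}_{\epsilon}(t)$; adding back $\psi^{\text{den}}_{\epsilon}(t)$ yields the claimed $\varphi^{\text{den}}_{\epsilon}(t)$.

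For $\psi^{\text{num}}_{\epsilon}(t) = \sum_x S(\pi(1|x),t)\{\mu(x,1)-\mu(x,0)\}\,p(x)$ I would use the three-factor product rule. Two of the three resulting terms reuse the denominator computation (the $\mathbb{IF}\{S\}$ and $\mathbb{IF}\{p\}$ pieces, now carrying the extra factor $\mu(x,1)-\mu(x,0)$); the genuinely new ingredient is $\mathbb{IF}\{\mu(x,1)-\mu(x,0)\} = \frac{\mathbb{I}(X=x,A=1)}{\pi(1|x)p(x)}\{Y-\mu(x,1)\} - \frac{\mathbb{I}(X=x,A=0)}{(1-\pi(1|x))p(x)}\{Y-\mu(x,0)\}$. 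Multiplying this by $S(\pi(1|x),t)\,p(x)$ and collapsing the sum produces the two-arm AIPW correction $S(\pi(1|X),t)\{\, A(Y-\mu(X,1))/\pi(1|X) - (1-A)(Y-\mu(X,0))/(1-\pi(1|X))\,\}$. Collecting the three contributions and adding $\psi^{\text{num}}_{\epsilon}(t)$ gives the stated $\varphi^{\text{num}}_{\epsilon}(t)$.

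The last step is verification. Mean zero follows by iterated expectations as in Section~\ref{s:remainderTermsFixedT}: conditionally on $X$ one has $\mathbb{E}[A - \pi(1|X)\mid X]=0$, and both AIPW summands vanish because $\mathbb{E}[Y\mid X,A=a] = \mu(X,a)$. For the von Mises expansion I would Taylor-expand $S(\pi(1|x),t)$ about $\bar\pi(1|x)$ exactly as in Lemmas~\ref{lemma:secondOrderDenomFixedT} and~\ref{lemma:secondOrderNumFixedT} and combine it with the usual mixed-bias algebra for the AIPW term; the remainder then collapses to terms governed by $\|\widehat\pi(1|X)-\pi(1|X)\|^2$ and $\|\widehat\pi(1|X)-\pi(1|X)\|\cdot\|\widehat\mu(X,a)-\mu(X,a)\|$. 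I expect this verification to be the main obstacle: one has to check that, after the Taylor step, the AIPW double-robustness cancellation still goes through cleanly even though the single weight $S(\pi(1|x),t)$ couples the two arms through both $\pi(1|x)$ and $1-\pi(1|x)$, so that no un-shrunk inverse weight survives in the numerator remainder. This is not a genuine difficulty, because on the trimmed set $S(\pi(1|x),t)$ is negligible exactly where $\pi(1|x)$ or $1-\pi(1|x)$ is small, so $S/\pi$ and $S/(1-\pi)$ stay bounded; the denominator remainder is then the special case with the $\mu(x,1)-\mu(x,0)$ factor removed.
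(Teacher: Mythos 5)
Your derivation is correct and follows essentially the same route as the paper: the discretization trick and product/chain rules with the same component influence functions for $p(x)$, $S(\pi(1|x),t)$, and $\mu(x,1)-\mu(x,0)$, collapsed over $x$ to give exactly the stated $\varphi^{\text{num}}_{\epsilon}(t)$ and $\varphi^{\text{den}}_{\epsilon}(t)$. The additional mean-zero and von Mises remainder verification you sketch goes beyond what the paper's proof of this proposition records, but it is consistent with the analysis the paper carries out for the continuous-treatment case and raises no issues.
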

\begin{proof}
  First, recall that we have the following influence functions when $X$ is discrete:
  \begin{align*}
    \mathbb{IF}\{ \mu(x, 1) - \mu(x, 0) \} &= \frac{A \mathbb{I}(X=x)}{\pi(1|x)p(x)}(Y - \mu(x,1)) - \frac{(1-A)\mathbb{I}(X=x)}{(1 - \pi(1|x))p(x)}(Y - \mu(x,0)) \\
    \mathbb{IF}\{ S(\pi(1|x), t) \} &= \frac{\partial S(\pi(1|x), t)}{\partial \pi} \frac{\mathbb{I}(X = x)}{P(X=x)}(A - \pi(1|x)) \\
   \mathbb{IF}\{p(x)\} &= \mathbb{I}(X = x) - p(x)
  \end{align*}
  Thus, we have, by the discretization trick and product rule:
  \begin{align*}
    \mathbb{IF}\{ \psi^{\text{den}}_{\epsilon}(t) \} &= \sum_x \mathbb{IF}\{ S(\pi(1|x), t) \} p(x) + S(\pi(1|x), t) \mathbb{IF}\{p(x)\} \\
    &= \frac{\partial S(\pi(1|X), t)}{\partial \pi} (A - \pi(1|X)) + S(\pi(1|X), t) - \psi^{\text{den}}_{\epsilon}(t)
  \end{align*}
  Similarly, we have:
  \begin{align*}
    \mathbb{IF}\{ \psi^{\text{num}}_{\epsilon}(t) \} &= \sum_x \mathbb{IF}\{ S(\pi(1|x), t) \} \{ \mu(x, 1) - \mu(x, 0)\} p(x) \\ &+ S(\pi(1|x), t) \mathbb{IF}\{ \mu(x, 1) - \mu(x, 0)\} p(x) \\ &+ S(\pi(1|x), t) \{ \mu(x, 1) - \mu(x, 0)\} \mathbb{IF}\{p(x)\} \\ \\
    &= \frac{\partial S(\pi(1|X), t)}{\partial \pi} (A - \pi(1|X)) \{ \mu(X, 1) - \mu(X, 0)\} \\
    &+ S(\pi(1|X), t) \left\{ \frac{A (Y - \mu(X,1))}{\pi(1|X)} - \frac{(1-A)(Y - \mu(X,0))}{(1 - \pi(1|X))} \right\} \\
    &+ S(\pi(1|X), t) \{ \mu(X,1) - \mu(X,0) \} - \psi^{\text{num}}_{\epsilon}(t)
  \end{align*}
  This completes the proof.
\end{proof}
The corresponding estimator for $\psi_{\epsilon}(t)$ in (\ref{eqn:binaryTrimmedEstimand}) is $\widehat{\psi}_{\epsilon}(t) = \widehat{\psi}^{\text{num}}_{\epsilon}(t)/\widehat{\psi}^{\text{den}}_{\epsilon}(t)$, where
\begin{align*}
  \widehat{\psi}^{\text{num}}_{\epsilon}(t) = \mathbb{P}_n\{ \widehat{\varphi}^{\text{num}}_{\epsilon}(t) \}, \hspace{0.1in} \widehat{\psi}^{\text{den}}_{\epsilon}(t) = \mathbb{P}_n\{ \widehat{\varphi}^{\text{den}}_{\epsilon}(t) \},
\end{align*}
where $\widehat{\varphi}^{\text{num}}_{\epsilon}(t)$ and $\widehat{\varphi}^{\text{den}}_{\epsilon}(t)$ are equal to $\varphi^{\text{num}}_{\epsilon}(t)$ and $\varphi^{\text{den}}_{\epsilon}(t)$, but with $\pi(1|x)$, $\mu(x,1)$, and $\mu(x,0)$ replaced with estimators $\widehat{\pi}(1|x)$, $\widehat{\mu}(x,1)$, and $\widehat{\mu}(x,0)$.

\subsection{Simulation Results for Binary Treatments}

Similar to the simulation study in the main text, we consider an illustrative example with a single covariate $X$ that ranges from 0 to 1 and a continuous outcome $Y$. However, in this example, the treatment $A$ is binary. For simplicity, we focus on estimating the average treatment potential outcome $\psi(1) = \mathbb{E}[Y(1)]$, which is identified as $\psi(1) = \mathbb{E}[\mu(X,1)]$ when consistency, unconfoundedness, and positivity hold.

The propensity score $\pi(1|X) = P(A = 1 | X)$ and outcome regression $\mu(X,1) = \mathbb{E}[Y | X, A = 1]$ were specified as $m(X)$ and $\mu(X)$, respectively, both visualized in Figure \ref{fig:simFunctions} (see Section \ref{s:appendixSimDetails} for the exact specifications for these functions). The propensity score is small for $X < 0.25$, such that $\psi(1)$ is difficult to estimate. The trimmed estimands are defined as:
\begin{align*}
  \psi(1; t) = \frac{\mathbb{E}[\mu(X,1) \mathbb{I}\{ \pi(1|X) \geq t \}]}{\mathbb{E}[\mathbb{I}\{ \pi(1|X) \geq t \}]}, \hspace{0.1in} \text{and }
  \psi_{\epsilon}(1; t) = \frac{\mathbb{E}[\mu(X,1)S(\pi(1|X), t)]}{\mathbb{E}[S(\pi(1|X), t)]},
\end{align*}
where $S(\pi(1|X),t) = \Phi_{\epsilon}\{ \pi(1|X) - t\}$ is a smoothed version of the trimming indicator $\mathbb{I}\{ \pi(1|X) \geq t \}$. In this section, we call $\psi(1; t)$ the trimmed average treatment effect (TATE) and $\psi_{\epsilon}(1; t)$ the smoothed TATE (STATE). The TATE and STATE will be close when $\epsilon$ is small; for simplicity we fix $\epsilon = 0.01$.

Similar to the simulation study in the main text, we generated 1000 datasets containing $(X, A, Y)$, where $X \sim \text{Unif}(0, 1)$, $A \sim \text{Bern}(\pi(1|X))$, and $Y \sim N(\mu(X,1), 0.5^2)$. Each dataset contained $n = 1000$ subjects. For each dataset we simulated estimators as
\begin{align*}
  \widehat{\pi}(1|X) &\sim \pi(1 | X) + \text{expit}[ \text{logit}\{\pi(1 | X)\} + N(n^{-\alpha}, n^{-2 \alpha})] \\
  \widehat{\mu}(X, 1) &\sim \mu(X, 1) + N(n^{-\alpha}, n^{-2 \alpha}),
\end{align*}
such that the root mean squared error (RMSE) of $\widehat{\pi}(1 | X)$ and $\widehat{\mu}(X, 1)$ are $O_P(n^{-\alpha})$, and thus we can control the estimators' convergence rate via the rate parameter $\alpha$. We considered convergence rates $\alpha \in \{0.1, 0.2, \dots, 0.5\}$. As an example, Figure \ref{fig:binaryPSHat} displays $\widehat{\pi}(1 | X)$ for the first simulated dataset when $\alpha = 0.1$.

\begin{figure}
\centering
    \includegraphics[scale=0.5]{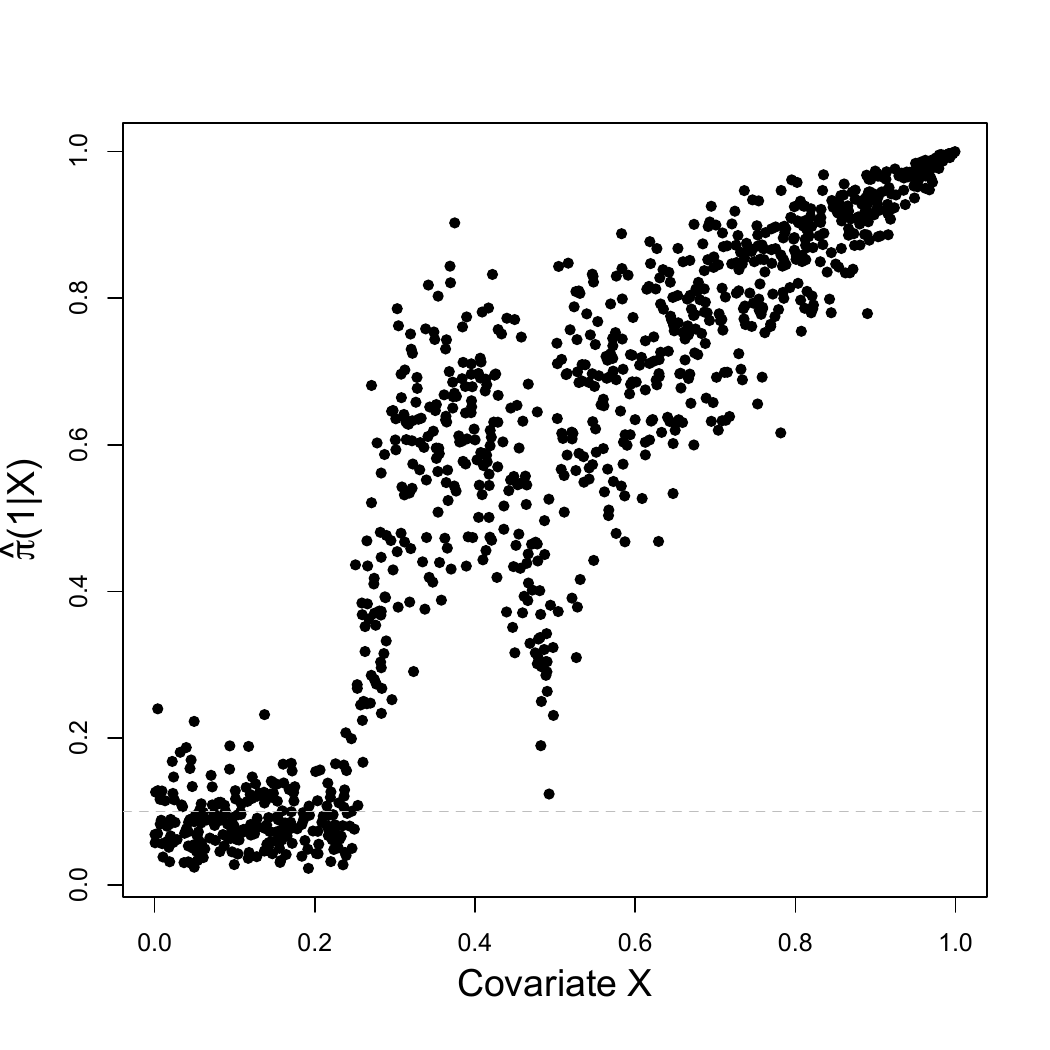}
    \caption{Example of $\widehat{\pi}(1|X)$ for the first simulated dataset when convergence rate $\alpha = 0.1$. Horizontal dashed line denotes $t = 0.1$.}
    \label{fig:binaryPSHat}
\end{figure}

Once $\widehat{\pi}(1 | X)$ and $\widehat{\mu}(X, 1)$ are generated, we computed the following four estimators:
\begin{enumerate}
  \item \textbf{Doubly Robust}: $\widehat{\psi}(1) = \mathbb{P}_n\{ \widehat{\varphi}(1)\}$, where $\widehat{\varphi}(1) = \widehat{\mu}(X,1) + \frac{A\{Y - \widehat{\mu}(X,1)\}}{\widehat{\pi}(1|X)}$ are the estimated influence function values for the typical doubly robust estimator. This targets the ATE $\psi(1)$.
  \item \textbf{Trimmed Plug-in}: $\widehat{\psi}(1; t) = \frac{\mathbb{P}_n[\widehat{\mu}(X,1) \mathbb{I}\{ \widehat{\pi}(1|X) \geq t \}]}{\mathbb{P}_n[\mathbb{I}\{ \widehat{\pi}(1|X) \geq t \}]}$. This targets the TATE and is based on the identification result for $\psi(1; t)$.
  \item \textbf{EIF-based Trimmed Plug-in}: $\widehat{\psi}^{\text{alt}}(1; t) = \frac{\mathbb{P}_n[\widehat{\varphi}(1) \mathbb{I}\{ \widehat{\pi}(1|X) \geq t \}]}{\mathbb{P}_n[\mathbb{I}\{ \widehat{\pi}(1|X) \geq t \}]}$. This also targets the TATE and averages the typical doubly robust estimator on the trimmed sample.
  \item \textbf{Trimmed Doubly Robust}: $\widehat{\psi}_{\epsilon}(1; t) = \frac{\widehat{\psi}^{\text{num}}_{\epsilon}(1; t)}{\widehat{\psi}^{\text{den}}_{\epsilon}(1; t)} = \frac{\mathbb{P}_n\{ \widehat{\varphi}^{\text{num}}_{\epsilon}(1; t) \} }{\mathbb{P}_n\{ \widehat{\varphi}^{\text{den}}_{\epsilon}(1; t) \} }$, where $\varphi^{\text{num}}_{\epsilon}(1; t)$ and $\varphi^{\text{den}}_{\epsilon}(1; t)$ are defined in Proposition \ref{prop:eifDiscreteTreatment} and represent the EIFs for the numerator and denominator of $\psi_{\epsilon}(1; t)$. This targets the STATE.
\end{enumerate}
Figure \ref{fig:binaryPerformance} displays the estimators' performance in terms of RMSE and 95\% confidence interval coverage when $t = 0.1$ or when it is estimated as the $\gamma = 0.2$ quantile of the propensity score, similar to what was studied in the main text. Note that RMSE and coverage were computed according to each estimator's respective estimand. For example, our estimator targets $\psi_{\epsilon}(1; t)$ when $t = 0.1$ and $\psi_{\epsilon}(1; t^0)$ when $t$ is estimated, where $t_0$ is the threshold such that $\psi_{\epsilon}^{\text{den}}(1; t_0) = 1 - \gamma = 0.8$. The results are very similar to the results in the main text simulation study. In terms of RMSE, the trimmed plug-in estimator $\widehat{\psi}(1;t)$ performs poorly when the convergence rate is slow, whereas our trimmed doubly robust estimator $\widehat{\psi}_{\epsilon}(1; t)$ and the EIF-based trimmed plug-in estimator $\widehat{\psi}^{\text{alt}}(1; t)$ outperform the typical doubly robust estimator $\widehat{\psi}(1)$. Meanwhile, in terms of coverage, our estimator performs slightly better than the EIF-based trimmed plug-in estimator, although both approach the nominal level as the convergence rate increases.

\begin{figure}
\centering
  \begin{subfigure}[b]{0.49\textwidth}
  \centering
    \includegraphics[scale=0.45]{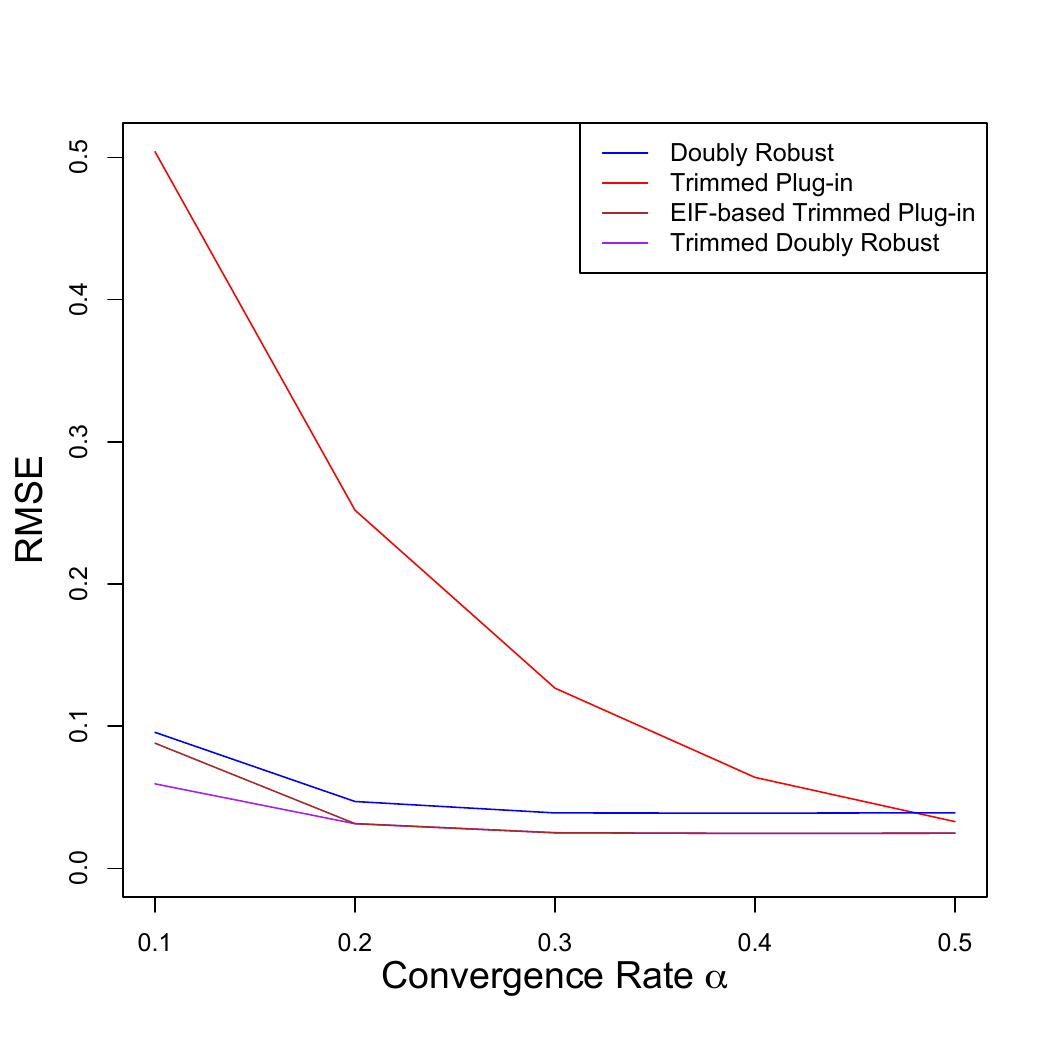}
    \caption{Estimators' RMSE when $t = 0.1$.}
    \label{fig:binaryEstFixedT}
  \end{subfigure}
  \begin{subfigure}[b]{0.49\textwidth}
  \centering
    \includegraphics[scale=0.45]{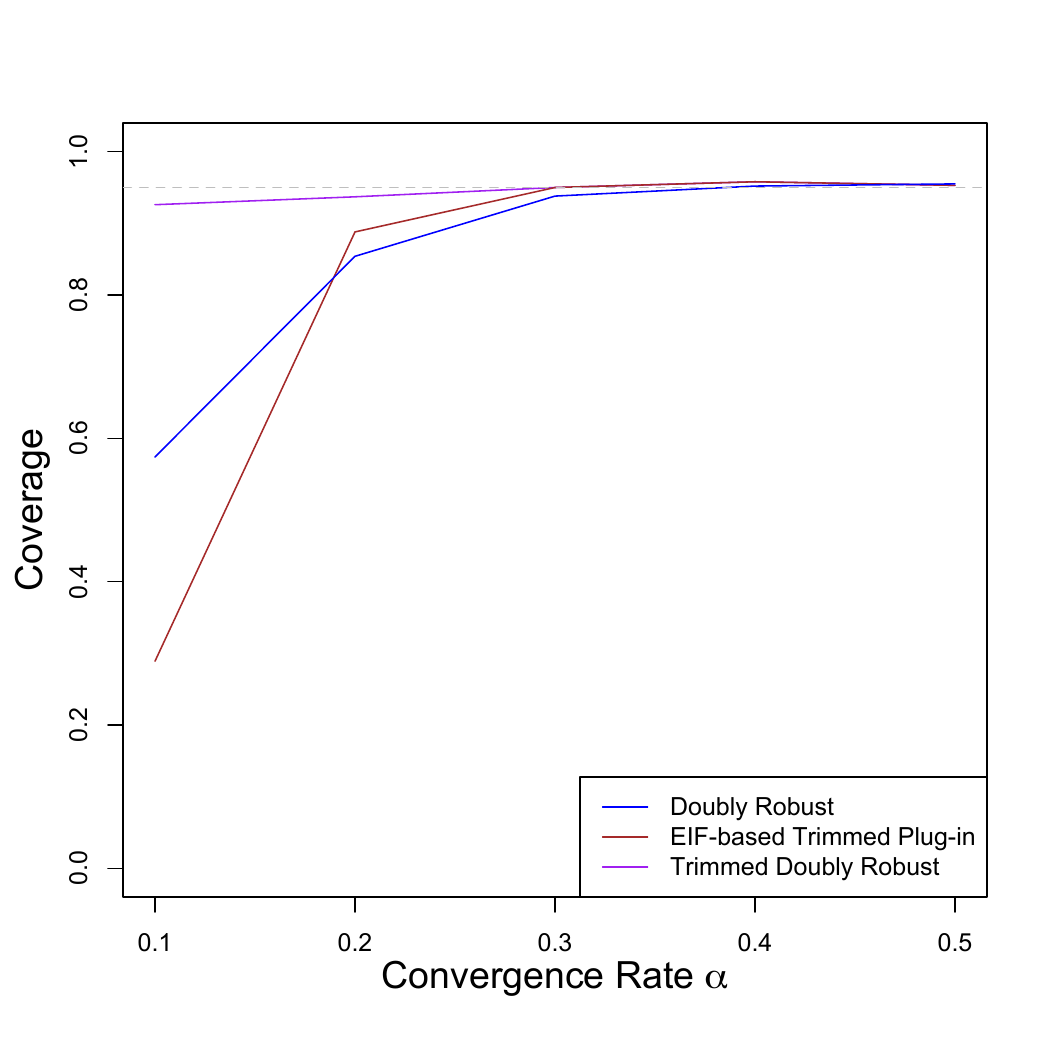}
    \caption{Estimators' coverage when $t = 0.1$.}
    \label{fig:binaryCIFixedT}
  \end{subfigure}
  \begin{subfigure}[b]{0.49\textwidth}
  \centering
    \includegraphics[scale=0.45]{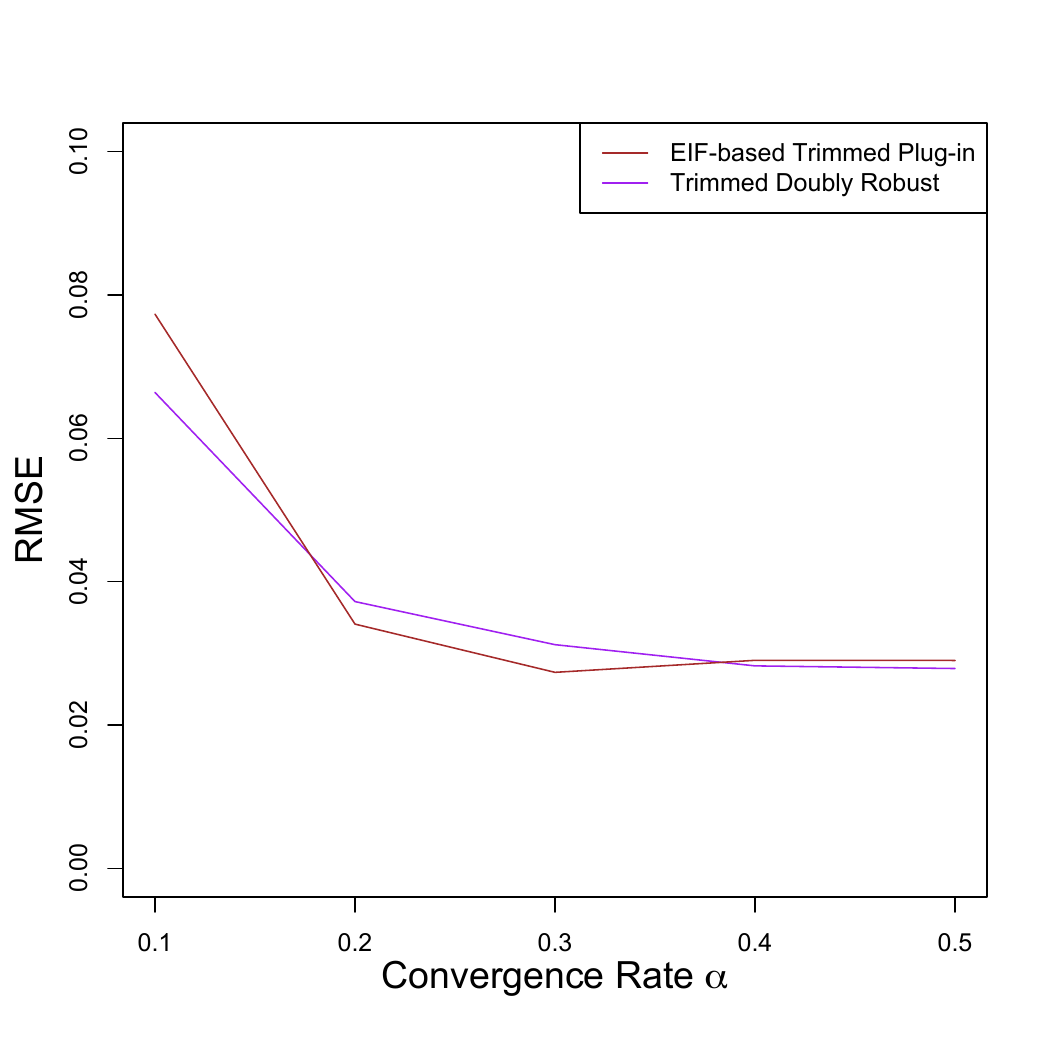}
    \caption{Estimators' RMSE when $t$ is estimated.}
    \label{fig:binaryEstEstT}
  \end{subfigure}
  \begin{subfigure}[b]{0.49\textwidth}
  \centering
    \includegraphics[scale=0.45]{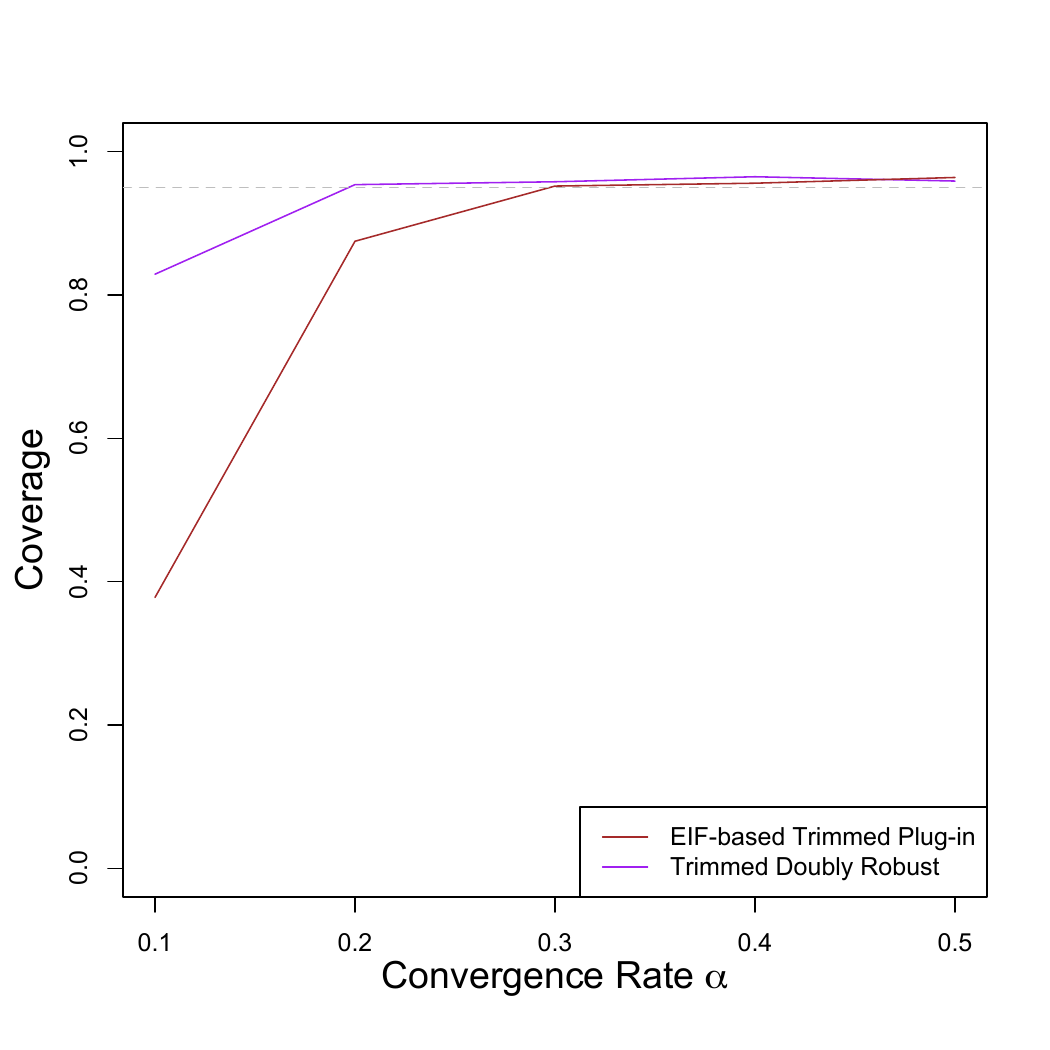}
    \caption{Estimators' coverage when $t$ is estimated.}
    \label{fig:binaryCIEstT}
  \end{subfigure}
  \caption{RMSE (left) and 95\% confidence interval coverage (right) of the estimators when the threshold is fixed at $t = 0.1$ (top) or estimated as the $\gamma = 0.2$ quantile of $\pi(1|X)$ (bottom). Results displayed across different convergence rates $\alpha$ for $\widehat{\pi}(1|X)$ and $\widehat{\mu}(1,X)$.}
  \label{fig:binaryPerformance}
\end{figure}

The fact that $\widehat{\psi}^{\text{alt}}(1; t)$ performs similarly to our estimator may at first seem surprising, because this estimator uses a plug-in estimator for the denominator of the TATE, $\mathbb{E}[\mathbb{I}\{ \pi(1|X) \geq t \}]$, and thus we may expect this estimator to inherit the convergence rate of $\widehat{\pi}(1|X)$. However, even when $\pi(1|X)$ is estimated poorly, the trimming indicator $\mathbb{I}\{ \pi(1|X) \geq t\}$ may nonetheless be estimated well. Reconsider Figure \ref{fig:binaryPSHat}, which displays an example of $\widehat{\pi}(1|X)$ when $\alpha = 0.1$. We see that $\widehat{\pi}(1|X)$ is closer to $\pi(1|X)$ when $\pi(1|X)$ is small or large; this is a by-product of constraining $\widehat{\pi}(1|X)$ to be between 0 and 1 to be a proper propensity score estimator. As a result, even though $\widehat{\pi}(1|X)$ is estimated poorly, the set of trimmed subjects is close to correct; in other words, $\mathbb{I}\{ \pi(1|X) \geq t\}$ is still well-estimated.

Lastly, we can note that, in this simulation setup, the typical doubly robust estimator does not degrade as severely as in the simulation setup in the main text. This is likely because the propensity scores are less severe, compared to the simulation study in the main text. As shown in Figure \ref{fig:binaryPSHat}, only some subjects' propensity scores were close to zero in this simulation setup, whereas when the treatment was continuous, subjects' propensity scores were close to zero for many treatment values. In other words, positivity violations at $A = 1$ in this simulation study are less severe than positivity violations across treatment values in the simulation study in the main text. This is also likely why the EIF-based trimmed plug-in estimator performed similarly to our estimator regardless of whether the trimming threshold was fixed or estimated in this simulation study, whereas it could perform worse than our estimator in the continuous case. This illustrates that positivity violations can be a greater concern when the treatment is continuous.

\end{document}